\definecolor{myurlcolor}{rgb}{0,0,0.4}
\definecolor{mycitecolor}{rgb}{0,0.5,0}
\definecolor{myrefcolor}{rgb}{0.5,0,0}
\newtheorem{theorem}{Theorem}[section]
\newtheorem{remark}[theorem]{Remark}
\newtheorem{proposition}[theorem]{Proposition}
\newtheorem{definition}[theorem]{Definition}
\newtheorem{example}[theorem]{Example}
\newtheorem*{proof*}{Proof}
\newcommand{\be}{\begin{equation}}
\newcommand{\ee}{\end{equation}}
\newcommand{\bea}{\begin{eqnarray}}
\newcommand{\eea}{\end{eqnarray}}
\newcommand{\vsp}{\vspace{0.4cm}}
\newcommand{\ac}{\mathscr{S}}
\newcommand{\m}{\mathscr{M}}
\newcommand{\mm}{\mathbb{M}}
\newcommand{\pe}{\mathcal{P}(\mathbb{E})}
\newcommand{\fpe}{\mathcal{F}_{\mathcal{P}(\mathbb{E})}}
\newcommand{\pssos}{\Pi^\star_\Sigma \Omega^\Sigma}
\newcommand{\os}{\Omega^\Sigma}
\newcommand{\ps}{\Pi_\Sigma}
\newcommand{\pss}{\Pi^\star_\Sigma}
\newcommand{\lag}{\mathscr{L}}
\newcommand{\elag}{\mathcal{E}\mathscr{L}}
\newcommand{\dd}{{\rm d}}
\newcommand{\de}{\partial}
\newcommand{\nn}{\nonumber}
\title{The geometry of the solution space of first order Hamiltonian field theories I: from particle dynamics to free Electrodynamics}
\author{F. M. Ciaglia$^{2,6}$ \href{https://orcid.org/0000-0002-8987-1181}{\includegraphics[scale=0.7]{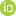}}, F. Di Cosmo$^{1,2,7}$ \href{https://orcid.org/0000-0003-0256-5913}{\includegraphics[scale=0.7]{ORCID.png}}, A. Ibort$^{1,2,8}$ \href{https://orcid.org/0000-0002-0580-5858}{\includegraphics[scale=0.7]{ORCID.png}}, \\ G. Marmo$^{3,4,9}$ \href{https://orcid.org/0000-0003-2662-2193}{\includegraphics[scale=0.7]{ORCID.png}}, L. Schiavone$^{3,5,10}$  \href{https://orcid.org/0000-0002-1817-5752}{\includegraphics[scale=0.7]{ORCID.png}}, A. Zampini$^{3,5,11}$ \href{https://orcid.org/0000-0003-0980-6003}{\includegraphics[scale=0.7]{ORCID.png}} \\
\footnotesize{$^{1}$\textit{ICMAT, Instituto de Ciencias Matem\'{a}ticas (CSIC-UAM-UC3M-UCM), Madrid, Spain}} \\
\footnotesize{$^{2}$\textit{Departamento de Matem\'aticas, Universidad Carlos III de Madrid, Legan\'es, Madrid, Spain}} \\
\footnotesize{$^{3}$\textit{INFN-Sezione di Napoli, Naples, Italy}} \\
\footnotesize{$^{4}$\textit{Dipartimento di Fisica ``E. Pancini'', Universit\`a di Napoli Federico II,  Naples, Italy}} \\
\footnotesize{$^{5}$\textit{Dipartimento di Matematica e Applicazioni "Renato Caccioppoli", Università di Napoli Federico II, Napoli, Italy}} \\
\footnotesize{$^{6}$\textit{e-mail: \texttt{fciaglia[at]math.uc3m.es
}}} \,\, 
\footnotesize{$^{7}$\textit{e-mail: \texttt{fcosmo[at]math.uc3m.es}}} \\
\footnotesize{$^{8}$\textit{e-mail: \texttt{albertoi[at]math.uc3m.es}}} \,\,  
\footnotesize{$^{9}$\textit{e-mail: \texttt{marmo[at]na.infn.it}}} \\ 
\footnotesize{$^{10}$\textit{e-mail: \texttt{lschiavo[at]math.uc3m.es}}} \,\,  
\footnotesize{$^{11}$\textit{e-mail: \texttt{azampini[at]na.infn.it}}} 
}
\begin{document}

\maketitle
\tableofcontents

\begin{abstract}
We analyse the problem of defining a Poisson bracket structure on the space of solutions of the equations of motions of first order Hamiltonian field theories.
%%%
The cases of Hamiltonian mechanical point systems -- as a ($0+1$)-dimensional field -- and more general field theories without gauge symmetries are addressed by showing the existence of a symplectic (and, thus, a Poisson) structure on the space of solutions.
%%%
Also the easiest case of gauge theory, namely free electrodynamics, is considered: within this problem, a pre-symplectic tensor on the space of solutions is introduced, and a Poisson structure is induced in terms of  a flat connection on a suitable  bundle associated to the theory.
\end{abstract}

\section{Introduction}

\noindent The need for a systematic study of the geometry of the space of solutions of differential equations describing physical systems was first pointed out by \textit{Souriau} in \cite{Souriau1997-Symplectic_view_Physics}.
Using his own words:
\begin{quotation}
\textit{Analytical mechanics is not an outdated theory, but it appears that the categories which one classically attributes to it such as configuration space, phase space, Lagrangian formalism, Hamiltonian formalism, are, simply because they do not have the required covariance; in other words, because these categories are in contradiction with Galilean relativity. A fortiori, they are inadequate for the formulation of relativistic mechanics in the sense of Einstein.} 
\end{quotation}
%%%
Coeherently with this approach, his suggestion was to  study the geometry not of the \emph{phase space} usually adopted to describe the Hamiltonian formulation of a dynamical system, but of what he named the \emph{space of motions} for a dynamics, i.e. the space of the solutions of the equations of motion representing the set of physical trajectories  for the system. 
In the present paper we shall refer to it as the \emph{solution space} of the motion, and it will be given as the space of extrema of a suitable action functional encoding the dynamical content of the theory.

\noindent Upon referring to the quantum-to-classical transition in physics, it is interesting to recall  the so called \emph{correspondence principle}, whose first formulation  was elaborated by Bohr (see \cite{Rosen-Nielsen1976-Bohr_papers_corr_princ} for a collection of  papers on it), stating that the predictions of the quantum theory under development should agree, in a high energy limit, with the predictions of the old (i.e. classical) theory. Such a sentence was made more mathematically rigorous by Dirac  \cite{Dirac1925-fundamental_eq_QM}: in a suitable classical limit, the commutator structure on quantum observables (i.e. linear operators on a separable complex Hilbert space) should correspond to a Poisson structure on their classical counterparts.

\noindent The aim of this paper is then to adopt Souriau's point of view and to analyse how it is possible to define a Poisson tensor on the solution space of a suitable class of field theories. 

\noindent This is not a new task:  it is interesting to recall that  a closed  form on the space of solutions for the Yang-Mills equations, for the Einstein equations for general relativity and for the string dynamics was introduced in  \cite{Crnkovic-Witten1986-Covariant_canonical_formalism} and \cite{Crnkovic1988-Symplectic_Cov_Phase_Space}, at least at a formal level from the point of view of differential geometry. Such a closed form turned out to be pre-symplectic, its kernel being related to the  gauge invariance of the equations. The search for a Poisson bracket between functions of the solutions of  field theories motivated by the wish of better understanding their quantum counterpart dates back indeed  to  Peierls,   who introduced covariant commutation rules in relativistic field theories in 1952 \cite{Peierls1952-Commutation_laws}. Such rules come by applying an algorithmic procedure  to functions of the solutions of the equations of motion via the Green's function of the linearized equations of the motion. 
As testified by \textit{B. DeWitt}'s words in \cite{DeWitt2003-Global_approach_QFT}:
\begin{quote}
\textit{There exists an anomaly today in the pedagogy of physics. 
%%%
When expounding the fundamentals of quantum field theory physicists almost universally fail to apply the lessons that relativity theory taught them early in the twentieth century. 
%%%
Although they usually carry out their calculations in a covariant way, in deriving their calculational rules they seem unable to wean themselves from canonical methods and Hamiltonians, which are holdovers from the nineteenth century and are tied to the cumbersome $(3+1)$-dimensional baggage of conjugate momenta, bigger-than-physical Hilbert spaces, and constraints. 
%%%
There seems to be a feeling that only canonical methods are {\rm  safe}; only they guarantee unitarity. 
%%%
This is a pity because such a belief is wrong, and it makes the foundations of field theory unnecessarily cluttered. 
%%%
One of the unfortunate results of this belief is that physicists, over the years, have almost totally neglected the beautiful covariant replacement for the 
canonical Poisson bracket that Peierls invented in 1952.}
\end{quote}
%%%
Peierls' construction went largely unnoticed or neglected, with only a few exceptions.
Indeed, DeWitt himself was the one who developed Peierls' approach,  applying it to the  case of the equations for the gravitational field  (see \cite{DeWitt1960-Commutators_Quant_Grav}) and adopting it in \cite{DeWitt1965-Groups_Fields, DeWitt2003-Global_approach_QFT} as the basic structure upon which one constructs a coherent covariant description of more general field theories. 

\noindent Although providing a breakthrough in the formulations of field theories from a relativistic point of view, the above mentioned contributions lack a deep analysis of the very geometrical structure involved in the construction of the covariant brackets.
%%%

\noindent Indeed, a mathematical framework that  fits the geometrical description of covariant field theories is the multisymplectic formalism, which started to be developed few years later, with the 
seminal contributions given by  \cite{ Kijowski1973-Canonical_formalism_CFT, Kijowski-Szczyrba1976-canonical_field_theories, Tulczyjew-Kjowksi1979-Symplectic_Framework_Field_Theories, Garcia-Perez-Rendon1969-Symplectic_Quantized_FieldsI, Garcia-Perez-Rendon1969-Symplectic_Quantized_FieldsII}. These papers were conceived as a generalisation to field theory of the symplectic formalism so successful in a geometric description of classical mechanics. 
%%%
In the successive decades several contributions to the development of multisymplectic geometry appeared. Among them, the papers \cite{Krupka-Krupkova-Saunders2010-Cartan_Form_generalizations, Krupka-Krupkova-Saunders2012-CartanLepage_forms, Krupkova1997-Geometry_OVE} (and references therein) focus on the geometrical formulation of Lagrangian variational principles and on the generalizations of the definition of the Poincar\'e-Cartan form within higher order field theories, while \cite{Fatibene-Francaviglia2003-Natural_Gauge_Natural} (and references therein) aims to a geometrical formulation of General Relativity as a field theory. 
%%%
Other relevant contributions to the development of multisymplectic geometry are in the following, far from exhaustive, list \cite{Gold-Sternb1973-HamiltonCartan_formalism_CV, Zuckermann1987-Action_principles_Global, Gotay1991-Multisymplectic_Field-TheoryI_Hamiltonian, Gotay1991-Multisymplectic_Field_TheoryII_spacetime, Giac-Mang-Sard1997-New_Lagrangian_Hamiltonian, Carinena-Crampin-Ibort1991-Multisymplectic, Ibort-Spivak2017-Covariant_Hamiltonian_YangMills, Binz-Sniatycki-Fisher1988-Classical_Fields, Ech-Munoz-RomanRoy2000-Multisymplectic_Hamiltonian, Eche-Munoz-RomanRoy1996-Geometry_Lagrangian_Field_Theories, Roman-Roy2009-Multisymplectic_Lagrangian_Hamiltonian, Montiel-Molgado-Lopez2021-Geometric_field_theory_Gravity, Margalef-Villasenor2021-Covariant_phase_space_boundary}.

\noindent Within the setting of multisymplectic geometry, the construction of a covariant Poisson bracket for field theories has received a renewed interest, although under a  slightly different perspective, that of studying the very differential geometrical structures involved in the construction of the brackets introduced in the literature cited above (see
\cite{Marsd-Mont-Morr-Thomp1986-Covariant_Poisson_bracket, Forger-Romero2005-Covariant_poisson_brackets, Forger-Salles2015-Covariant_Poisson_brackets, Khavkine2014-Covariant_phase_space, Asorey-Ciaglia-DC-Ibort-Marmo2017-Covariant_Jacobi_brackets, Asorey-Ciaglia-DC-Ibort2017-Covariant_brackets, Ciaglia-DC-Ibort-Marmo-Schiav2020-Jacobi_Particles, Ciaglia-DC-Ibort-Marmo-Schiav2020-Jacobi_Fields, Gieres2021-Covariant_field_theories, Balmaz-Marrero-Martinez2022-Affine_bracket}).
%%%
Other contributions, more oriented to the problem of reduction within a covariant formulation of classical field theories, worth to be mentioned are \cite{Castrillon-Berbel2023-Lagrangian_reduction, Castrillon-Berbel2023-Poisson_reduction, Castrillon-Rodriguez2022-Gauge_reduction} (see also references therein).
%%%

\noindent The aim of this paper (which we conceive as the first of a series) is to merge some of the previous considerations and offer a unified account of the description of covariant brackets together with a consistent analysis of the geometric and global-analytic structures presented by such a bracket.
With respect to the existing literature, we proceed a step further along two directions.
%%%

\noindent 
We extend the construction, given in \cite{Forger-Romero2005-Covariant_poisson_brackets} for the scalar field theory, to the more general setting of gauge theories. Notice that, 
for the purposes of constructing a Poisson structure on the quotient of the solution space of the equations modulo gauge symmetries, this was already done in \cite{Khavkine2014-Covariant_phase_space}.
That construction makes indeed use of the possibility of performing a gauge fixing and thus, if one aims at providing a bracket given as a suitable Poisson bivector which is \textit{globally}  defined  on the solution space (prior to quotienting it modulo gauge symmetries), it can not be applied for those gauge theories for which the gauge fixing can not be global in the space of fields, i.e. for theories presenting the so called \textit{Gribov's ambiguities} (note that, for instance, all non-Abelian gauge theories fall  into this case). 
Therefore, for such a class of gauge theories the construction in \cite{Khavkine2014-Covariant_phase_space} still remains valid but only locally in the space of fields.
%%%
In this paper, we formulate such a condition to construct the Poisson bracket on the solution space as a precise geometrical condition, that is  as the flatness of a connection on a bundle suitably  associated to the gauge field equations. 
%%%
Even if such a condition excludes the same class of theories (those presenting Gribov's ambiguities), the formalism that we describe within this paper can be  extended to the realm of non-Abelian gauge theories by a consistent use of the coisotropic embedding theorem (following an idea introduced in \cite{Ciaglia-DC-Ibort-Marmo-Schiav-Zamp-2022-Symmetry}).
%%%
We aim to describe such an extension in the second part of this project %\cite{Ciaglia-DC-Ibort-Mar-Schiav-Zamp2022-Non_abelian} 
while focussing in the following pages on the Abelian example given by free Electrodynamics (i.e. sourceless Maxwell equations)  in order to introduce the geometrical construction within the easiest case of a ${\rm U}(1)$ gauge theory.
%%% 
In particular, in the present paper we show that the solution space of a gauge theory is a pre-symplectic manifold, and introduce a Poisson bracket on it in terms of a flat connection which arises as the horizontal complement to the vertical distribution given  by fibering such manifold along the kernel of the pre-symplectic form.  This approach indeed circumvents the problem of a quite difficult rigorous  analysis of the resulting quotient manifold, i.e. the basis of the bundle, as well as the problem of a clear physical interpretation of it.

\noindent 
Moreover, we avoid the usual approach of dealing with the space of smooth fields as a formal differential manifold on which a formal differential calculus \`a la Cartan is defined and we  deal, in all the examples we analyse,  with well defined Banach or Hilbert manifolds on which all the geometrical objects and the differential calculus are rigorously defined.
%%%

\vsp

\noindent This paper originates by noticing that, for  the partial differential equations\footnote{pde for short.} we consider, there exists  a bijection
\be
\Psi \;\; : \;\; \elag^\Sigma \to \elag \,.
\ee
between the solution space $\elag$ and the space of Cauchy data\footnote{We denote by $\Sigma$ the hypersurface of the base manifold (i.e. the space-time)  on which Cauchy data are given.} $\elag^\Sigma$ by virtue of the existence and uniqueness theorem for the (at least weak) solutions of the equations of motion arising from the variational principle (see for example  \cite{Ciaglia-DC-Ibort-Marmo-Schiav2020-Jacobi_Particles, Ciaglia-DC-Ibort-Marmo-Schiav2020-Jacobi_Fields} where this idea is already introduced). For the specific pde's that we consider, it turns out that  both spaces are  Banach manifolds with a well defined notion of differential calculus and that, with respect to these differential structures, the bijection $\Psi$ is a diffeomorphism.
%%%

\noindent Within  the multysimplectic formulation of first order Hamiltonian field theories a canonical $2$-form on the solution space naturally emerges from the variational principle. 
%%%
Let us mention that the opposite point of view, namely that of starting from a (set of) pde together with a pre-symplectic structure in order to reconstruct a variational principle, was taken in \cite{Grigoriev2022-Intrinsic_inverse_problem} in order to provide a solution of the inverse problem of the calculus of variations in suitable cases.
%%%
In our case, the $2$-form emerging from the variational principle can be written as  
$$\Omega \,=\, {\Psi^{-1}}^\star \tilde{\Omega},$$
where $\tilde{\Omega}$ is a suitable $2$-form on $\elag^\Sigma$.
Depending on the properties of $\elag^\Sigma$ and $\tilde{\Omega}$ we discuss cases (in increasing order of difficulty) in which the form  ${\Psi^{-1}}^\star \tilde{\Omega}$ is used to define a Poisson bracket on $\elag$:
\begin{itemize}
\item For point particle  mechanical systems it turns out  that $\elag^\Sigma$ is a finite-dimensional manifold and that $\tilde{\Omega}$ is a symplectic form.
%%%
Therefore, since $\Psi^{-1}$ is a diffeomorphism, the form ${\Psi^{-1}}^\star \tilde{\Omega}$ is symplectic as well and corresponds  to a Poisson bracket;
\item For the pde's describing field theories, one has that  $\elag^\Sigma$ is infinite-dimensional and $\tilde{\Omega}$ is proved to be the structure emerging from the pre-symplectic constraint algorithm developed  when dealing with singular Hamiltonian  systems, i.e. those presenting (following Dirac's theory \cite{Dirac1964-Lectures_QM}) \emph{constraints}.
%%%

\noindent When the equations presents only first class constraints, one has that $\tilde\Omega$ is symplectic. It then follows that  ${\Psi^{-1}}^\star \tilde{\Omega}$ is symplectic and gives rise to a Poisson bracket.
\item The last case we analyse is that  of pde with gauge symmetries for which  the pre-symplectic constraint algorithm ends up with a degenerate $2$-form and thus both $\tilde{\Omega}$ and ${\Psi^{-1}}^\star \tilde{\Omega}$ are pre-symplectic.
In such a case, a Poisson bracket can be defined in terms of a flat connection on a principal bundle associated to the gauge theory. 
\end{itemize}

\noindent The paper is organised as follows.

\noindent  Section \ref{Sec: Multisymplectic formulation of first order Hamiltonian field theories}  recalls the basic aspects of the multysimplectic formalism used to describe first order Hamiltonian field theories.
%%%

\noindent Section \ref{Sec: Schwinger-Weiss variational principle, the first fundamental formula and the solution space} deals with the formulation of the Schwinger-Weiss action principle within the multisymplectic framework and with the construction of the solution space of the theory.
%%%

\noindent In Section \ref{Sec: The canonical structure on the solution space} the existence of the canonical $2$-form on the solution space is exhibited, while in Section \ref{Sec: The symplectic case vs gauge theories} it is shown that such a form  is  symplectic for equations  presenting only first class constraints and pre-symplectic for equations with  gauge invariance. 
%%%

\noindent To conclude, Section \ref{Sec: Poisson brackets on the solution space} shows how to construct a Poisson bracket on the solution space related with the canonical $2$-form both in the symplectic case and within gauge theories.
%%%
The construction of the bracket in the three relevant examples cited above is presented in \ref{Subsec: Mechanical systems: the free particle on the line}, \ref{Subsec: The symplectic case: massive vector boson field} and \ref{Subsec: Gauge theories: Electrodynamics}.
%%%

%%%%%%%%%%%%%%%%%%%%

%%%%%%%%%%%%%%%%%%%% 

\section{Multisymplectic formulation of first order Hamiltonian field theories}
\label{Sec: Multisymplectic formulation of first order Hamiltonian field theories}

\noindent 
A covariant Hamiltonian description of the pde associated to a classical field theory begins upon considering a  (smooth) manifold $\mathscr{M}$ -- possibly with a non empty boundary $\partial\m$ -- and a (smooth)  bundle 
\be
\label{1ale}
\pi \colon \mathbb{E} \to \mathscr{M}
\ee
with typical fiber a (smooth) manifold $\mathcal{E}$, whose sections $\phi$ model the fields of the theory. The manifold $\m$ is assumed to be $n=1+d$ dimensional. The only global aspect required on $\mathscr{M}$ is that it is orientable, so that  a volume $(d+1)$-form exists. Unless stated otherwise,  a local chart is denoted by  
 $(U_{\mathscr{M}},\, \psi_{U_{\mathscr{M}}})$, with 
 \be
 \label{a267.1}
 \psi_{U_\m}(m) = \left( x^0,..., x^d \right)
 \ee
  for $m\in U_{\mathscr{M}} \subseteq \mathscr{M}$ and the volume form is assumed to be  
\be\label{vom}
vol_{\mathscr{M}} \,=\, \mathrm{d}x^0 \wedge ... \wedge \mathrm{d}x^d.
\ee
Notice that this allows to consider point particle dynamics as the  $d=0$ case of a more general field theory. Moreover, in many examples coming from physics (and for all the examples we consider in this paper), the manifold $\mathscr{M}$ is a space-time, that is it carries a Lorentzian metric. For point particle dynamics one then usually writes $vol_\m=\dd t$.

\noindent 
Since the fields of the theory are modelled by sections of the fibration $\pi\colon\mathbb{E}\to\mathscr{M}$, they can be formulated in terms of suitable local maps from open sets in $\mathscr{M}$ to $\mathcal{E}$.  Such a manifold $\mathcal{E}$ generalises to field theory the role played by the configuration manifold (usually denoted by $Q$) in point mechanics, to which it reduces for the case $d=0$. This is the reason why sections are usually referred to as {\it local configurations} of the field. A general mathematical description of such a  set of local configurations is beyond the scope of this paper. We limit ourselves to assume that it is a smooth manifold, and prove this assumption right for the examples we describe in detail. 
We consider an adapted fibered chart on $\mathbb{E}$ given by $(U_{\mathbb{E}},\, \psi_{U_{\mathbb{E}}})$, with $$\psi_{U_{\mathbb{E}}}(\mathfrak{e}) = \left(x^0,...,x^d,\,u^1,..., u^r\,\right),$$ where $\dim \,\mathcal{E}\,=\,r$ and $\mathfrak{e}$ denotes a point in $U_{\mathbb{E}} \subseteq \mathbb{E}$. 
%%%
Local sections of the fibration $\pi\colon\mathbb{E}\to\mathscr{M}$ are then  denoted as
$$
\phi\,=\, (\,\phi^1(x),...,\, \phi^r(x)\,).
$$

\noindent 
Regarding the boundary, it is assumed to be  a $d$-dimensional smooth manifold (not necessarily connected), suitably embedded in $\mathscr{M}$, that is it is assumed that there exists a collar  
\be
\label{emb2} 
C_\epsilon \,=\, [0,\, \epsilon) \times \partial \mathscr{M}
\ee 
around it and an embedding $\mathfrak{i}_{C_\epsilon} \colon C_\epsilon \hookrightarrow \mathscr{M}$, such that  $i( \{ 0 \} \times  \partial \mathscr{M}) = \partial \mathscr{M}$, 
with
\be
\label{emb3}
\mathfrak{i}_{C_\epsilon}^\star \, vol_{\mathscr{M}} \,=\, \mathrm{d}x^0 \wedge vol_{\partial \mathscr{M}}
\ee 
where $vol_{\partial \mathscr{M}}$ is a volume form on $\partial \mathscr{M}$.
%%% 

\noindent 
It is well known that, given a smooth manifold $Q$, its cotangent bundle $\mathbf{T}^\star Q$ has a canonical symplectic structure, which makes it the natural setting for a Hamiltonian description of a point particle dynamics whose configuration space is the base manifold $Q$.
The natural generalisation of the notion of phase space $\mathbf{T}^\star Q$  is that of {\it covariant phase space} $\mathcal{P}(\mathbb{E})$ associated to the bundle $\pi \; :\;\mathbb{E} \to \m$. 
Our aim is to recall the basics of its constructions, while referring to the literature cited in the introduction for a more extensive account.

\noindent The first step along this path comes by noticing that the manifold $\mathbf{T}^\star Q$ can be described as the dual bundle to the tangent bundle $\mathbf{T}Q$ corresponding to $Q$. % where one usually settle the Lagrangian description. 
The generalisation of the notion of tangent bundle manifold appears to be that of the first order jet bundle $\mathbf{J}^1\pi$ associated to $\pi\,:\,\mathbb{E}\,\to\,\m$.
This is the bundle over the base $\mathbb{E}$ whose typical fiber is given by considering, at each  point  $m\in\mathscr{M}$,  the first order jets (or first order germs) of sections of the fibration $\pi\colon\mathbb{E}\to\mathscr{M}$. These are equivalence classes of sections of $\pi\colon\mathbb{E}\to\mathscr{M}$  sharing the  first order Taylor expansion in a chart $(U_m, \psi_{U_m})$ around $m$. Sections $\phi_1, \phi_2\colon\mathscr{M}\to\mathcal{E}$ have the same first order jet when 
\begin{equation}
\label{1jcon}
\begin{split}
&\phi_1 \simeq \phi_2 \qquad \Leftrightarrow  \\ & \qquad \phi_1^a(m) = \phi_2^a(m)\,, \qquad \left.\frac{\partial (\phi_1^a \circ \psi_{U_m}^{-1})}{\partial x^{\mu}}\right|_{x=\psi_{U_m}(m)}= \left. \frac{\partial (\phi_2^a \circ \psi_{U_m}^{-1})}{\partial x^{\mu}}\right|_{x=\psi_{U_m}(m)}
\end{split}
\end{equation}  
for 
$\mu = 0,...,d\,,\;\;a=1,...,r\,.$ The set 
$\mathbf{J}^1 \pi$ turns to be  the total space of a fiber bundle structure over $\m$ and of another fiber bundle structure over  $\mathbb{E}$ (see \cite{Saunders1989-Jet_Bundles} page $97-101$), i.e. 
\be
\label{1jdia}
\begin{tikzcd}
\mathbf{J}^1 \pi \arrow[d, "\pi^1_0"] \arrow[dd, "\pi_1", bend right = 50, swap] \\
\mathbb{E} \arrow[d, "\pi"] \\
\mathscr{M} 
\end{tikzcd} \,.
\ee
%%%
The space $\mathbf{J}^1\pi$ is a $(n+r+nr)$-dimensional smooth manifold on which we consider the adapted fibered chart $(U_{\mathbf{J}^1 \pi},\, \psi_{U_{\mathbf{J}^1 \pi}})$, 
\be
\psi_{U_{\mathbf{J}^1 \pi}}(\mathfrak{e}_x) = (x^0,...,x^d,\,u^1,...,u^r,\, z^1_0,...,z^r_d) \,,
\ee
with $\mathfrak{e}_x$  a point in $U_{\mathbf{J}^1 \pi} \subseteq \mathbf{J}^1 \pi$. 

\noindent It is immediate to see that, if $\m\simeq \mathbb{R}$ and $\mathcal{E}$ is a manifold $Q$, so that one has $\pi\,:\,Q\,\times\,\mathbb{R}\,\to\,\mathbb{R}$,  it results that $\mathbf{J}^1\pi\simeq \mathbf{T}Q\times \mathbb{R}$.  
Such a formalism provides then, via the double fibration $$\mathbf{T}Q\times \mathbb{R}\quad\to\quad Q\times\mathbb{R}\quad\to\quad\mathbb{R},$$ the  natural geometric setting for a Lagrangian description of a  possibly time dependent point  dynamical system,   where histories of the system under investigation are modelled as those elements which emerge as critical points (with respect to a suitable class of variations) of a functional for first order jet prolongations of curves on $Q$, i.e. sections of the bundle $\pi\colon Q\times\mathbb{R}\to\mathbb{R}$. That the first order jet bundle formalism provides a natural geometric setting for a Lagrangian description of a suitable class of first order  field theories has been analysed at length in  \cite{Sniatycki1970-Geometric_Field_Theory, Binz-Sniatycki-Fisher1988-Classical_Fields, Eche-Munoz-RomanRoy1996-Geometry_Lagrangian_Field_Theories, Giac-Mang-Sard1997-New_Lagrangian_Hamiltonian, Krupkova1997-Geometry_OVE, Fatibene-Francaviglia2003-Natural_Gauge_Natural, Krupka2015-Variational_Geometry}), and to them we refer for more details. 
Along this path, it is natural to analyse 
 whether the dual bundle to $\bf{J}^1\pi$ is suitable for a Hamiltonian description of a class of first order field theories. 

%%%
\noindent The answer to such a question starts by noticing that  $\mathbf{J}^1 \pi$ is an affine bundle over $\mathbb{E}$. 
%%%
Thus, its dual has to be introduced in terms of affine spaces.  The typical fibre of the dual bundle comes as  the set of all affine maps over the fibres of $\pi^1_0\colon\mathbf{J}^1\pi\to\mathbb{E}$:
\begin{equation}
z^a_\mu \mapsto \rho_a^\mu \, z^a_\mu + \rho_0\,.
\end{equation}
%%%
Such a dual bundle (sometimes called \textit{extended dual}), denoted by $\mathrm{Aff}(\mathbf{J}^1\pi,\, \mathbb{R})$ or  by $\mathbf{J}^\dag \pi$, is a fibre bundle over both $\mathbb{E}$ and  $\mathscr{M}$. 
%%%
We consider on $\mathbf{J}^\dag \pi$ the adapted fibered chart $(U_{\mathbf{J}^\dag \pi},\, \psi_{U_{\mathbf{J}^\dag \pi}})$, $$\psi_{\mathbf{J}^\dag \pi}(\bar{\mathfrak{p}}) = (x^0,...,x^d,\,u^1,...,u^r,\, \rho_1^0,...,\rho_r^d,\, \rho_0),$$ with  $\bar{\mathfrak{p}}$  a point in  $U_{\mathbf{J}^\dag \pi} \subseteq \mathbf{J}^\dag \pi$.
%%%
%The coordinates $\rho_a^\mu$ are identified with the (generalized) momenta of the theory, while the inhomogeneous term $\rho_0$ is identified with the Hamiltonian of the theory and is defined up to an additive constant. 
%%%
The quotient of  the extended dual space with respect to the addition of constants provides the covariant phase space $\mathcal{P}(\mathbb{E})$ of the theory, which turns out to be again a fibre bundle both over $\mathbb{E}$ and over $\mathscr{M}$. 
%%%
As adapted fibered chart on $\mathcal{P}(\mathbb{E})$ we take $(U_{\mathcal{P}(\mathbb{E})},\, \psi_{U_{\mathcal{P}(\mathbb{E})}})$, 
\be
\label{28.7.1}
\psi_{U_{\mathcal{P}(\mathbb{E})}}(\mathfrak{p}) = (x^0,...,x^d,\,u^1,...,u^r,\, \rho_1^0,...,\rho_r^d \,),
\ee
 with $\mathfrak{p}$  a point in $U_{\mathcal{P}(\mathbb{E})} \subseteq \mathcal{P}(\mathbb{E})$.
%%%
The fibre bundles introduced so far fit into the diagram
\be
\begin{tikzcd}
\mathbf{J}^\dag \pi \arrow[rr, "\kappa", bend left = 40] \arrow[dr, "\tau_1", swap] \arrow[r, "\tau^1_0"]   & \mathbb{E} \arrow[d, "\pi"] & \mathcal{P}(\mathbb{E}) \arrow[l, "\delta^1_0", swap] \arrow[dl, "\delta_1"] \\
 & \m & 
\end{tikzcd} \,,
\ee
where all arrows represent (smooth, local) submersions and the map $\kappa\colon\mathbf{J}^\dagger\pi\,\to\,\mathcal{P}(\mathbb{E})$ is the projection associated to the quotient procedure described above.

\noindent An interesting interpretation of such dual bundles in terms of spaces of differential forms on $\mathbb{E}$ exists. 
%%%
Indeed, the fibration  $\tau_0^1\colon\mathbf{J}^\dag \pi\to\mathbb{E}$ is proven to be  isomorphic (under an isomorphism depending on the choice of a reference volume form on $\mathscr{M}$ that we denote by $vol_{\mathscr{M}}$) to the fibration  $$\nu\colon\Lambda^{d+1}_1(\mathbb{E})\to\mathbb{E},$$ where $\Lambda_k^m(\mathbb{E})$ is the subbundle $\Lambda^m(\mathbb{E})$ of $m$-forms on $\mathbb{E}$ which vanish when contracted along $k+1$ vector fields which are vertical with respect to the fibration $\pi\colon\mathbb{E}\to\mathscr{M}$\footnote{Notice that such forms are often referred to as $k$-semi-basic or $k$-horizontal on $\pi\colon\mathbb{E}\to\mathscr{M}$ in the literature on the subject.}.  
%%%
The elements in $\Lambda^{d+1}_1(\mathbb{E})$ can be locally written as 
\begin{equation} \label{Eq: canonical one-form extended dual}
\boldsymbol{w} \,=\, \rho^\mu_a \, \mathrm{d}u^a \wedge \mathrm{d}^{d} x_\mu + \rho_0 \, vol_{\mathscr{M}}\,,
\end{equation}
where $\mathrm{d}^{d} x_{\mu}\,=\, i_{\frac{\partial}{\partial x^\mu}}vol_{\mathscr{M}}$ and $vol_{\mathscr{M}}=\dd x^0\wedge\ldots\wedge\dd x^d$ is the volume introduced in \eqref{vom}.  
%%%
It is then possible to prove that sections of the fibration $\delta^1_0\,\colon\,\mathcal{P}(\mathbb{E})\,\to\,\mathbb{E}$ can be identified with the quotient $\Lambda^{d+1}_1(\mathbb{E}) / \Lambda^{d+1}_0(\mathbb{E})$, where elements in  $\Lambda^{d+1}_0(\mathbb{E})$ can be locally written as 
$\boldsymbol{w}_0 = \rho_0vol_{\mathscr{M}}$, so that one can locally represent such sections as
\be
\label{29.7.4}
\mathbf{w}\,=\, \rho^\mu_a \, \mathrm{d}u^a \wedge \mathrm{d}^{d} x_\mu.
\ee
%%%

\noindent The  extended dual space $\mathbf{J}^\dag \pi$ carries a canonical $(d+1)$-form, also called \textit{tautological form}, defined as $$\Theta_{\boldsymbol{w}}(X_1,..., X_{d+1}) \,=\, \boldsymbol{w}(\nu_\star X_1,..., \nu_\star X_{d+1}),$$ where $X_j\in\mathfrak{X}(\mathbf{J}^\dagger\pi)$ and $\boldsymbol{w}\in\mathbf{J}^\dagger\pi$, while  $\nu$ denotes the projection $\nu \;:\; \Lambda^{d+1}_1(\mathbb{E}) \to \mathbb{E}$ introduced above. Its local expression coincides with the local  expression \eqref{Eq: canonical one-form extended dual}.
%%%

\noindent It is now possible to introduce the notion of Hamiltonian  as a local function $H=H(\mathfrak{p})$  defined on the  covariant phase space $\mathcal{P}(\mathbb{E})$. This definition amounts to select a hypersurface on the extended dual space $\mathbf{J}^\dagger\pi$  which is transverse to the projection $\kappa$ (i.e. a section of the bundle $\kappa  \colon \mathbf{J}^\dag \pi \to \mathcal{P}(\mathbb{E})$), by means of
$$
\rho_0 = H(\mathfrak{p}) \, .
$$
%%%
Notice that, since $\rho_0$ is the local coordinate of the  inhomogeneous term of an affine map, the Hamiltonian turns to be defined up to an additive constant. The pullback of the tautological $(d+1)$-form on $\mathbf{J}^\dag \pi$ to the covariant phase space $\mathcal{P}(\mathbb{E})$ via (minus)\footnote{This sign is a matter of convention.} the Hamiltonian section determines a $(d+1)$-form on $\mathcal{P}(\mathbb{E})$:
\be \label{Eq: multisymplectic form}
\Theta_{\mathrm{H}} \,=\, \rho^\mu_a \, \mathrm{d}u^a \wedge \mathrm{d}^{d} x_\mu -H \, vol_{\mathscr{M}} \,.
\ee
%%%
Notice that in the case of particle dynamics, where $\m$ is an interval of the real line, the differential form above is the well known Poincar\'e-Cartan form:
\be
\Theta_H \,=\, p_j \dd q^j - H \dd t \,.
\ee
The $(n+1)$-form $\dd \Theta_{\mathrm{H}}$ is  closed and non-degenerate, and therefore defined as a multisymplectic (or $n$-plectic) form. 
Thus, the covariant phase space $\mathcal{P}(\mathbb{E})$ equipped with the $(n+1)$-form $\dd \Theta_{\mathrm{H}}$ is a multisymplectic manifold (hence the name of multisymplectic formalism).

%%%%%%%%%%%%%%%%%%%%%%

%%%%%%%%%%%%%%%%%%%%%%

\section{Schwinger-Weiss variational principle, the first fundamental formula and the solution space}
\label{Sec: Schwinger-Weiss variational principle, the first fundamental formula and the solution space}

\noindent In order to introduce the space of fields and the variational conditions in terms of which we describe the dynamical content of a theory within a multisymplectic formalism, we begin by noticing that a section  $\phi$ of the bundle $\pi\,\colon\, \mathbb{E}\, \to\, \mathscr{M}$ has a canonical lift to a section $\mathbf{j}^1\phi$ of the first order jet bundle fibration (see \eqref{1jdia}) $\pi_1\,\colon\,\mathbf{J}^1\pi\,\to\,\mathscr{M}$ (such a lift being given  in terms of the relations \eqref{1jcon}), but has no canonical lift to a section of the fibration  $\delta_1\colon\mathcal{P}(\mathbb{E})\to\mathscr{M}$. This  mimics -- and generalises to field theory -- what is well known to appear in classical mechanics, namely that, for a given a smooth curve on a configuration manifold $Q$, the velocity field along the curve is canonically defined while the values of the momenta are not. 
%%%
We indeed notice that  sections of the fibration  $\delta_1\,\colon\,\mathcal{P}(\mathbb{E})\,\to\,\mathscr{M}$ split into a section $\phi$ of the fibration  $\pi\,\colon\,\mathbb{E}\,\to\,\mathscr{M}$  and a section $P$ of the fibration $\delta^1_0\,\colon\,\mathcal{P}(\mathbb{E})\,\to\,\mathbb{E}$  in the sense of the following proposition.
\begin{proposition}
Given a section  $\chi$ for the fibration $\delta_1\,\colon\,\mathcal{P}(\mathbb{E})\,\to\,\mathscr{M}$, and given the section $\phi \,=\, \delta^1_0 \circ \chi$ for the fibration  $\pi\,\colon\,\mathbb{E}\,\to\,\mathscr{M}$, there exists a (not unique) section $P$ for the fibration  $\delta^1_0\,\colon\,\mathcal{P}(\mathbb{E})\,\to\,\mathbb{E}$ such that $\chi \,=\, P \circ \phi$.
\end{proposition}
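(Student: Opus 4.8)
The plan is to factor $\chi$ through its own image. First I would check that the composite $\phi = \delta^1_0 \circ \chi$ is genuinely a section of $\pi \colon \mathbb{E} \to \mathscr{M}$: since $\delta_1 = \pi \circ \delta^1_0$ and $\delta_1 \circ \chi = \mathrm{id}_{\mathscr{M}}$, one has $\pi \circ \phi = \pi \circ \delta^1_0 \circ \chi = \delta_1 \circ \chi = \mathrm{id}_{\mathscr{M}}$. Being a section of a locally trivial fibration, $\phi$ is an embedding of $\mathscr{M}$ onto a \emph{closed} submanifold $N := \phi(\mathscr{M}) \subseteq \mathbb{E}$ (the graph of a continuous section is always closed, and $\phi$ is an embedding because $\pi|_N$ is a continuous inverse to it), so that $\phi \colon \mathscr{M} \to N$ is a diffeomorphism with smooth inverse $\phi^{-1}$.

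Next I would define the candidate $P$ first only along $N$, by $P|_N := \chi \circ \phi^{-1} \colon N \to \mathcal{P}(\mathbb{E})$. This is smooth (being a composite of smooth maps) and it is precisely the map forced by the desired identity, since $P|_N(\phi(m)) = \chi(m)$. Moreover it is a section of $\delta^1_0$ over $N$: indeed $\delta^1_0 \circ P|_N = (\delta^1_0 \circ \chi) \circ \phi^{-1} = \phi \circ \phi^{-1} = \mathrm{id}_N$. At this stage the required equality $\chi = P \circ \phi$ already holds along $N$, so it only remains to extend $P$ from the closed submanifold $N$ to a global section of $\delta^1_0$ over all of $\mathbb{E}$ without altering its values on $N$.

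The extension is the crux, and it is where the bundle structure of $\delta^1_0 \colon \mathcal{P}(\mathbb{E}) \to \mathbb{E}$ enters. Since $\mathcal{P}(\mathbb{E})$ arises as the quotient of the extended dual by the additive constants $\rho_0$, its fibres over $\mathbb{E}$ are vector spaces (in the adapted chart \eqref{28.7.1} the fibre coordinates $\rho^\mu_a$ are linear), so a global section always exists — for instance the zero section $s_0$ — and the fibrewise linear structure reduces the problem to a vector-bundle extension. Concretely, $P|_N - s_0|_N$ is a smooth section of $\delta^1_0$ over the closed submanifold $N$, which by the standard partition-of-unity (tubular neighbourhood) argument extends to a global section $\sigma$ with $\sigma|_N = P|_N - s_0|_N$. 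Setting $P := s_0 + \sigma$ then yields a smooth global section of $\delta^1_0$ whose restriction to $N$ is $P|_N$, and therefore $P \circ \phi = \chi$.

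The main obstacle is exactly this last extension step: one must invoke the section-extension property of $\delta^1_0$ for closed submanifolds. This is guaranteed here by the fact that $\delta^1_0$ is a vector (affine) bundle, which supplies both a reference global section $s_0$ and the linear structure needed for the interpolation. The non-uniqueness asserted in the statement is then transparent: it is precisely the freedom in choosing the extension $\sigma$ away from $N$. The only mild technical point, not a genuine difficulty, is that $\mathscr{M}$ (hence $N$) may carry the boundary $\partial\mathscr{M}$ introduced in \eqref{emb2}; the tubular-neighbourhood and partition-of-unity constructions remain valid for properly embedded submanifolds-with-boundary, so the argument is unaffected.
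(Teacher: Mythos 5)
Your argument is correct and complete; note that the paper states this proposition without supplying any proof at all, so there is nothing to compare it against --- your write-up actually fills that gap. The two load-bearing points are exactly the ones you identify and both check out: (i) the image $N=\phi(\mathscr{M})$ is a closed embedded submanifold of $\mathbb{E}$ (closedness follows locally from the trivializations of $\pi$ together with the Hausdorff property of the typical fibre $\mathcal{E}$, and $\pi|_N$ inverts $\phi$), so $P|_N:=\chi\circ\phi^{-1}$ is a well-defined smooth section of $\delta^1_0$ over $N$ satisfying $P|_N\circ\phi=\chi$ by construction; and (ii) $\delta^1_0\colon\mathcal{P}(\mathbb{E})\to\mathbb{E}$ is genuinely a vector bundle --- the fibre of the extended dual is the space of affine maps on the fibre of $\pi^1_0$, and quotienting by the constants leaves the linear parts, which form a vector space with a canonical zero --- so the standard extension lemma for sections of vector bundles over closed embedded submanifolds (local extension in adapted charts plus a partition of unity on the paracompact finite-dimensional manifold $\mathbb{E}$) applies and produces the global $P$. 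The subtraction of the zero section $s_0$ is harmless but redundant, since $P|_N-s_0|_N=P|_N$. Your closing observations are also the right ones: the non-uniqueness is precisely the freedom in the extension off $N$, and the presence of $\partial\mathscr{M}$ causes no trouble because $N$ is still properly embedded. One could give a shorter, purely local-coordinate version of the same argument, but your global formulation makes the mechanism (and the source of non-uniqueness) clearer.
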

%%%
\noindent  Since for the elements in the space $\Gamma(\delta_1)$ of (suitably regular) sections for the fibration $\delta_1\,\colon\,\mathcal{P}(\mathbb{E})\,\to\,\mathscr{M}$ the notion of momenta of a given configuration field is not uniquely defined, we consider in this paper the space $\Gamma(\pi) \times \Gamma(\delta^1_0)$ given by pairs $\chi\,=\,(\phi, P)$ where $\phi\in\Gamma(\pi)$ and $P\in\Gamma(\delta^1_0)$. Such sections have the expression  
$$
\phi^a\,=\,u^a \circ \phi, \qquad\qquad P^\mu_a\,=\,\rho^\mu_a \circ P 
$$
along the local coordinates adopted above. The reason for this choice is twofold. The first is that, as will be clear in the following lines, the evolution equations that come as the extrema of a variational problem in  $\Gamma(\pi) \times \Gamma(\delta^1_0)$ coincide with those arising from $\Gamma(\delta_1)$. The second is that for elements of $\Gamma(\pi) \times \Gamma(\delta^1_0)$ one can consider a class of constraints involving the space of momenta $P$ alone, and this will play an important role when dealing with non abelian Yang-Mills theories and the Palatini's formulation for the Einstein's equations for the gravitational field.

\noindent The dynamical content of the theory is encoded into an Action functional which is defined as follows upon a set $\tilde{\Gamma}(\pi) \times \tilde{\Gamma}(\pi^1_0) \subseteq\Gamma(\pi) \times \Gamma(\delta^1_0)$ of elements of $\Gamma(\pi) \times \Gamma(\delta^1_0)$ with suitable further regularity conditions\footnote{As it will be clear in the examples considered along the manuscript, such regularity conditions have to be imposed in order for the action functional, which is defined by means of an integral, to be well defined.}. 
\begin{definition}[\textit{Action functional}]
The action functional of a first order Hamiltonian field theory is the map $\mathscr{S}\,\colon\,\tilde{\Gamma}(\pi) \times \tilde{\Gamma}(\pi^1_0)\,  \to\,  \mathbb{R}$ given by 
\be \label{Eq: action functional field theory}
 \chi\quad\mapsto\quad \mathscr{S}_{\chi} =  \int_{\mathscr{M}} \chi^\star \Theta_{\mathrm{H}} \,,
\ee
whose coordinate expression turns out to be
\be
\mathscr{S}_\chi \,=\, \int_{\mathscr{M}} vol_{\mathscr{M}}\left[ \, P^\mu_a(x,\,\phi(x)) \partial_\mu \phi^a(x) - H(x, \phi(x), P(x, \phi(x))) \,\right] \,.
\ee
\end{definition}
%%%

\noindent Notice that the set $\tilde{\Gamma}(\pi) \times \tilde{\Gamma}(\pi^1_0)$ is not necessarily a Banach manifold\footnote{Notice that a general theory on equipping the space of smooth sections of a fibre bundle with a Banach norm,  even locally, does not exist. Consider for example the case of a trivial vector bundle over a non-compact manifold with fibre given by  $\mathbb{R}$. The space of smooth sections of the bundle is isomorphic to the space of smooth real valued functions on the base manifold, which is a 
Fr\'echet space on which one can not define a Banach norm.}, a requirement which allows for the definition of vector fields, differential forms and a general exterior Cartan calculus on the space of fields $\chi$. Although milder conditions than those of a Banach manifold structure   (see \cite{Michor-Kriegl1997-Convenient_setting}) would allow for the definition of an exterior Cartan calculus, we assume to equip the set   $\tilde{\Gamma}(\pi) \times \tilde{\Gamma}(\pi^1_0)$  with a Banach norm under which the action functional turns to be continuous so that it can be extended  by continuity to the completion $\fpe$.  
%%%
We refer to such a completion $\fpe$ as the set of \textit{dynamical fields} of the theory. Its elements, with a slight notational abuse, are again written as $\chi=(\phi, P)$; the extension of the action functional $\ac$ to $\fpe$ is as well denoted with the same symbol. Moreover, we define the set $\mathcal{F}_{\mathbb{E}}$ as the set of $\phi$ (the \textit{configuration fields}) from $\chi=(\phi,P)$. We shall develop the content of this paper coherently with this assumptions, and prove it right (with a suitable definition of a norm) for the examples we consider. This strategy allows also not to deal with general properties of Fr\'echet and locally convex spaces.

\noindent For the Banach manifold $\fpe$ one has a well defined notion (see \cite{Abraham-Marsden-Ratiu1980-Manifolds, Michor-Kriegl1997-Convenient_setting}) of tangent space $\mathbf{T}_\chi \fpe$ at each point $\chi$, whose elements are denoted by $\mathbb{X}_\chi$ and are given in terms of infinitesimal variations within $\fpe$ around $\chi$. It turns out that elements $\mathbb{X}_\chi$  can be identified with the set of $\delta_1$-vertical vector fields $X\,\in\,\mathfrak{X}(\pe)$ along the image of $\chi$ which, in order to preserve the split structure of $\fpe$, are $\delta^1_0$-projectable on $\mathbb{E}$, so that they can be locally written, in terms of the local chart \eqref{28.7.1}, as 
\be
\label{28.7.2}
X\,=\,X_{u^a}(x,u)\,\frac{\de}{\de u^a}\,+\,X_{\rho^\mu_a}(x,u,\rho)\frac{\de}{\de \rho^{\mu}_a}.
\ee
 The ``trajectories'' of the physical system under investigation are defined as the extrema of $\ac$. 
In particular they can be obtained as solutions of a set of pde by imposing a Schwinger-Weiss variational principle. 
%%%
It states that the first variation of $\mathscr{S}$ along any direction can only depend on boundary terms.
%%%
\begin{definition}[\textit{First variation of the action functional}]
Given the action functional $\mathscr{S}$ on $\fpe$ and a tangent vector $\mathbb{X}_\chi \in \mathbf{T}_\chi \mathcal{F}_{\mathcal{P}(\mathbb{E})}$, the first variation of $\mathscr{S}$ along the direction $\mathbb{X}_\chi$ is defined to be the Gateaux derivative of $\mathscr{S}$ along the direction $\mathbb{X}_\chi$.
\end{definition}
\noindent Denote by $F_s^X$  the flow of a vector field $X$ on  $\mathcal{P}(\mathbb{E})$ which is vertical with respect to the fibration $\delta_1\,\colon\,\mathcal{P}(\mathbb{E})\,\to\,\m$ and is defined in an open  neighborhood $U^{(\chi)}$ of the image of $\chi$ in $\mathcal{P}(\mathbb{E})$ such that its restriction to  $\chi$ coincides with $\mathbb{X}_\chi$, and denote by  $\chi_s$ the one parameter family of sections generated by the action of the flow $F^X_s$ upon $\chi=\chi_{s=0}$.  Recalling the contravariance of the pullback with respect to the composition of maps,  one has\footnote{By $\chi_s$ we denote the one parameter family of sections in terms of which the Gateaux derivative is defined.} 
\begin{align}
\label{Eq:gateaux derivative}
\delta_{\mathbb{X}_\chi} \mathscr{S}_\chi &\,=\, \frac{d}{ds} \,\mathscr{S}_{\chi_s} \, \Bigr|_{s=0} \nn \\ & \,=\, \frac{d}{ds} \, \int_{\m} \, \chi_s^\star \, \Theta_{\mathrm{H}} \, \Bigr|_{s=0} \nn \\ &\,=\, \frac{d}{ds}\, \int_{\m} \, \left(\, F^{X}_s \circ \chi  \,\right)^\star \, \Theta_{\mathrm{H}} \, \Bigr|_{s=0} \nn \\ &\,=\, \frac{d}{ds}\, \int_{\m} \, \chi^\star \circ {F^{X}_s}^\star \, \Theta_{\mathrm{H}} \, \Bigr|_{s=0} \,=\, \int_{\m} \, \chi^\star \left[ \, \mathfrak{L}_X \, \Theta_{\mathrm{H}} \,\right]    \,.
\end{align}
The previous expression, by means of the Cartan identity for the Lie derivative, gives rise to two terms:
\be \label{Eq: first variation field theory}
\delta_{\mathbb{X}_\chi} \mathscr{S}_\chi \,=\, \int_{\mathscr{M}} \chi^\star \left[\, i_{X} \mathrm{d} \Theta_H \,\right] + \int_{\partial \mathscr{M}} \chi_{_{\partial \mathscr{M}}} ^\star \left[\, i_{X} \Theta_H \,\right] \, ,
\ee
with 
\be
\label{emb1}
\mathfrak{i}_{_{\partial \mathscr{M}}}\,:\,\partial \mathscr{M}\,\hookrightarrow\,\mathscr{M}
\ee the embedding of the boundary, and 
\be
\label{reza}
\chi_{_{\partial \mathscr{M}}} = \chi \circ  \mathfrak{i}_{_{\partial \mathscr{M}}}
\ee  the restriction of the field $\chi$ to the boundary.  
%%%
It is worth noting that the pullback map $\chi^\star$ restricts the bracketed expression to the image of the section  $\chi$ (or its restriction to the boundary). 
%%%
Therefore, such a variation only depends on $\mathbb{X}_\chi$ and not on the chosen extension.
%%%
Then, a Schwinger-Weiss-like principle imposes the first term on the r.h.s. of the relation \eqref{Eq: first variation field theory} to vanish for all $X \in \mathfrak{X}(U^{(\chi)})$, where $U^{(\chi)}$ is an open neighborhood in $\mathcal{P}(\mathbb{E})$ of the image of $\chi$ and  $\mathfrak{X}(U^{(\chi)})$ can be identified with the set of vector fields on $U^{(\chi)}$ which are vertical along the fibration $\delta_1\,\colon\,\mathcal{P}(\mathbb{E})\,\to\,\mathscr{M}$ (and projectable along the fibration $\delta^1_0\,:\,\mathcal{P}(\mathbb{E})\,\to\,\mathbb{E}$). This condition is written as
\be
\chi^\star \left[\, i_X \dd \Theta_{\mathrm{H}} \,\right] \,=\, 0 \;\;\;\; \forall \,\, \mathfrak{X}(U^{(\chi)}) \,.
\ee
An interpretation of the terms appearing in \eqref{Eq: first variation field theory} in terms of differential forms on the infinite dimensional manifold $\mathcal{F}_{\mathcal{P}(\mathbb{E})}$ exists. Since  $\mathscr{S}\,:\,\mathcal{F}_{\mathcal{P}(\mathbb{E})}\,\to\,\mathbb{R}$, the term on the l.h.s. in \eqref{Eq: first variation field theory} represents  the ``component along $\mathbb{X}_\chi$'' of the differential of $\mathscr{S}$ at the point $\chi$, that is 
$$
\delta_{\mathbb{X}_\chi} \mathscr{S}_\chi \,=\,(i_{\mathbb{X}} \, \mathrm{d} \mathscr{S})\Bigr|_{\chi}, 
$$
where $\mathrm{d}$ denotes the differential on $\fpe$. Analogously,
the first term in the r.h.s. of \eqref{Eq: first variation field theory} can be read as the evaluation at the point $\chi\,\in\,\fpe$ of the contraction along the tangent vector $\mathbb{X}_\chi\,\in\,\mathbf{T}_\chi\mathcal{F}_{\mathcal{P}(\mathbb{E})}$ of a differential one-form $\mathbb{EL}\,\in\,\Lambda^1(\mathcal{F}_{\mathcal{P}(\mathbb{E})})$, that we call the Euler-Lagrange form of the given theory. 
%%%
\begin{definition}[\textit{Euler-Lagrange form}] \label{def:2.3}
Given an action functional $\mathscr{S}$, the first term appearing in the first fundamental formula \eqref{Eq: first variation field theory} amounts to contract the following \textit{Euler-Lagrange form} along the tangent vector $\mathbb{X}_\chi\,\in\,\mathbf{T}_\chi\mathcal{F}_{\mathcal{P}(\mathbb{E})}$ at the point $\chi$
\be \label{Eq: Euler-Lagrange form}
(i_{\mathbb{X}}(\mathbb{EL}))\bigr|_{\chi}\,=\,(\mathbb{EL}(\mathbb{X}))\bigr|_\chi \,=\, \int_{\mathscr{M}} \chi^\star \left[\, i_{X} \mathrm{d} \Theta_H \,\right] \,,
\ee
where $X\,\in\,\mathfrak{X}(U^{(\chi)})$  is an extension of $\mathbb{X}_\chi$.
\end{definition}
%%%
\noindent For the second term in the r.h.s. of the first fundamental formula \eqref{Eq: first variation field theory}  an analogous interpretation in terms of  differential forms on a suitable space of sections exists,  but requires more comments that we itemize as follows.
\begin{itemize}
\item The first step  is to consider again the relation \eqref{emb1}, i.e. the embedding of the boundary $\de\m$ into $\m$ as
$$
\mathfrak{i}_{_{\de\m}}\,\colon\,\de\m\,\hookrightarrow\,\m,
$$
%and assume it can be written as a slicing, that is the set $(x^0,x^1,\ldots, x^d)$  (see \eqref{a267.1}) provides a local chart for a collar in $\m$ close to the boundary $\partial\m$, and the coordinate expression for the embedding above can be written upon  fixing the value of the coordinate $x^0$ (as in \eqref{emb2})
%$$
%(x^1, \dots,x^d)\qquad\mapsto\qquad(x^0=\bar{x}^0, x^1, \ldots, x^d)
%$$
%so that the set  $(x^1, \ldots, x^d)$ is a local chart for the boundary and $x^0$ is transversal to it. % For each local (split) section $\chi=(\phi, P)$ defined within such a collar,  one can recover the components $P^0_a$ of the momenta corresponding to the directions transversal to $\partial\m$. 
so to have the fibrations given as in the following diagram %by the pullback bundles (see \cite{Saunders1989-Jet_Bundles}) over $\de\m$ 
\be
\label{diazamp}
\begin{tikzcd}
\mathfrak{i}^{\star}_{_{\partial\m}}(\mathcal{P}(\mathbb{E})) \arrow[d, "\tilde{\delta}^1_0" ] & 
\mathcal{P}(\mathbb{E})  \arrow[d, "\delta^1_0"] & \mathcal{P}(\mathfrak{i}^\star_{_{\de \m}}(\mathbb{E}))  \arrow[d, "(\delta_{_{\de\m}})^{1}_0"] \\
 \mathfrak{i}^\star_{_{\partial\m}}(\mathbb{E})  \arrow[d, "\pi_{_{\de\m}}"] & \mathbb{E} \arrow[d, "\pi"] & \mathfrak{i}^\star_{_{\partial\m}}(\mathbb{E})  
 \arrow[d, "\pi_{_{\de\m}}"] \\
   \de\m \arrow[r, hook, "\mathfrak{i}_{_{\de\m}}"] & \m & \de\m \arrow[l, hook, "\mathfrak{i}_{_{\de\m}}"] 
\end{tikzcd}\,. 
\ee
The left column represents the (double) pullback bundle over $\de\m$  (see \cite{Saunders1989-Jet_Bundles}) associated to the  phase space fibration represented by the central column: the typical fibre of the bundle 
$$
\pi_{_{\de \m}}\,\colon\,\mathfrak{i}^\star_{_{\partial\m}}(\mathbb{E})\,\to\,\de\m
$$ 
is diffeomorphic to the typical fibre of the bundle $\pi\,\colon\,\mathbb{E}\,\to\,\m$, and analogously the typical fibre for the bundle 
$$
\tilde\delta^1_0\,\colon\,\mathfrak{i}^{\star}_{_{\partial\m}}(\mathcal{P}(\mathbb{E})) \,\to\,\mathfrak{i}^\star_{_{\partial \m}}(\mathbb{E})
$$
is diffeomorphic to the typical fibre of the bundle $\delta^1_0\,\colon\,\mathcal{P}(\mathbb{E})\,\to\,\mathbb{E}$.  The right column represents the phase space fibration associated to the pullback bundle $\pi_{_{\de \m}}\,\colon\,\mathfrak{i}^\star_{_{\partial\m}}(\mathbb{E})\,\to\,\de\m$, in analogy to what  we described in the previous section \ref{Sec: Multisymplectic formulation of first order Hamiltonian field theories}.

\noindent Notice that, because of the global properties of the manifold $\m$ described in the previous section,  the  embedding $\mathfrak{i}_{_{\de\m}}\,\colon\,\de\m\,\hookrightarrow\,\m$ can be written as a slicing, that is the set $(x^0,x^1,\ldots, x^d)$  (see \eqref{a267.1}) provides a local chart for a collar in $\m$ close to the boundary $\partial\m$, and its coordinate expression comes upon  fixing the value of the coordinate $x^0$ (as in \eqref{emb2})
\be
\label{29.7.1}
(x^1, \dots,x^d)\qquad\mapsto\qquad(x^0=\bar{x}^0, x^1, \ldots, x^d)
\ee
so that the set  $(x^1, \ldots, x^d)$ is a local chart for the boundary and $x^0$ is transversal to it.  
%%%
It is  easy to observe that each $\chi\in\Gamma(\pi) \times \Gamma(\delta^1_0)$ locally defined on a  collar close to $\de\m$ provides an element 
$\chi_{_{\de\m}}\in \Gamma(\pi_{_{\de\m}})\times\Gamma(\tilde{\delta}^1_0)$ where $\tilde{\delta}_1\,=\,\pi_{_{\de\m}}\circ\tilde{\delta}^1_0\,\colon\,\mathfrak{i}^{\star}_{_{\partial\m}}(\mathcal{P}(\mathbb{E}))\,\to\,\de\m$, and for it we can write 
\be
\label{vesu1}
\chi_{_{\de\m}}\,=\,(\phi\mid_{_{\partial\mathscr{M}}},\,\, P \mid_{_{\partial\mathscr{M}}})
\ee
with 
\be
\begin{split}
&\phi\mid_{_{\partial\mathscr{M}}}\,=\,\phi\circ\mathfrak{i}_{_{\de\m}}\,=\,\mathfrak{i}^\star_{_{\de\m}}\phi, \\ 
&  P \mid_{_{\partial\mathscr{M}}}\,=\, \phi\circ\mathfrak{i}_{_{\de\m}}\,=\,\mathfrak{i}^\star_{_{\de\m}}P.
\end{split}
\label{vesu2}
\ee
It is then natural to define the set $\fpe^{\de\m}\subseteq\Gamma(\pi_{\de \m}) \times \Gamma(\delta^1_0)$ as the set of {\it restrictions to the boundary}  $\de\m$ of the set of dynamical fields $\fpe$ on $\m$, as well as the set $\mathcal{F}_{\mathbb{E}}^{\de\m}\subseteq\Gamma(\pi_{_{\de\m}})$ as the set of {\it restrictions to the boundary} $\de\m$ of the set of configuration fields $\mathcal{F}_{\mathbb{E}}$ on $\m$.

\medskip

\item The second step starts by noticing that, via  the embedding map as in \eqref{29.7.1}, one has  a volume form on $\de\m$ as 
$$
vol_{\m}\,=\,\dd x^0\wedge vol_{\de\m}:
$$
in terms of such a volume form on $\de\m$, as already described in the previous pages, the sections of the bundle 
\be
\label{29.7.2}
(\delta_{_{\de\m}})^1_0\,\colon\,\mathcal{P}(\mathfrak{i}_{_{\de\m}}^\star(\mathbb{E}))\,\to\,
\mathfrak{i}_{_{\de\m}}^\star(\mathbb{E})
\ee
 can be identified with the quotient $\Lambda^{d}_1(\mathfrak{i}_{_{\de\m}}^\star\mathbb{E})/\Lambda^d_0(\mathfrak{i}_{_{\de\m}}^\star\mathbb{E})$, whose elements can be written, with respect to the local action given in \eqref{29.7.1}, as
 \be
 \label{29.7.3}
 \tilde{\mathbf{w}}\,=\,\rho_{a}^j\dd u^a\wedge {i_{\de_j}}\dd x^1\wedge\ldots\wedge\dd x^d,
 \ee
  for $a=1,\dots,r$ and $j=1,\ldots,d$, where $\rho^j_a\,\in\,\mathcal{F}_{\mathbb{E}}^{\de\m}$. Moreover, if $\mathbf{w}\in\Lambda^{d+1}_1(\mathbb{E}) / \Lambda^{d+1}_0(\mathbb{E})$ as in \eqref{29.7.4}, one has
 \be
 \label{29.7.5}
 \mathfrak{i}^\star_{_{\de\m}}(\mathbf{w})\,=\,\rho_a^0\dd u^a\wedge vol_{_{\de\m}},
 \ee
 with again $\rho^0_a\,\in\,\mathcal{F}_{\mathbb{E}}^{\de\m}$. The  bundle on the basis $\mathfrak{i}^\star_{_{\de\m}}(\mathbb{E})$ with typical fibre diffeomorphic to $\mathfrak{i}_{_{\de\m}}^\star(\mathbf{w})$ turns to be diffeomorphic (see \cite{Ibort-Spivak2017-Covariant_Hamiltonian_YangMills}) to the cotangent bundle 
 $$
 \boldsymbol{\pi}_{_{\de\m}}\,\colon\,
 \mathbf{T}^\star\mathcal{F}_{\mathbb{E}}^{\de\m}\,\to\,\mathcal{F}_{\mathbb{E}}^{\de\m}.
 $$
 Both $\boldsymbol{\pi}_{_{\de\m}}$  
 {\it and} the bundle $(\delta_{_{\de\m}})^1_0$ in \eqref{29.7.2} share the same basis manifold and are subbundles of the fibration $\tilde\delta^1_0\,\colon\,\mathfrak{i}^{\star}_{_{\partial\m}}(\mathcal{P}(\mathbb{E})) \,\to\,\mathfrak{i}^\star_{_{\partial\m}}(\mathbb{E})$. For them one has (in terms of the notion of sum of vector bundles, see \cite{Saunders1989-Jet_Bundles})
\be
\label{29.7.6}
\tilde\delta^1_0\,=\,(\delta_{_{\de\m}})^1_0\,\oplus\,\boldsymbol{\pi}_{_{\de\m}}\,. 
 \ee
 Such a splitting allows for a splitting of the set of momenta $\{P^{\mu}_a|_{_{\de\m}}\}$ (with $\mu=0,1,\ldots,d)$ introduced in \eqref{vesu2} into a set $\{P^0_a|_{_{\de\m}}\}$ of momenta on the boundary $\de\m$ relative to the $x^0$ coordinate on $\m$ transversal to the boundary $\de\m$, and a complementary set of momenta  on the boundary $\{P^j_a|_{_{\de\m}}\}$ with $j=1,\ldots,d$. One can then recast the relations \eqref{vesu1}-\eqref{vesu2} in terms of the map
 $\Pi_{_{\partial \mathscr{M}}}\,:\,\mathcal{F}_{\mathcal{P}(\mathbb{E})} \,\to\, \mathcal{F}_{\mathcal{P}(\mathbb{E})}^{\partial \mathscr{M}}$ of the fields to the boundary\footnote{Notice that one can write  $\Pi_{_{\partial\m}}=i_{_{\partial\m}}^\star$ when acting on functions, in terms of the pull-back of the embedding map $i_{_{\partial\m}}\,\colon\,\partial\mathscr{M}\,\hookrightarrow\,\mathscr{M}$.} 
$$
\Pi_{_{\partial \mathscr{M}}}\,:\,(\phi , P) \,\mapsto\,  (\phi\mid_{_{\partial\mathscr{M} }}, P_a^\mu \mid_{_{\partial\mathscr{M}} } ) 
$$
so to write 
\be
\label{ale2}
(\phi\mid_{_{\partial\mathscr{M}}}, P_a^\mu \mid_{_{\partial\mathscr{M}}} ) \,=\, \left(\phi\mid_{_{\partial\mathscr{M}}}, P_a^0 \mid_{_{\partial\mathscr{M}}}, P_a^j \mid_{_{\partial\mathscr{M}}} \right)
\ee
%%%
and 
  $$
\mathcal{F}_{\mathcal{P}(\mathbb{E})}^{\partial \mathscr{M}} \,=\, \mathcal{F}_{\mathcal{P}(\mathbb{E}),\,0}^{\partial \mathscr{M}} \times \mathscr{B}^{\partial \mathscr{M}}.
$$
Upon denoting   
$$
\left(\phi\mid_{_{\partial\mathscr{M}}}, P_a^0 \mid_{_{\partial\mathscr{M}}} \right) \,=\, (\varphi^a,\, p_a) \in \mathcal{F}_{\mathcal{P}(\mathbb{E}),\,0}^{\partial \mathscr{M}}$$ and $$P_a^j \mid_{_{\partial\mathscr{M}} } \,=\, \beta_a^j \in \mathscr{B}^{\partial \mathscr{M}},
$$
one has the isomorphism 
$\lambda\,:\,\mathcal{F}_{\mathcal{P}(\mathbb{E}),\,0}^{\partial \mathscr{M}} \to \mathbf{T}^\star \mathcal{F}_{\mathbb{E}}^{\partial \mathscr{M}}$ (which is  continuous since it is  an identification map)  which we write as 
\be \label{Eq: isomorphism F0 T*FQ}
\lambda \;\;:\;\; \,(\varphi,\, p) \,\mapsto\, (\varphi,\, \varrho) \,
\ee
where we have denoted as $(\varphi, \rho)$ elements in $\mathbf{T}^\star \mathcal{F}_{\mathbb{E}}^{\partial \mathscr{M}}$ and a fibered chart on such a bundle as 
 $(U_{\mathbf{T}^\star \mathcal{F}_{\mathbb{E}}^{\partial \mathscr{M}}},\, \psi_{U_{\mathbf{T}^\star \mathcal{F}_{\mathbb{E}}^{\partial \mathscr{M}}}})$ with 
$$
\psi_{U_{\mathbf{T}^\star \mathcal{F}_{\mathbb{E}}^{\partial \mathscr{M}}}}(\varphi,\, \varrho) \,=\, (\varphi^1,...,\varphi^r,\, \varrho_1,...,\varrho_r).
$$
We can depict the various maps introduced above as  
\be \label{Eq: projections to boundary (fields)}
\begin{tikzcd}
	\mathcal{F}_{\mathcal{P}(\mathbb{E})} \arrow[r, "\Pi_{_{\partial \mathscr{M}}}"] & 
	\mathcal{F}_{\mathcal{P}(\mathbb{E})}^{\partial \mathscr{M}} \,\simeq\,\left(\mathcal{F}_{\mathcal{P}(\mathbb{E}),\,0}^{\partial \mathscr{M}} \times \mathscr{B}^{\partial \mathscr{M}}\right) \arrow[r, "\rho_{_{\partial \mathscr{M}}}"] & \mathcal{F}_{\mathcal{P}(\mathbb{E}),\,0}^{\partial \mathscr{M}} \arrow[r, "\lambda"] & \mathbf{T}^\star \mathcal{F}_{\mathbb{E}}^{\partial \mathscr{M}}
\end{tikzcd}.
\ee

\medskip

%%%
\item The third step consists in noticing that the cotangent bundle
$\mathbf{T}^\star \mathcal{F}_{\mathbb{E}}^{\partial \mathscr{M}}$ has a canonical non-degenerate $2$-form, that we denote by $\bar{\omega}^{\partial \mathscr{M}}$, whose action is 
\be
\label{omega1}
\bar{\omega}_{(\varphi,\, \varrho)}^{\partial \mathscr{M}}(\mathbb{X}_{(\varphi,\, \varrho)},\, \mathbb{Y}_{(\varphi,\, \varrho)}) \,=\, \int_{\partial \mathscr{M}}  vol_{\partial \mathscr{M}}\left(\,\mathbb{X}_{\varphi^a} \, \mathbb{Y}_{\varrho_a} - \mathbb{Y}_{\varphi^a} \, \mathbb{X}_{\varrho_a} \,\right)  \,,
\ee
where $\mathbb{X}_{(\varphi,\, \varrho)}$ and $\mathbb{Y}_{(\varphi,\, \varrho)}$ are elements in $\mathbf{T}_{(\varphi,\,\varrho)}(\mathbf{T}^\star \mathcal{F}_{\mathbb{E}}^{\partial \mathscr{M}})$, while  $\mathbb{X}_{\varphi^a}$, $\mathbb{X}_{\varrho_a}$ (resp. $\mathbb{Y}_{\varphi^a}$, $\mathbb{Y}_{\varrho_a}$) are their components with respect to the adopted chart.
%%%
It is immediate to see that $$\bar{\omega}_{(\varphi, \varrho)}^{\partial \mathscr{M}}\,=\,-\dd \bar{\vartheta}_{(\varphi, \varrho)}^{\partial \mathscr{M}},$$ with 
\be
\label{theta1}
\bar{\vartheta}_{(\varphi, \varrho)}^{\partial \mathscr{M}}(\mathbb{X}_{(\varphi,\,\varrho)}) \,=\, \int_{\partial \mathscr{M}}  vol_{\partial \mathscr{M}} \left(\, \varrho_a \, \mathbb{X}_{\varphi^a} \,\right)  \,.
\ee

\medskip

\end{itemize}

We can now directly compute  the second term on the r.h.s. of \eqref{Eq: first variation field theory}, which  reads
\be
\int_{{\partial \mathscr{M}}} \chi_{_{\partial \m}}^\star \left[\, i_{X} \Theta_H \,\right] \,=\,  \int_{\partial \mathscr{M}} vol_{\partial \mathscr{M}} \left(\, p_a\, X_{u^a} \,\right)\bigr|_{\chi_{_{\de \m}}}\,. 
\ee
Notice that, since tangent vectors $\mathbb{X}\in\mathfrak{X}(\fpe)$ are represented in terms of vector fields $X\in\mathfrak{X}(\mathcal{P}(\mathbb{E}))$ (as in \eqref{28.7.2}) which are $\delta^1$-vertical (and $\delta^1_0$-projectable), the terms $X_{u^a}\bigr|_{\chi_{_{\de\m}}}$ represent a vector field on $\de\m$ which is vertical along the $\tilde\delta^1$ pullback fibration defined  above on $\de\m$, and then give the $\mathbb{X}_{\varphi^a}$-components of a vector  $\mathbb{X}_{(\varphi, \rho)}$ on $\mathbf{T}_{(\varphi,\rho)} (\mathbf{T}^\star\mathcal{F}_{\mathbb{E}}^{\de\m})$. We can then write  
\be
\label{29.7.7}
\int_{\partial \mathscr{M}} vol_{\partial \mathscr{M}} \left(\, p_a\, X_{u^a} \,\right)\bigr|_{\chi_{_{\de \m}}} \,=\, \int_{\partial \mathscr{M}} vol_{\partial \mathscr{M}} \left(\, p_a\, \mathbb{X}_{\varphi^a} \,\right)  \,,
\ee

%%%
This shows that we can  interpret the term $\int_{\partial \mathscr{M}} \chi_{_{\partial \mathscr{M}}}^\star \left[\, i_{X} \Theta_H \,\right]$ appearing in the first fundamental formula as a differential form on $\mathcal{F}_{\mathcal{P}(\mathbb{E})}$, namely as the image, under the  the pull-back of (see \eqref{Eq: projections to boundary (fields)}) 
$$
\lambda \circ \rho_{_{\partial \mathscr{M}}} \circ \Pi_{_{\partial \mathscr{M}}}\,\colon\,\fpe\,\to\,\mathbf{T}^{\star}\mathcal{F}^{\de\m}_{\mathbb{E}},
$$ of the canonical form $\bar{\vartheta}^{\partial \mathscr{M}}$ on $\mathbf{T}^\star \mathcal{F}^{\partial \mathscr{M}}_{\mathbb{E}}$, whose action we can write as 

\be
\label{15ott1}
\begin{split} 
\int_{\partial \mathscr{M}} \chi_{_{\partial \mathscr{M}}}^\star \left[\, i_{X} \Theta_H \,\right] &\,=\, \left(\, \lambda \circ \rho_{_{\partial \mathscr{M}}} \circ \Pi_{_{\partial \mathscr{M}}} \,\right)^\star \bar{\vartheta}_\chi^{\partial \mathscr{M}}(\mathbb{X}_{\chi}) \nn \\ &\,=\, \Pi_{_{\partial \mathscr{M}}}^\star \, \rho_{_{\partial \mathscr{M}}}^\star \, \lambda^\star \bar{\vartheta}_\chi^{\partial \mathscr{M}}(\mathbb{X}_{\chi}) \nn \\  &\,=\,\Pi_{_{\partial \mathscr{M}}}^\star \, \rho_{_{\partial \mathscr{M}}}^\star \, \vartheta_\chi^{\partial \mathscr{M}}(\mathbb{X}_{\chi})\,=\, \Pi^\star_{_{\partial \mathscr{M}}} \alpha_\chi^{\partial \mathscr{M}}(\mathbb{X}_{\chi}) \, ,
\end{split}
\ee 
upon defining $\vartheta_\chi^{\partial \mathscr{M}}=\lambda^\star \bar{\vartheta}_\chi^{\partial \mathscr{M}}$ and $\alpha_\chi^{\partial \mathscr{M}}=\rho_{_{\partial \mathscr{M}}}^\star \, \vartheta_\chi^{\partial \mathscr{M}}$. Such analysis adds the diagram \eqref{Eq: projections to boundary (fields)} a new row, as follows
\be
\label{5.8.3}
\begin{tikzcd}
	\mathcal{F}_{\mathcal{P}(\mathbb{E})} \arrow[r, "\Pi_{_{\partial \mathscr{M}}}"] & 
	\mathcal{F}_{\mathcal{P}(\mathbb{E})}^{\partial \mathscr{M}}  \arrow[r, "\rho_{_{\partial \mathscr{M}}}"] & \mathcal{F}_{\mathcal{P}(\mathbb{E}),\,0}^{\partial \mathscr{M}} \arrow[r, "\lambda"] & \mathbf{T}^\star \mathcal{F}_{\mathbb{E}}^{\partial \mathscr{M}} \\ 
	\left\{\begin{array}{l}\Pi^\star_{_{\de\m}}\alpha^{\de\m}, \\ \Pi^\star_{_{\de\m}}\Omega^{\de\m}\end{array}\right\} & \left\{\begin{array}{l} \alpha^{\de\m}\,=\,\rho^\star_{_{\de\m}}\vartheta, \\ \Omega^{\de\m}\,=\,-\dd\alpha^{\de\m}\end{array}\right\} \arrow[l, "\Pi^\star_{_{\de\m}}"] &  \left\{\begin{array}{l} \vartheta^{\de\m}\,=\,\lambda^\star_{_{\de\m}}\bar\vartheta, \\ \omega^{\de\m}\,=\,-\dd\vartheta^{\de\m}\end{array}\right\} \arrow[l, "\rho^\star_{_{\de\m}}"] & 	 \left\{\begin{array}{l} \bar\vartheta^{\de\m}, \\ \bar\omega^{\de\m}\,=\,-\dd\bar\vartheta^{\de\m}\end{array}\right\}. \arrow[l, "\lambda^\star_{_{\de\m}}"] 
\end{tikzcd}
\ee

\noindent The variational formula \eqref{Eq: first variation field theory} can then be written in terms of differential forms on $\fpe$ as 
\be \label{Eq: first fundamental formula}
\mathrm{d} \mathscr{S}_{\chi} \,=\, \mathbb{EL}_\chi + (\Pi_{_{\partial \mathscr{M}}}^\star \alpha^{\partial \mathscr{M}})_\chi \,
\ee
%%%
so that  solutions of the equations of the motion can be seen as the zeroes of the Euler-Lagrange form:
\be
\mathbb{EL}_\chi \,=\, 0 \,,
\ee
a condition which is equivalent to
\be
\chi^\star \left[\, i_X \dd \Theta_{\mathrm{H}} \,\right] \,=\, 0 
\ee
for any $X\,\in\,\mathfrak{X}(U^{(\chi)})$ which represents an infinitesimal variation of the set of $\chi$'s, i.e. such that it is $\delta_1$-vertical and $\delta^1_0$-projectable. The local expression of such condition reads 
\be \label{Eq: covariant hamilton equations}
\frac{\partial \phi^a}{\partial x^\mu} - \frac{\partial H }{\partial \rho^\mu_a}\Biggr|_{\chi} \,=\, 0 \,, \qquad  \frac{\partial \rho^\mu_a}{\partial x^\mu} + \frac{\partial H}{\partial u^a}\Biggr|_\chi \,=\, 0 \,.
\ee
These are the  \textit{covariant Hamilton equations} of the theory, or the \textit{De Donder-Weyl equations}. 
We will equivalently refer to them as to the  \textit{equations of the motion}, or \textit{Euler-Lagrange equations}. 
%%%
It is then natural to set the following.
%%%
\begin{definition}[\textit{Solution space}]
Given an action functional $\mathscr{S}$ on a suitable space of fields $\fpe$, we define the following subset of $\mathcal{F}_{\mathcal{P}(\mathbb{E})}$
\be\label{eq:elag}
\elag_{\mathscr{M}} \,=\, \{\, \chi \,=\, (\phi,\, P) \;\; : \;\; \mathbb{EL}_{\chi} \,=\,0 \,\} \,=\, \left\{\,(\phi,\, P) \;\; : \;\; \chi^*(i_X \dd \Theta_{\mathrm{H}}) = 0 \;\;\; \forall \,\, X \in \mathfrak{X}(U^{(\chi)}) \right\} \,
\ee
to be the solutions space of the theory.
\end{definition}
%%%
\noindent We assume that the map  $\mathfrak{i}_{\elag_{\mathscr{M}}}\,\colon\,\elag_{\mathscr{M}}\,\hookrightarrow\,\fpe$ is a smooth embedding (and we shall prove such assumption right for the examples we shall analyse in the following pages). 
Before moving on to the next section we provide some explicit examples in order to clarify the formalism described so far. 
%%%
\begin{example}[\textit{Free point particle on the line}] \label{Ex: free particle variational principle}
The first example we consider is that of a point particle system seen as a field theory over a one dimensional manifold $\m$, which we assume to be an interval $\mathbb{I}\,=\,(a,b)\, \subset \mathbb{R}$ of the real line, equipped with a chart given by a single time coordinate $t$.
In particular we consider the dynamics of a free particle.
The bundle $\pi\,\colon\,\mathbb{E}\, \to\, \mathbb{I}$ has typical fibre $\mathcal{E} \,=\, \mathbb{R}$ where we denote by $q$ a (global) coordinate.
%%%
Therefore, configuration fields are real valued smooth functions $q \;\; : \;\; \mathbb{I} \to \mathbb{R}$, i.e. they are smooth curves with support on $\mathbb{I}\,\subset\,\mathbb{R}$ .
%%%
The covariant phase space is $$\pe \,=\, \mathbb{I} \times \mathbb{R}^2$$ with a set of (global) coordinates  denoted by $(t,\, q,\, p)$.
%%%
We notice that in this case the action functional is well defined on the whole set  $\Gamma(\pi) \times \Gamma(\delta^1_0)$ without assuming any additional regularity conditions, namely here $\tilde{\Gamma}(\pi) \times \tilde{\Gamma}(\delta^1_0) \,=\, \Gamma(\pi) \times \Gamma(\delta^1_0)$. 
%%%
The latter space can be given the structure of a Banach space.
%%%
Indeed, the 
norm 
\be
\Vert \, \chi \, \Vert \,=\, \mathrm{sup}_{t \in \mathbb{I}} \vert\, 
q(t) \,\vert + \mathrm{sup}_{t \in \mathbb{I}} \vert\, 
\dot{q}(t) \,\vert + \mathrm{sup}_{t \in \mathbb{I}} \vert\, 
p(t) \,\vert  \,
\ee
is well defined on it, and, as the following inequalities \footnote{Here we are denoting by $\chi\,=\,(q,\,p)$ and $\tilde{\chi} \,=\, (\tilde{q},\,\tilde{p})$ two elements of $\Gamma(\pi) \times \Gamma(\delta^1_0)$.}
\be
\begin{split}
\bigl|\,\ac_\chi - \ac_{\tilde{\chi}} \,\bigr|  \,\leq\, \int_a^b \dd t &\left\vert \, p(t) \dot{q}(t) - \tilde{p}(t) \dot{\tilde{q}}(t) - \frac{p(t)^2}{2m} + \frac{\tilde{p}(t)^2}{2m} \,\right\vert \,  \\ 
\,\leq\, \int_a^b  \dd t &\biggl[\, \left\vert \, 
(p(t) - \tilde{p}(t)) \dot{q}(t) \,\right\vert + \left\vert\, \tilde{p}(t) (\dot{q}(t) - \dot{\tilde{q}}(t)) \,\right\vert + \\
+ &\frac{1}{2m} \left\vert\, \tilde{p}(t) (p(t) - \tilde{p}(t)) \,\right\vert + \frac{1}{2m} \left\vert\, p(t) (p(t) - \tilde{p}(t)) \,\right\vert \,\biggr]  \\
\,\leq\, (b-a) &\biggl[\, \mathrm{sup}_{t \in \mathbb{I}} \vert\, \dot{q}(t) \,\vert \, \mathrm{sup}_{t \in \mathbb{I}} \vert\, p(t) - \tilde{p}(t) \,\vert + \mathrm{sup}_{t \in \mathbb{I}} \vert\, \tilde{p}(t) \,\vert \, \mathrm{sup}_{t \in \mathbb{I}} \vert\, \dot{q}(t) - \dot{\tilde{q}}(t) \,\vert + \\
+ &\frac{1}{2m} \mathrm{sup}_{t \in \mathbb{I}} \vert\, p(t) \,\vert \, \mathrm{sup}_{t \in \mathbb{I}} \vert\, p(t) - \tilde{p}(t) \,\vert + \frac{1}{2m} \mathrm{sup}_{t \in \mathbb{I}} \vert\, \tilde{p}(t) \,\vert \, \mathrm{sup}_{t \in \mathbb{I}} \vert\, p(t) - \tilde{p}(t) \,\vert \,\biggr] \,
\end{split}
\ee
show, the action functional is continuous with respect to the norm defined above.
%%%
Consequently, it can be extended by continuity to the completion $\overline{\Gamma(\pi) \times \Gamma(\delta^1_0)}^{\|\cdot\|}$, which turns to be  a Banach space and which we assume as   our space of dynamical fields $\fpe \,=\, \overline{\Gamma(\pi) \times \Gamma(\delta^1_0)}^{\Vert\,\cdot \,\Vert}$.
%%%
The Hamiltonian of the theory is a section $H\,:\,\pe \to \mathbf{J}^\dagger\pi$ which we write as 
\be
(t,\,q,\, p) \mapsto H(t,\,q,\,p) \,=\, \frac{p^2}{2m} \,,
\ee
and, thus, the canonical form $\Theta_H$ is
\be
\Theta_{H} \,=\, p \dd q - \frac{p^2}{2m} \dd t \,,
\ee
giving the multisymplectic form
\be
\dd \Theta_H \,=\, \dd p \wedge \dd q - \frac{p}{m} \dd p \wedge \dd t \,.
\ee
%%%
%The action functional is well defined on $\fpe$ via \eqref{Eq: action functional field theory}.
%%%

\noindent In order to compute the Euler-Lagrange form and to obtain the equations of the motion, let us compute the contraction of $\dd \Theta_H$ along a $\pi^1$-vertical vector field $X$, which we write as:
\be
i_X \dd \Theta_H \,=\, X_p \dd q - X_q \dd p - \frac{p}{m} X_p \dd t \,.
\ee
%%%
Its pull-back via $\chi$ reads
\be
\chi^\star \left[\,i_X \dd \Theta_H \,\right] \,=\, \left[\,X_p \left(\dot{q} - \frac{p}{m}\right) - X_q \dot{p} \,\right] \dd t  \,,
\ee
%%%
while the Euler-Lagrange form is 
\be
\mathbb{EL}_\chi(\mathbb{X}_\chi) \,=\, \int_{\mathbb{I}} \left[\,X_p \left(\dot{q} - \frac{p}{m}\right) - X_q \dot{p} \,\right] \dd t \,.  
\ee
Its zeroes satisfy the following set of Euler-Lagrange equations
\be\label{eqn: EL eqns free particle}
\begin{split}
\dot{q} \,=\, \frac{p}{m} , \\  \qquad \dot{p} \,=\, 0 \,, \end{split}
\ee
in complete accordance to  the usual geometrical description of the dynamics of a free  particle in classical mechanics.
\end{example}

\begin{example}[\textit{Free vector meson field theory}] \label{Ex: vector boson field variational principle}
As a further example, we consider  the set of pdes describing the evolution of the free (real) vector meson field. 
%%%
In this case, the space-time $\mathscr{M}$ is the Minkowski space-time, that we denote as   $\mathbb{M}^{1,3}$, i.e. $\mathbb{R}^4$ equipped with the Minkowski metric 
$$
\eta = \mathrm{d}x^0\otimes \mathrm{d}x^0 -\sum_{j=1}^3 \mathrm{d}x^j\otimes \mathrm{d}x^j, 
$$
in terms of a global chart coordinate system given by  $(\mathbb{M}^{1,3}\,, \psi_{\mathbb{M}^{1,3}})$, 
$$
\psi_{\mathbb{M}^{1,3}}(m)=(\,x^0,...,x^3\,)\,=\, x,
$$ for $m\,\in\,\mathbb{M}^{1,3}.$
%%%
The bundle $\pi\,\colon\,\mathbb{E}\, \to\, \mathbb{M}^{1,3}$ has typical fiber $\mathcal{E} \cong \mathbb{R}^r$, so the configuration fields can be identified with maps $\phi \colon \mathbb{M}^{1,3} \to \mathbb{R}^r$.
%%%
The covariant phase space results in  $$\mathcal{P}(\mathbb{E})=\mathbb{M}^{1,3}\times \left( \mathbb{R}\times\mathbb{R}^4 \right)^r,$$ where we chose the adapted fibered chart $(\mathcal{P}(\mathbb{E}),\, \psi_{\mathcal{P}(\mathbb{E})})$, $$\psi_{\mathcal{P}(\mathbb{E})}(\mathfrak{p})=\left(\,x^0,...,x^3,\, u^1,...,u^r,\, \rho_1^{0},...,\rho_3^r\,\right) \,=\, (x,\,u,\, \rho)$$ with $\mathfrak{p}\,\in\,\mathcal{P}(\mathbb{E})$. Notice that this chart is global since  the base manifold has a global chart and the bundle $\mathcal{P}(\mathbb{E})$ is trivial. 
The Hamiltonian of the theory, which is a global section $\mathrm{H}\,:\,\mathcal{P}(\mathbb{E})\,\to\,\mathbf{J}^\dagger\pi$ since the bundle is trivial, can be written as 
\begin{align}
\label{Eq: klein-gordon hamiltonian}
\mathrm{H}\quad:\quad &\mathcal{P}(\mathbb{E}) \to \mathbf{J}^\dagger \pi \nn \\  &\qquad \psi_{\mathcal{P}(\mathbb{E})}^{-1}\left(\,x, u,\, \rho\,\right) \,=\,\psi_{\mathbf{J}^\dagger \pi}^{-1}\left(\,x,\, u,\, \rho,\,  \rho_0 \,=\, \frac{1}{2}\left(\eta_{\mu \nu}\delta^{ab} \rho_a^\mu \rho_b^\nu + m^2 \delta_{ab}u^a u^b \right)\, \right)  \, ,
\end{align}
so that the canonical one-form on the covariant phase space reads
\be
\Theta_H \,=\, \rho_a^\mu \, \mathrm{d}u^a \wedge i_\mu \mathrm{d}^4x - \frac{1}{2} \left(\, \eta_{\mu \nu}\delta^{ab} \rho_a^\mu \rho_b^\nu + m^2 \delta_{ab}u^a u^b \,\right) \, \mathrm{d}^4 x \,.
\ee
The associated multisymplectic form is
\be
\mathrm{d}\Theta_H \,=\, \mathrm{d}\rho_a^\mu \wedge \mathrm{d}u^a \wedge i_\mu \mathrm{d}^4 x - \eta_{\mu \nu}\delta^{ab} \rho_a^\mu \mathrm{d}\rho_b^\nu \wedge \mathrm{d}^4x - m^2 \, \delta_{ab}u^a \mathrm{d}u^b \wedge \mathrm{d}^4 x \,
\ee
%%%
and the corresponding  action functional \eqref{Eq: action functional field theory} 
\be
\ac_\chi \,=\, \int_\m \chi^\star \Theta_H \,=\, \int_\m \left[\, P^\mu_a \de_\mu \phi^a -\frac{1}{2} \left(\, P^\mu_a P^a_\mu + m^2 \phi^a \phi_a \,\right) \,\right] \dd^4 x \,.
\ee
%%%
The integral above is well-defined if we consider pairs $\chi$  for which all the local representative functions $\phi^a$, $P^\mu_a$ and $\de_\mu \phi^a$ (for any $a$ and $\mu$)   are square-integrable with respect to the reference measure on $\m$, that is
$\ac$ is well defined on sections 
$$\left(\,\prod_{a} {\mathcal{H}^1(\mm, vol_{\mm})}^a \times \prod_{\mu,\, a} {\mathcal{L}^2(\mm, vol_{\mm})}^a_\mu\,\right) \cap \Gamma(\pi) \times \Gamma(\delta^1_0),$$
%%%
but for reasons that will be clear in Sect. \ref{Subsec: The symplectic case: massive vector boson field}, we  restrict ourselves to the subset $\tilde{\Gamma}(\pi) \times \tilde{\Gamma}(\pi^1_0)\subset \Gamma(\pi) \times \Gamma(\delta^1_0)$ given by
$$
 \tilde{\Gamma}(\pi) \times \tilde{\Gamma}(\pi^1_0)  \,=\, \Gamma(\pi) \times \Gamma(\delta^1_0) \cap \left(\, \prod_{a} {\mathcal{H}^{2}(\mm, vol_{\mm})}^a \times \prod_{\mu,\, a} {\mathcal{H}^{1}(\mm, vol_{\mm})}^a_\mu \,\right) ,
$$
given by elements in $\Gamma(\pi) \times \Gamma(\delta^1_0)$ for which the  norm (given in terms of Sobolev norms)
\be
\Vert \chi \Vert \,=\, \sum_a \Vert \phi^a \Vert_{\mathcal{H}^{2}} + \sum_{\mu,\, a}\Vert P^a_\mu \Vert_{\mathcal{H}^{1}} \,, 
\ee
is finite.
%%%
The functional $\ac$ is continuous in $\Vert \,\cdot \,\Vert_{\mathscr{C}}$. 
%%%
Indeed the following inequalities
\be
\begin{split}
\vert \ac_\chi - \ac_{\tilde{\chi}} \vert
\leq \int_\m &\left\vert\, P^\mu_a \de_\mu \phi^a - \tilde{P}^\mu_a \de_\mu \tilde{\phi}^a - P^\mu_a P_\mu^a + \tilde{P}^\mu_a \tilde{P}^a_\mu \,\right\vert vol_\m \, \\
\,\leq\, \int_\m &\biggl[\, \left\vert\, P^\mu_a (\de_\mu \phi^a - \de_\mu \tilde{\phi}^a)
 \,\right\vert + \left\vert\, (P^\mu_a - \tilde{P}^\mu_a) \de_\mu \tilde{\phi}^a
 \,\right\vert  \\
 &\qquad + \left\vert\, P^\mu_a (P^a_\mu - \tilde{P}^a_\mu)
 \,\right\vert + \left\vert\, \tilde{P}^\mu_a (P^a_\mu - \tilde{P}^a_\mu)
 \,\right\vert \,\biggr] vol_\m  \,\\
 \,\leq\, & \sum_{\mu, a}\Vert\,P^\mu_a \,\Vert_{\mathcal{L}^2} \, \Vert\, \de_\mu (\phi^a - \tilde{\phi}^a)  \,\Vert_{\mathcal{L}^2} + \sum_{\mu, a} \Vert\, 
P^\mu_a - \tilde{P}^\mu_a \,\Vert_{\mathcal{L}^2} \, \Vert\, \de_\mu \phi^a \,\Vert_{\mathcal{L}^2}  \\
 &\qquad+  \sum_{\mu, a} \Vert\, P^\mu_a \,\Vert_{\mathcal{L}^2} \, \Vert\, 
P^\mu_a - \tilde{P}^\mu_a \,\Vert_{\mathcal{L}^2} + \sum_{\mu, a} \Vert\, \tilde{P}^\mu_a \,\Vert_{\mathcal{L}^2} \, \Vert\, 
P^\mu_a - \tilde{P}^\mu_a \,\Vert_{\mathcal{L}^2} \,,
\end{split}
\ee 
hold, with $\chi\,=\, (\phi,\, P)$ and $\tilde{\chi} \,=\, (\tilde{\phi},\, \tilde{P})$.
%%%
One sees  that the right hand side approaches zero when $\chi$ approaches $\tilde{\chi}$ with respect to the norm $\|\cdot\|$, and this   proves the continuity of $\ac$ which can then be extended to the completion of $\tilde{\Gamma}(\pi) \times \tilde{\Gamma}(\pi^1_0)$ with respect to $\Vert \,\cdot \,\Vert$. Such a completion is the set we denote by $\fpe$ and which coincides (by definition, see  \cite{Adams1975}) with 
\be
\fpe \,=\, \prod_{a} {\mathcal{H}^{2}(\mm, vol_{\mm})}^a \times \prod_{\mu,\, a} {\mathcal{H}^{1}(\mm, vol_{\mm})}^a_\mu \,.
\ee

\noindent In order to compute the Euler-Lagrange form, we begin upon considering the contraction of $\dd \Theta_H$ along a $\delta_1$-vertical vector field $X$
\be
i_X \mathrm{d}\Theta_H \,=\, X_{\rho^\mu_a} \, \mathrm{d}u^a \wedge i_\mu \mathrm{d}^4 x - X_{u^a} \, \mathrm{d}\rho_a^\mu \wedge i_\mu \mathrm{d}^4 x - \eta_{\mu \nu}\delta^{ab} \rho_a^\mu \, X_{\rho^\nu_b} \, \mathrm{d}^4 x - m^2 \delta_{ab} u^a \, X_{u^b} \, \mathrm{d}^4x \,, 
\ee
%%%
whose  pull-back via $\chi$ is 
\be
\begin{split}
\chi^\star \left[ \, i_X \mathrm{d}\Theta_H \, \right] &\,=\, \left[ \, X_{\rho^\mu_a}\bigr|_{\chi} \, \partial_\mu \phi^a - X_{u^a}\bigr|_{\chi} \partial_\mu P_a^\mu - \eta_{\mu \nu} \delta^{ab} P_a^\mu \, X_{\rho^\nu_b}\bigr|_{\chi} - m^2\delta_{ab} \phi^a \, X_{u^b}\bigr|_{\chi} \, \right] \, \mathrm{d}^4 x \\
&\qquad =\, \left[ \, ({\mathbb{X}_\chi})_{\rho^\mu_a} \, \partial_\mu \phi^a - ({\mathbb{X}_\chi})_{u^a} \partial_\mu P_a^\mu - \eta_{\mu \nu} \delta^{ab} P_a^\mu \, ({\mathbb{X}_\chi})_{\rho^\nu_b} - m^2\delta_{ab} \phi^a \, ({\mathbb{X}_\chi})_{u^b} \, \right] \, \mathrm{d}^4 x \,,
\end{split}
\ee
%%%
so that  the Euler-Lagrange can be written as
\be
\mathbb{EL}_\chi(\mathbb{X}_\chi) \,=\, \int_{\mathbb{R}^4} \, \left[ \, ({\mathbb{X}_\chi})_{\rho^\mu_a} \, \partial_\mu \phi^a - ({\mathbb{X}_\chi})_{u^a} \partial_\mu P_a^\mu - \eta_{\mu \nu} \delta^{ab} P_a^\mu \, ({\mathbb{X}_\chi})_{\rho^\nu_b} - m^2\delta_{ab} \phi^a \, ({\mathbb{X}_\chi})_{u^b} \, \right] \, \mathrm{d}^4 x \,,
\ee
whose zeroes satisfies the following set of Euler-Lagrange equations
\be \label{Eq: euler-lagrange equations vector boson field}
 \begin{split} &\partial_\mu \phi^a \,=\, \eta_{\mu \nu} \delta^{ab} P_b^\nu \, \\
 &\partial_\mu P_a^\mu \,=\, -\delta_{ab} m^2 \phi^b \, .
\end{split} \ee
%%%
which implies that the configuration fields satisfy the Klein-Gordon equation.
\end{example}
%%%
\begin{example}[\textit{Free electrodynamics}] \label{Ex: electrodynamics variational principle}
Another example we consider is the pde describing the evolution of a sourceless electromagnetic field, i.e. the  Maxwell's equations.
%%%
Such equations come as Yang-Mills equations for a $U(1)$ gauge theory. 
%%%
For the sake of simplicity we  consider electrodynamics in Minkowski space time $(\mathbb{M}^{1,3}, \eta)$, as in the previous example.
%%%
In a gauge field theory fields are represented by connection one-forms on a principal bundle.
%%%
In the electromagnetic case the structure group is the Abelian group $G = U(1)$, whereas the principal bundle is $P=\mathbb{M}^{1,3}\times G $, which is a trivial bundle. 
%%%
Under these assumptions, electromagnetic fields are represented by Lie Algebra-valued differential forms on $\mathbb{M}^{1,3}$, with Lie Algebra given by $i \,\mathbb{R}$, so that they are sections of the fibre bundle 
$$
\pi \,\colon\, \mathbf{T}^*\mathbb{M}^{1,3}=\mathbb{E}\,\rightarrow\,\mathbb{M}^{1,3},
$$
whose first jet bundle is 
$$
\pi_1\,\colon\,\mathbf{J}^1\mathbb{E} = \mathbf{J}^1\mathbf{T}^*\mathbb{M}^{1,3} \,\rightarrow\,\mathbb{M}^{1,3}.
$$
%%%%%
The $\pi_1$-bundle is trivial over $\mathbb{M}^{1,3}$,  with typical fibre given by  $\mathbf{T}^\star_m \mathbb{M}^{1,3} \times \left(\, \bigotimes^2 \mathbf{T}^\star_m \mathbb{M}^{1,3} \,\right)$. 
It also results that the covariant phase space is a trivial bundle  
\be
\delta_1\,\colon\,\mathcal{P}(\mathbb{E})\,\rightarrow\,\mathbb{M}^{1,3},
\label{17set01}
\ee 
whose typical fibre is $\mathbf{T}^\star_m \mathbb{M}^{1,3} \times \left(\, \bigotimes^2 \mathbf{T}_m \mathbb{M}^{1,3} \,\right)$. 
%%%
Sections of the fibration $\delta^1_0\,\colon\,\mathcal{P}(\mathbb{E})\,\to\,\mathbb{E}$ are contravariant tensors of rank $2$. 
%%%
In particular, in order to correctly describe the properties of gauge fields within the multisymplectic framework, one has to consider the covariant phase space as the subbundle  (whose total space we still denote by $\mathcal{P}(\mathbb{E})$, with a slight abuse of notation) of the $\delta_1$-fibration on the basis $\mathbb{M}^{1,3}$ given by restricting its fibers to antisymmetric (contravariant) tensors. 
This choice is made to correctly reproduce Maxwell equations as we are about to describe. 
An adapted fibered chart (which turns out to be global) is written as $(\mathcal{P}(\mathbb{E}),\, \psi_{\mathcal{P}(\mathbb{E})})$, 
$$
\psi_{\mathcal{P}(\mathbb{E})}(\mathfrak{p}) \,=\, (x^0,..,x^3, u_{1},...,u_3, \rho^{00},..., \rho^{33})\,=\,(x, u, \rho),
$$
with $\mathfrak{p}$  a point on $\mathcal{P}(\mathbb{E})$. 
%%%
We denote sections of $\pi$ and $\delta^1_0$ by $\chi \,=\, (A_\mu,\, P^{\mu \nu})$, with $P^{\mu\nu}=-P^{\nu\mu}$.
%%%
The Hamiltonian of the theory is the section $\mathrm{H}\,:\,\mathcal{P}(\mathbb{E}) \to \mathbf{J}^\dagger \pi$ that we write as 
\be
 \psi_{\mathcal{P}(\mathbb{E})}^{-1} (x,\, u,\, \rho) \mapsto \psi_{\mathbf{J}^\dagger\pi}^{-1}\left(\, x,\, u,\, \rho, \, \rho_0 = \frac{1}{2}\rho^{\mu\nu}\rho^{\alpha \beta} \eta_{\mu\alpha}\eta_{\nu \beta} \right) \,,
\ee
%%%
reading the canonical $4$-form on the covariant phase space
\be
\Theta_H \,=\, \rho^{\mu \nu} \dd u_\nu \wedge i_\mu \dd^4 x - \frac{1}{2}\rho^{\mu\nu}\rho^{\alpha \beta} \eta_{\mu\alpha}\eta_{\nu \beta} \dd^4 x \,
\ee
%%%
and the multisymplectic form
\be
\dd \Theta_H \,=\, \dd \rho^{\mu \nu} \wedge \dd u_\nu \wedge i_\mu \dd^4 x - \rho^{\mu \nu} \eta_{\mu \alpha} \eta_{\nu \beta} \dd \rho^{\alpha \beta} \wedge \dd^4 x \,.
\ee
%%%

\noindent In analogy to the previous example, the action functional $\ac$ can be defined via \eqref{Eq: action functional field theory} on the space of suitably regular sections of $\pi$ and $\delta^1_0$ then  extended by continuity as
\be 
\fpe \,=\,  \prod_{\mu=0,...,3} {\mathcal{H}^{2}(\mathbb{M}^{1,3}, vol_{\mathbb{M}^{1,3}})}_\mu \times \prod_{\mu,\nu = 0,...,3} {\mathcal{H}^{1}(\mm, vol_{\mm})}^{\mu \nu} \,.
\ee
The contraction of the Euler-Lagrange form along  a tangent vector $\mathbb{X}_\chi$ turns out to be
\begin{equation}
\mathbb{EL}_{\chi}(\mathbb{X}_\chi) = \int_{\mathbb{M}^{1,3}} \left[ \, ({\mathbb{X}_\chi})_{\rho^{\mu\nu}} \, F_{\mu\nu} - ({\mathbb{X}_\chi})_{{u}_\mu} \partial_{\nu} P^{\mu \nu} + \frac{1}{2} P_{\mu \nu} ({\mathbb{X}_\chi})_{\rho^{\mu \nu}} \, \right] \, \mathrm{d}^4 x \,,
\end{equation} 
with $F_{\mu \nu} \,=\, \frac{1}{2}(\partial_\mu A_\nu - \partial_\nu A_\mu)$.
%%%
From the previous  expression one derives the following set of equations of motion
\begin{equation} \label{Eq: euler-lagrange equations electrodynamics} \begin{split} 
   & F_{\mu \nu} + \frac{1}{2}P_{\mu \nu} = 0\,, \\ & \partial_{\nu} P^{\mu \nu} = 0\,. \end{split} 
\end{equation}
%%%
from which one notices that the antisymmetry of $P^{\mu\nu}$ is compatible with the dynamics. 
\end{example}

%%%%%%%%%%%%%%%%%%%%%%

%%%%%%%%%%%%%%%%%%%%%%

\section{The canonical structure on the solution space}

\label{Sec: The canonical structure on the solution space}

\noindent We devote this section to comment further on the boundary term emerging from the variational principle since it gives rise to a canonical $2$-form on the solution space $\elag_{\m}$ which in turn allows for the definition of a canonical bracket on the solution space.
%%%

\noindent We begin by recalling that  $\m$ is a smooth orientable manifold with a boundary $\partial\m$, which we assume to have a finite number of connected components, that we denote by $\de \m^j$, for $j =1,...,k$. Each component $\partial\m^j$ inherits an orientation from the one on $\m$ and the embedding $i_{_{\partial\m}}\,\colon\,\partial\m\,\hookrightarrow\,\m$. 
We denote by $\mathfrak{o}^j$ ($j=1,\ldots,k$) the sign of each orientation. It is then possible to write
the boundary term emerging in the variational principle as
\be \label{Eq: structure boundary various components (mechanics)}
\begin{split}
\left({\Pi^\star_{_{\partial \m}} \alpha^{\partial \m}}\right)_\chi \,:\,\mathbb{X}_{\chi}\,\mapsto\,&   \int_{\partial \m} \chi^\star_{_{\partial \m}} \left[\, i_X \Theta_{\mathrm{H}} \,\right]  \\ &\qquad =\,  \sum_{j=1,...,k} \mathfrak{o}^j \int_{\de \m^j} \chi^\star_{_{\partial \m^j}} \left[\, i_X \Theta_{\mathrm{H}} \,\right] \\ & \qquad = 
\, \sum_{j=1,...,k} \left(\mathfrak{o}^j {\Pi^\star_{_{\partial \m^j}} \alpha^{\partial \m^j}}\right)_\chi \,\mathbb{X}_{\chi}\, \,,
\end{split} 
\ee
where $\alpha^{\de \m^j}$ denote the differential forms on the space of dynamical fields $\fpe^{\de \m^j}$ restricted to the component $\de \m^j$ of the boundary (see the diagram \eqref{5.8.3}).
%%%
Its differential reads
\be \label{Eq: structure at the boundary}
\begin{split}
-\dd \bigl( \,{\Pi^\star_{_{\partial \m}} \alpha^{\partial \m}} \, \bigr) \,=\, -\,{\Pi^\star_{_{\partial \m}} \bigl( \,\dd \alpha^{\partial \m}} \, \bigr) \,&=\,{\Pi^\star_{_{\partial \m}}\Omega^{\partial \m}} \\ & \, = -\dd \left(\sum_{j=1,...,k} \mathfrak{o}^j \Pi_{_{\de \m^j}}^\star \alpha^{\de \m^j}\right)\,=\, - \sum_{j=1,...,k} \mathfrak{o}^j \Pi_{_{\de \m^j}}^\star \dd \alpha^{\de \m^j} \\ & \qquad\qquad\qquad\qquad \qquad\qquad\qquad\,=\, \sum_{j=1,...,k} \mathfrak{o}^j \Pi_{_{\de \m^j}}^\star \Omega^{\de \m^j} \,,
\end{split}
\ee
where $\Omega^{\de \m^j}$ denote  differential forms on the space of sections restricted to $\de \m^j$.
%%%

\noindent We recall that, within the Cartan exterior calculus on any Banach manifold, one has 
\be
\dd \alpha (X,\, Y) \,=\, i_X \dd [\,\alpha (Y)\,] - i_Y \dd [\,\alpha (X)\,] - i_{[X,\,Y]} \alpha \,,
\ee
for a 1-form $\alpha$ and any pair of vector fields $X,Y$. Since $\fpe$ has a Banach (infinite dimensional) manifold structure, this expression makes sense on it. Given the vector fields $\mathbb{X}$ and $\mathbb{Y}$ on $\fpe$ we can write 
\be
\dd \bigl( \, {\Pi^\star_{_{\partial \m}} \alpha^{\partial \m}} \, \bigr) \left(\, \mathbb{X},\, \mathbb{Y} \,\right) \,=\, i_{\mathbb{X}} \dd \, \left[ {\Pi^\star_{_{\partial \m}} \alpha^{\partial \m}} \left(\, \mathbb{Y}\,\right) \right] - i_{\mathbb{Y}} \dd \, \left[ {\Pi^\star_{_{\partial \m}} \alpha^{\partial \m}} \left(\, \mathbb{X} \,\right) \right] - i_{\left[ \mathbb{X},\, \mathbb{Y} \right]}\left[ {\Pi^\star_{_{\partial \m}} \alpha^{\partial \m}}\right] \,.
\ee
%%%
Recalling relation \eqref{Eq: structure boundary various components (mechanics)} and \eqref{Eq:gateaux derivative},  it is  
\be
\begin{split}
i_{\mathbb{X}_\chi}\dd \int_{\de \m} \chi^\star_{\de \m} \left[ i_Y \Theta_{\mathrm{H}} \right] \,&=\, i_{\mathbb{X}_\chi}\dd \int_{\m} \chi^\star \left[ \dd i_Y \Theta_{\mathrm{H}} \right] \\ & \qquad =\, \int_{\m} \chi^\star \left[ \mathfrak{L}_X \dd i_Y \Theta_{\mathrm{H}} \right] \,=\, \int_{\m} \chi^\star \left[ \dd i_X \dd i_Y \Theta_{\mathrm{H}} \right] \,=\, \int_{\de \m} \chi^\star_{\de \m} \left[ i_X \dd i_Y \Theta_{\mathrm{H}} \right] \,.
\end{split}
\ee
Upon reinserting the dependence on  $\chi\,\in\,\fpe$, we can write 
\be
\begin{split}
\left({\Pi_{_{\partial \m}}^\star \Omega^{\partial \m}}\right)_{\chi} \left(\, \mathbb{X}_{\chi},\, \mathbb{Y}_\chi \,\right) &\,=\, -\dd\left( {\Pi^\star_{_{\partial \m}} \alpha^{\partial \m}}\right)_{\chi} \left(\, \mathbb{X}_{\chi},\, \mathbb{Y}_\chi \,\right)  \\
&\,=\, -i_{\mathbb{X}_{\chi}} \dd \,\left( \int_{\partial \m}  \chi_{_{\de \m}}^\star \left[\,i_Y \Theta_{\mathrm{H}}\,\right]\right)  + i_{\mathbb{Y}_\chi} \dd \,\left( \int_{\partial \m} \chi_{_{\de \m}}^\star \left[\,i_X \Theta_{\mathrm{H}}\,\right]\right)  + i_{\left[\,\mathbb{X}_\chi, \, \mathbb{Y}_\chi \, \right]} ({\Pi^\star_{_{\partial \m}} \alpha^{\partial \m}})
  \\ 
&\,=\, -\int_{\partial \m} \chi_{_{\de \m}}^\star \bigl[\, i_X \, \dd \, i_Y \Theta_{\mathrm{H}}  - i_Y \, \dd \, i_Y \Theta_{\mathrm{H}} - i_{[X,\,Y]} \, \Theta_{\mathrm{H}} \,\bigr] \,,
\end{split}
\ee
where $X$ and $Y$ are vector fields on $\mathcal{P}(\mathbb{E})$ which represent $\mathbb{X}_{\chi}$ and $\mathbb{Y}_\chi$ in an open neighborhood of the image of $\chi$.
%%%
A straightforward application of the identity $i_{[X,\, Y]} \,=\, \left[\, \mathfrak{L}_X , i_Y \,\right]$ leads to:
\be
\left({\Pi^\star_{_{\partial \m}} \Omega^{\partial \m}}\right)_\chi \left(\, \mathbb{X}_{\chi},\, \mathbb{Y}_\chi \,\right) \,=\,-\int_{\partial \m} \chi_{_{\de \m}}^\star \left[\, i_Y \, i_X  \,\dd \Theta_{\mathrm{H}} \,\right] \, =\, \int_{\partial \m} \chi_{_{\de \m}}^\star \left[\, i_X \, i_Y  \,\dd \Theta_{\mathrm{H}} \,\right] \,.
\ee
%%%
The above picture can be generalised. 
Let $\Sigma$ be a $(n-1)$ dimensional \textit{slice} in $\m$, that is a codimension-1 submanifold $\Sigma\,\hookrightarrow\,\m$ such that $\m\setminus\Sigma$ is the disjoint union of two components (regions) $\m_+$ and $\m_-$. Consider a domain $U\,\subset\,\m$ which encloses $\Sigma$ and is equipped with a local chart $(U,\, \psi_U)$, $\psi_U(m) \,=\, (x^0,\, x^j)_{j=1,...,d}$ such that the embedding 
$$
i_\Sigma\,:\,\Sigma\,\hookrightarrow\,\m
$$
is given upon fixing (see \eqref{29.7.1}) the $x^0$ coordinate (which one can then see as transversal to $\Sigma$, in analogy to what we already described after the definition \ref{def:2.3}) and such that the volume reads $$vol_\m \,=\, \dd x^0 \wedge vol_\Sigma.$$
It is now straightforward to define the fibrations on the basis  $\Sigma$ which generalise  those on $\de\m$ as represented in \eqref{diazamp}, as well as the maps and the differential forms as in \eqref{5.8.3}, that we write as  
\be \label{sei.1}
\begin{tikzcd}
	\mathcal{F}_{\mathcal{P}(\mathbb{E})} \arrow[r, "\Pi_{\Sigma}"] & 
	\mathcal{F}_{\mathcal{P}(\mathbb{E})}^{\Sigma} \,\simeq\,\left(\mathcal{F}_{\mathcal{P}(\mathbb{E}),\,0}^{\Sigma} \times \mathscr{B}^{\Sigma} \right)\arrow[r, "\rho_{\Sigma}"] & \mathcal{F}_{\mathcal{P}(\mathbb{E}),\,0}^{\Sigma}\arrow[r, "\lambda"] & \mathbf{T}^\star \mathcal{F}_{\mathbb{E}}^{\Sigma} \\ \left\{\begin{array}{l}\Pi^\star_{\Sigma}\alpha^{\Sigma}, \\ \Pi^\star_{_\Sigma}\Omega^{\Sigma}\end{array}\right\} & \left\{\begin{array}{l} \alpha^{\Sigma}\,=\,\rho^\star_{\Sigma}\vartheta, \\ \Omega^{\Sigma}\,=\,-\dd\alpha^{\Sigma}\end{array}\right\} \arrow[l, "\Pi^\star_{\Sigma}"] &  \left\{\begin{array}{l} \vartheta^{\Sigma}\,=\,\lambda^\star_\Sigma\bar\vartheta, \\ \omega^{\Sigma}\,=\,-\dd\vartheta^{\Sigma}\end{array}\right\} \arrow[l, "\rho^\star_{\Sigma}"] & 	 \left\{\begin{array}{l} \bar\vartheta^{\Sigma}, \\ \bar\omega^{\Sigma}\,=\,-\dd\bar\vartheta^{\Sigma}\end{array}\right\}. \arrow[l, "\lambda^\star_{\Sigma}"] 
\end{tikzcd}
\ee
Such a chain starts on the cotangent bundle $\mathbf{T}^\star \mathcal{F}_{\mathbb{E}}^{\Sigma}$ where  one has, in analogy to \eqref{omega1} and \eqref{theta1}, the canonical 1-form
\be
\label{theta1s}
\bar{\vartheta}_{(\varphi, \varrho)}^{\Sigma}(\mathbb{X}_{(\varphi,\,\varrho)}) \,=\, \int_{\Sigma} vol_{\Sigma} \left(\, \varrho_a \, \mathbb{X}_{\varphi^a} \,\right)  \,,
\ee
and the canonical 2-form 
\be
\label{21ott-ale4}
\bar{\omega}_{(\varphi, \varrho)}^{\Sigma}\,=\,-\dd \bar{\vartheta}_{(\varphi, \varrho)}^{\Sigma},
\ee 
with
\be
\label{omega1s}
\bar{\omega}_{(\varphi,\, \varrho)}^{\Sigma}(\mathbb{X}_{(\varphi,\, \varrho)},\, \mathbb{Y}_{(\varphi,\, \varrho)}) \,=\, \int_{\Sigma} vol_{\Sigma} \left(\,\mathbb{X}_{\varphi^a} \,\mathbb{Y}_{\varrho_a} - \mathbb{Y}_{\varphi^a} \, \mathbb{X}_{\varrho_a} \,\right) \,,
\ee
where $\mathbb{X}_{(\varphi,\, \varrho)}$ and $\mathbb{Y}_{(\varphi,\, \varrho)}$ are elements in $\mathbf{T}_{(\varphi,\,\varrho)}(\mathbf{T}^\star \mathcal{F}_{\mathbb{E}}^{\Sigma})$, while  $\mathbb{X}_{\varphi^a}$, $\mathbb{X}_{\varrho_a}$ (resp. $\mathbb{Y}_{\varphi^a}$, $\mathbb{Y}_{\varrho_a}$) are their components with respect to the adopted chart. One writes, in analogy to  \eqref{15ott1}, that 
\be
\label{21ott-ale1}
\alpha^{\Sigma}\,=(\lambda\circ\rho)^\star\bar{\vartheta}^{\Sigma}\,=\,\rho_\Sigma^\star \,( \lambda^\star \bar{\vartheta}^{\Sigma}) \,=\, \rho_\Sigma^\star \, \vartheta^\Sigma 
\ee
%i.e. $\alpha^{\partial\m}$ is the pull-back of a $1$-form which is the potential of a non degenerate $2$-form.
%%%
with $-\dd \vartheta^\Sigma \,=\, \omega^\Sigma \,$
and
\be
\label{21ott-ale2}
-\dd (\Pi_{\Sigma}^\star \alpha^{\Sigma}) \,=\, \Pi_{\Sigma}^\star \Omega^{\Sigma} \,=\, \Pi_{\Sigma}^\star \, (\rho_\Sigma^\star \, (\lambda^\star \bar{\omega}^{\Sigma})) \,=\, \Pi_{\Sigma}^\star \,( \rho_\Sigma^\star \, \omega^\Sigma) \,.
\ee
%with $\rho_\Sigma$, $\bar{\vartheta}^\Sigma$, $\bar{\omega}^\Sigma$, $\vartheta^\Sigma$, $\omega^\Sigma$ being the  analogous of $\rho_{\partial \mathscr{M}}$, $\bar{\vartheta}^{\partial \mathscr{M}}$, $\bar{\omega}^{\partial \mathscr{M}}$, $\vartheta^{\partial \mathscr{M}}$, $\omega^{\partial \mathscr{M}}$ and where $\bar{\omega}^\Sigma$ is non-degenerate for the same reasons $\bar{\omega}^{\partial \mathscr{M}}$ is.

%%%

\noindent We can then write 
\be
\label{21ott-ale3}
{\Pi^\star_{\Sigma} \Omega^{\Sigma}}_\chi \left(\, \mathbb{X}_{\chi},\, \mathbb{Y}_\chi \,\right) \,=\, \int_{\Sigma} \chi_\Sigma^\star \left[\, i_X \, i_Y  \,\dd \Theta_{\mathrm{H}} \,\right] \,.
\ee
%%%
The aim of this section is to prove that the 2-form ${\Pi_\Sigma^\star \Omega^\Sigma}$ does not depend on the specific slice $\Sigma$ if evaluated at $\chi \in \elag_{\m}$, i.e. that ${\Pi_\Sigma^\star \Omega^\Sigma}$ is a canonical structure on the solution space of the considered variational problem. Our analysis begins by  proving the following claims.
%%%
\begin{proposition} \label{Prop: isotropic manifold}
$\elag_{\m}$ is an isotropic manifold for the $2$-form $\Pi^\star_{\partial \m}\Omega^{\partial \m}$, i.e. it vanishes on $\elag_{\m}$.
%%%
\begin{proof}
The action of the differential operator upon both sides of \eqref{Eq: first fundamental formula} gives, at each pont $\chi\,\in\,\fpe$
\be
\underbrace{\dd \, \dd}_{=0} \mathscr{S} \,=\, \dd (\mathbb{EL}) + \dd ({\Pi_{_{\partial \m}}^\star \alpha}^{\partial \m}) \,=\, \dd( \mathbb{EL}) - ({\Pi^\star_{_{\partial \m}} \Omega}^{\partial \m}) \,,
\ee
that is 
\be
\dd (\mathbb{EL})- {\Pi^\star_{_{\partial \m}}\Omega}^{\partial \m} \,=\, 0 \,.
\ee
%%%
Consider the pull-back of such a differential form to $\elag_{\m}$ via $\mathfrak{i}_{\elag_{\m}}$.
%%%
Since $\elag_{\m}$ is the space of zeroes of $\mathbb{EL}$ and the pull-back acts naturally with respect to the exterior differential $\dd$, it is  $$\mathfrak{i}_{\elag_{\m}}^\star(\dd \mathbb{EL}) \,=\, \dd( \mathfrak{i}_{\elag_{\m}}^\star \mathbb{EL}) \,=\, 0$$ and therefore
\be
\mathfrak{i}_{\elag_{\m}}^\star \,\left( {\Pi^\star_{_{\partial \m}}\Omega^{\partial \m}}\right)_\chi \,=\, 0 \,
\ee
\end{proof}
\end{proposition}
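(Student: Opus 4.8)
The plan is to combine the first fundamental formula \eqref{Eq: first fundamental formula} with the nilpotency $\dd\dd=0$ of the exterior differential on the Banach manifold $\fpe$. First I would apply $\dd$ to both sides of \eqref{Eq: first fundamental formula}, viewed as an identity of $1$-forms on $\fpe$. The left-hand side produces $\dd\dd\mathscr{S}=0$ because $\mathscr{S}$ is a function; recalling that the exterior differential commutes with pullback and that $\Omega^{\partial\m}=-\dd\alpha^{\partial\m}$ (see the diagram \eqref{5.8.3} and \eqref{Eq: structure at the boundary}), the right-hand side becomes $\dd(\mathbb{EL}) - \Pi^\star_{_{\partial\m}}\Omega^{\partial\m}$. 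I would thus obtain the identity of $2$-forms
\be
\dd(\mathbb{EL}) \,=\, \Pi^\star_{_{\partial\m}}\Omega^{\partial\m}
\ee
valid on all of $\fpe$.

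Next I would restrict this identity to the solution space by pulling it back along the smooth embedding $\mathfrak{i}_{\elag_\m}\colon\elag_\m\hookrightarrow\fpe$. The crucial point is that, by the very definition \eqref{eq:elag} of $\elag_\m$ as the zero locus of the Euler--Lagrange $1$-form, the form $\mathbb{EL}$ vanishes as a covector at every $\chi\in\elag_\m$; hence its pullback $\mathfrak{i}^\star_{\elag_\m}\mathbb{EL}$ is the zero $1$-form on $\elag_\m$. Using once more the naturality of $\dd$ with respect to pullback, $\mathfrak{i}^\star_{\elag_\m}(\dd\mathbb{EL})=\dd(\mathfrak{i}^\star_{\elag_\m}\mathbb{EL})=0$, and feeding this into the identity above gives $\mathfrak{i}^\star_{\elag_\m}(\Pi^\star_{_{\partial\m}}\Omega^{\partial\m})=0$, which is exactly the asserted isotropy.

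The argument is entirely formal once the two ingredients $\dd\dd=0$ and $\mathfrak{i}^\star\circ\dd=\dd\circ\mathfrak{i}^\star$ are available. The main (if mild) obstacle is therefore of a foundational nature: one must ensure these Cartan identities are legitimate in the infinite-dimensional setting, i.e.\ that $\fpe$ is a bona fide Banach manifold carrying a well-defined exterior calculus (as recalled earlier in the text) and that $\mathfrak{i}_{\elag_\m}$ is genuinely a smooth embedding --- the latter being an assumption stated in the excerpt and to be verified case by case in the examples. A small additional check is that $\mathbb{EL}_\chi$ vanishes as a \emph{full} covector, and not merely against a distinguished family of variations, so that $\mathfrak{i}^\star_{\elag_\m}\mathbb{EL}$ really is zero; this is guaranteed directly by the definition of $\elag_\m$ as the set of zeroes of $\mathbb{EL}$.
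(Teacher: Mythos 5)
Your argument is correct and coincides with the paper's own proof: differentiate the first fundamental formula \eqref{Eq: first fundamental formula}, use $\dd\dd\mathscr{S}=0$ together with $\Omega^{\partial\m}=-\dd\alpha^{\partial\m}$ to get $\dd(\mathbb{EL})=\Pi^\star_{_{\partial\m}}\Omega^{\partial\m}$ on $\fpe$, then pull back along $\mathfrak{i}_{\elag_\m}$ and use the naturality of $\dd$ and the vanishing of $\mathbb{EL}$ on its zero locus. Your closing remarks on the Banach-manifold exterior calculus and on $\mathbb{EL}_\chi$ vanishing as a full covector are exactly the standing assumptions the paper relies on, so nothing is missing.
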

if $\chi \in \elag_{\m}$.
\noindent In terms of the  \eqref{Eq: structure at the boundary}, the previous proposition gives
\be \label{Eq: sum structure at the boundary}
\sum_{j=1,...,k} \left(\mathfrak{o}^j {{\Pi^\star_{_{\de \m^j}}\Omega}^{\de \m^j}}\right)_\chi \,=\, 0
\ee 
for $\chi \in \elag_{\m}$.
%%%

\begin{proposition} \label{luca-prvp}
Consider a submanifold $\tilde{\m}$ open into $\m$ and having the same dimension of $\m$, and 
denote  by $\elag_{\tilde{\m}}$ the solution space related to the variational principle formulated on $\tilde{\m}$. 
%%%
There exists a one-to-one correspondence between restrictions to $\tilde{\m}$ of elements in $\elag_{\m}$ and elements in $\elag_{\tilde{\m}}$.
%%%
\begin{proof}
Consider the embedding $\mathfrak{i}_{_{\tilde\m}}\,:\,\tilde{\m}\,\hookrightarrow\,\m$.
%%%
Denote by $\pi_{_{\tilde\m}}\,\colon\,\tilde{\mathbb{E}}\,\to\,\tilde\m$ the pull-back bundle corresponding to $\pi\,\colon\,\mathbb{E}\,\to\,\m$ via the embedding 
$\mathfrak{i}_{_{\tilde\m}}$. 
%%%
The covariant phase space $\mathcal{P}(\tilde{\mathbb{E}})$ constructed over  $\tilde{\mathbb{E}}$ is embedded into $\mathcal{P}(\mathbb{E})$ via $\tilde{\mathfrak{i}}_{\mathcal{P}} \,=\, \mathfrak{i}_{_{\tilde\m}} \times \mathbbm{1}_{\mathcal{E}_{\delta_1}}$, with $\mathcal{E}_{\delta_1}$ denoting  the fibre of the fibration $\delta_1\,:\,\mathcal{P}(\mathbb{E})\,\to\,\m$. We have then the double fibration 
\be
\label{1.8.1}
\begin{tikzcd}
\mathcal{P}(\tilde{\mathbb{E}}) \arrow[r, "(\delta_{\tilde\m})^1_0"] & \tilde{\mathbb{E}} \arrow[r, "\pi_{_{\tilde\m}}"] & \tilde\m.
\end{tikzcd}
\ee
which gives the analogue to the right column from the diagram  \eqref{diazamp}. Notice moreover that, for the embedding $\mathfrak{i}_{\tilde\m}$, the left and the right columns out of that diagram represent fibrations which are diffeomorphic, i.e. one has $\mathcal{P}(\tilde{\mathbb{E}})\simeq\mathfrak{i}^\star_{\tilde\m}(\mathcal{P}(\mathbb{E}))$.
%%%
In analogy to $\chi\,:\,\m\,\to\,\mathcal{P}(\mathbb{E})$, we denote by $\tilde{\chi}\,:\,\tilde\m\,\to\,\mathcal{P}(\tilde{\mathbb{E}})$ a pair of sections of $\pi_{\tilde{\m}}$ and $(\delta_{\tilde{\m}})^1_0$. 
For any $\chi$ there exists a $\tilde{\chi}$ (its restriction $\chi \bigr|_{\tilde{\m}}$) such that the diagram
\be
\begin{tikzcd}
\mathcal{P}(\tilde{\mathbb{E}}) \arrow[r, "\tilde{\mathfrak{i}}_{\mathcal{P}}", hookrightarrow] & \mathcal{P}(\mathbb{E}) \\
\tilde{\m} \arrow[u, "\tilde{\chi}", dashed] \arrow[r, "\tilde{\mathfrak{i}}", hookrightarrow] & \m. \arrow[u, "\chi", dashed] 
\end{tikzcd} 
\ee
%%%
commutes.
%%%
The variational principle formulated on $\mathcal{P}(\tilde{\mathbb{E}})$ gives the following equations of motion:
\be
\tilde{\chi}^\star \left(\, i_{\tilde{X}} \, \tilde{\mathfrak{i}}_{\mathcal{P}}^\star \dd \Theta_{\mathrm{H}} \,\right) \,=\, 0
\ee
for any $\tilde{X} \in\mathfrak{X}(U^{(\tilde\chi)})$ which is $\tilde\delta_1=(\delta_{\tilde\m})^1_0\circ\pi_{\tilde\m}$-vertical and $(\delta_{\tilde\m})^1_0$-projectable.
The left hand side of the previous equation can be rewritten as:
\be
\tilde{\chi}^\star \left(\, i_{\tilde{X}} \, \tilde{\mathfrak{i}}_{\mathcal{P}}^\star \dd \Theta_{\mathrm{H}} \,\right) \,=\, \tilde{\chi}^\star \left(\, \tilde{\mathfrak{i}}_{\mathcal{P}}^\star \, i_{X}  \dd \Theta_{\mathrm{H}} \,\right) \,=\, \left( \tilde{\mathfrak{i}}_{\mathcal{P}} \circ \tilde{\chi} \right)^\star \left(\, i_{X}  \dd \Theta_{\mathrm{H}} \,\right) \,=\, \left( \chi \circ \tilde{\mathfrak{i}} \right)^\star \left(\, i_X \dd \Theta_{\mathrm{H}} \,\right) \,=\, \tilde{\mathfrak{i}}^\star \left[\, \chi^\star  \left(\, i_X \dd \Theta_{\mathrm{H}} \,\right) \, \right] \,,
\ee
for any $X$ which is $\tilde{\mathfrak{i}}_{\mathcal{P}}$-related to $\tilde{X}$ and where $\chi$ is one of the sections of $\delta_1$ that restrict to $\tilde{\chi}$.
%%%
The previous chain of equalities clearly shows that if $\chi$ is a solution for the variational problem on $\m$, that is  if $\chi^\star  \left(\, i_X \Omega_{\mathrm{H}} \,\right) \,=\, 0$, then its restriction $\tilde{\chi}$ is a solution of the variational principle on $\tilde{\m}$.
%%%
Moreover, if we admit that the equations of motion do not give rise to biforcations, any $\tilde{\chi}$ can not come from two different $\chi$, say $\chi_1$ and $\chi_2$, since in that case $\chi_1$ and $\chi_2$ should coincide on the whole open region $\tilde{\mathscr{M}}$.
%%%
Consequently the correspondence between $\tilde{\chi}$ and $\chi$ is one-to-one.
\end{proof}
\end{proposition}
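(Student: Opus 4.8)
The plan is to realise the asserted correspondence as the geometric restriction map $\chi\mapsto\tilde\chi=\chi|_{\tilde\m}$ and to establish its two defining properties: that it carries solutions to solutions, and that it is injective. First I would fix the geometric scaffolding. Writing $\mathfrak{i}_{_{\tilde\m}}\colon\tilde\m\hookrightarrow\m$ for the (open) inclusion, I would form the pullback bundle $\pi_{_{\tilde\m}}\colon\tilde{\mathbb{E}}\to\tilde\m$ together with its covariant phase space $\mathcal{P}(\tilde{\mathbb{E}})\simeq\mathfrak{i}^\star_{_{\tilde\m}}(\mathcal{P}(\mathbb{E}))$, and record the embedding $\tilde{\mathfrak{i}}_{\mathcal{P}}=\mathfrak{i}_{_{\tilde\m}}\times\mathbbm{1}_{\mathcal{E}_{\delta_1}}$. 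The restriction $\tilde\chi$ of a pair $\chi=(\phi,P)$ is then characterised by the commuting square $\tilde{\mathfrak{i}}_{\mathcal{P}}\circ\tilde\chi=\chi\circ\mathfrak{i}_{_{\tilde\m}}$, which is the only structural input needed in what follows.

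The well-definedness of the map is the formal half and rests on the naturality of the pullback. The variational condition defining $\elag_{\tilde\m}$ reads $\tilde\chi^\star(i_{\tilde X}\,\tilde{\mathfrak{i}}^\star_{\mathcal{P}}\dd\Theta_{\mathrm{H}})=0$ for every admissible variation $\tilde X$ on $\mathcal{P}(\tilde{\mathbb{E}})$. Because $\tilde\m$ is open in $\m$ and of the same dimension, $\mathfrak{i}_{_{\tilde\m}}$ is a local diffeomorphism onto its image, so every $\tilde\delta_1$-vertical, $(\delta_{\tilde\m})^1_0$-projectable field $\tilde X$ is $\tilde{\mathfrak{i}}_{\mathcal{P}}$-related to an admissible $X$ on $\mathcal{P}(\mathbb{E})$. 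I would then run the chain of identities
\be
\begin{split}
\tilde\chi^\star\!\left(i_{\tilde X}\,\tilde{\mathfrak{i}}^\star_{\mathcal{P}}\dd\Theta_{\mathrm{H}}\right)
&=\left(\tilde{\mathfrak{i}}_{\mathcal{P}}\circ\tilde\chi\right)^\star\!\left(i_X\dd\Theta_{\mathrm{H}}\right)
=\left(\chi\circ\mathfrak{i}_{_{\tilde\m}}\right)^\star\!\left(i_X\dd\Theta_{\mathrm{H}}\right) \\
&=\mathfrak{i}^\star_{_{\tilde\m}}\!\left[\chi^\star\!\left(i_X\dd\Theta_{\mathrm{H}}\right)\right],
\end{split}
\ee
so that $\chi\in\elag_{\m}$, i.e. $\chi^\star(i_X\dd\Theta_{\mathrm{H}})=0$, forces the right-hand side to vanish for all admissible $\tilde X$. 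Hence $\tilde\chi\in\elag_{\tilde\m}$, and restriction genuinely maps $\elag_{\m}$ into $\elag_{\tilde\m}$.

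The injectivity is the substantive half, and it is where I expect the real obstacle to lie. Suppose $\chi_1,\chi_2\in\elag_{\m}$ restrict to the same $\tilde\chi$ on $\tilde\m$; I must conclude $\chi_1=\chi_2$ on all of $\m$. This is not a formal consequence of the variational calculus but a \emph{unique continuation} statement for the De Donder--Weyl system \eqref{Eq: covariant hamilton equations}: two solutions coinciding on the open set $\tilde\m$ must coincide globally. The clean way to organise this is through the Cauchy-data description anticipated in the Introduction, namely the bijection $\Psi\colon\elag^\Sigma\to\elag_{\m}$ attached to a slice $\Sigma\subset\tilde\m$: since $\chi_1$ and $\chi_2$ agree on a neighbourhood of $\Sigma$ they carry identical Cauchy data, and the existence-and-uniqueness theorem then propagates the agreement to the whole of $\m$. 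Stated for a general first order Hamiltonian system this is exactly the ``no bifurcation'' hypothesis, which cannot be dispensed with; for the three models treated here it is supplied by the standard well-posedness of the free-particle equation, of the Klein--Gordon system, and of the sourceless Maxwell system. Granting it, the restriction map is an injection from $\elag_{\m}$ onto its image in $\elag_{\tilde\m}$, so that each restriction comes from a unique global solution and the correspondence $\chi\leftrightarrow\tilde\chi$ is one-to-one, as claimed.
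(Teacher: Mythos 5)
Your proposal follows essentially the same route as the paper's own proof: the same pullback-bundle scaffolding and chain of naturality identities to show that restrictions of solutions are solutions, and the same appeal to a uniqueness/no-bifurcation hypothesis for the dynamics to get injectivity of the restriction map (which the paper states as "the equations of motion do not give rise to bifurcations" and you phrase as unique continuation via the Cauchy-data bijection on a slice $\Sigma\subset\tilde{\mathscr{M}}$). Your explicit observation that every admissible variation $\tilde{X}$ on $\mathcal{P}(\tilde{\mathbb{E}})$ lifts to an $\tilde{\mathfrak{i}}_{\mathcal{P}}$-related admissible $X$, because $\tilde{\mathscr{M}}$ is open in $\mathscr{M}$, is a small but welcome point of rigor that the paper leaves implicit.
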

%%%
\noindent What we analysed above in this section allows to prove the following.
%%%
\begin{proposition}
The 2-form $\Pi_\Sigma^\star \Omega^\Sigma$ does not depend on the slice $\Sigma$ if evaluated on solutions of the equations of the motion.
\begin{proof}
Denote by $\mathscr{M}_{12}$ a submanifold embedded in $\m$  with an open interior fulfilling the conditions  of the previous Prop. \ref{luca-prvp}, such that its  boundary is given by two slices, that we denote by $\Sigma_1$ and $\Sigma_2$, whose orientations point towards the outside. 
%%%
A straightforward application of \eqref{Eq: sum structure at the boundary} and propositions \ref{Prop: isotropic manifold} with \ref{luca-prvp} gives:
\be \label{Eq: equivalence Sigma}
(\Pi_{\Sigma_1}^\star \Omega^{\Sigma^1})_\chi \,=\, (\Pi_{\Sigma_2}^\star \Omega^{\Sigma^2})_\chi \,,
\ee
if $\chi \in \elag_{\mathscr{M}_{12}}$.
%%%
Notice that any slice $\Sigma$ gives a set $S_\Sigma$ whose elements are all those slices such that the union with $\Sigma$ is the boundary of a region of $\m$ of the type of Prop. \ref{luca-prvp}.
%%%
The relation \eqref{Eq: equivalence Sigma} shows  that $\pssos$ does not vary on slices in $S_\Sigma$. 
%%%
The thesis follows upon noticing that for any couple of $\Sigma$'s belonging to different families, there always exists a third one belonging to both families.
%%%
\end{proof}
\end{proposition}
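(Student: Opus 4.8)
The plan is to compare the $2$-forms attached to two different slices by realising those slices as the two connected components of the boundary of a single region of $\m$, and then to exploit the isotropy of the solution space established in Proposition~\ref{Prop: isotropic manifold}.

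First I would fix two slices $\Sigma_1$ and $\Sigma_2$ and choose a submanifold $\m_{12}\subset\m$ with open interior, of the same dimension as $\m$, whose boundary is exactly $\partial\m_{12}=\Sigma_1\sqcup\Sigma_2$, with the two boundary orientations induced from $\m$ pointing outward. Such a region satisfies the hypotheses of Proposition~\ref{luca-prvp}, so that any $\chi\in\elag_\m$ restricts to a solution of the variational problem posed on $\m_{12}$; in particular its restriction lies in $\elag_{\m_{12}}$, and the analogue of the boundary-sum identity~\eqref{Eq: sum structure at the boundary} for $\m_{12}$ --- which is Proposition~\ref{Prop: isotropic manifold} applied to $\m_{12}$ --- holds for it.

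Next I would feed $\partial\m_{12}=\Sigma_1\sqcup\Sigma_2$ into that identity. Since the two induced orientations point outward while each slice carries the natural orientation fixed by the transversal coordinate $x^0$ through $vol_\m=\dd x^0\wedge vol_\Sigma$, the two orientation signs $\mathfrak{o}^1$ and $\mathfrak{o}^2$ are opposite. Hence the two-term sum reads
\[
\mathfrak{o}^1\,\bigl(\Pi_{\Sigma_1}^\star\Omega^{\Sigma_1}\bigr)_\chi+\mathfrak{o}^2\,\bigl(\Pi_{\Sigma_2}^\star\Omega^{\Sigma_2}\bigr)_\chi=0,
\]
which, after cancelling the relative sign, yields $\bigl(\Pi_{\Sigma_1}^\star\Omega^{\Sigma_1}\bigr)_\chi=\bigl(\Pi_{\Sigma_2}^\star\Omega^{\Sigma_2}\bigr)_\chi$ for every $\chi\in\elag_\m$. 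This settles slice-independence whenever $\Sigma_1$ and $\Sigma_2$ cobound a region of the admissible type.

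The remaining, and I expect most delicate, step is to upgrade this local statement to full independence from the slice. The identity above only compares the $2$-forms within the family $S_\Sigma$ of slices that, together with $\Sigma$, bound a region of the kind used in Proposition~\ref{luca-prvp}; two slices that do not directly cobound such a region are not immediately comparable. The plan is then to argue by transitivity: given any two slices, one exhibits a third slice cobounding admissible regions with each of them, so that the common value of $\Pi_\Sigma^\star\Omega^\Sigma$ propagates from the first to the second. The main obstacle is precisely to guarantee that such an intermediary always exists --- i.e.\ that the relation ``cobounding an admissible region'' generates an equivalence relation connecting all slices of $\m$ --- and this is where the global structure of $\m$, together with the defining property that a slice separates $\m$ into exactly two regions, must be invoked; it is the point that deserves the most careful justification.
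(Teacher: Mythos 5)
Your proposal follows essentially the same route as the paper: compare two slices by realising them as the outward-oriented boundary components of a region $\m_{12}$ of the type in Proposition~\ref{luca-prvp}, apply the isotropy identity \eqref{Eq: sum structure at the boundary} from Proposition~\ref{Prop: isotropic manifold} to get $(\Pi_{\Sigma_1}^\star\Omega^{\Sigma_1})_\chi=(\Pi_{\Sigma_2}^\star\Omega^{\Sigma_2})_\chi$, and then extend to arbitrary pairs of slices by interposing a third slice cobounding admissible regions with each. Your treatment is, if anything, slightly more explicit than the paper's on the relative orientation signs and on the fact that the final transitivity step (existence of the intermediary slice) is the point requiring justification, which the paper likewise asserts without detailed proof.
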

%%%

\section{Symplectic vs gauge theories}
\label{Sec: The symplectic case vs gauge theories}

\noindent The aim of this section is to show that the closed 2-form $\pssos$ on $\fpe$, which is  canonical in the sense that it does not depend on $\Sigma$ when restricted to the space of solutions $\elag_\m\subset\fpe$ of the variational problem, may be non degenerate (i.e. symplectic) or degenerate (i.e. pre-symplectic), depending on the specific field theory under analysis. 
We refer to the theories falling in the first case as \textit{symplectic theories}, and to theories falling into the second case as \textit{gauge theories}. 
%%%

\noindent We proceed as follows. We first prove that  the 2-form $\os$ on $\fpe^\Sigma$ is pre-symplectic.  Upon restricting the domain of the dynamics to a suitable collar $C^\Sigma_\epsilon\,\subseteq\,\m$ around $\Sigma$, it results that the corresponding space of   solutions $\elag_{C^\Sigma_\epsilon}$  is isomorphic 
 to the space of solutions of a pre-symplectic Hamiltonian system associated to  $\os$ on $\fpe^\Sigma$, whose integrability can be studied via the so called \textit{pre-symplectic constraint algorithm}  (see Appendix \ref{App: PCA}).
We conclude then this analysis by showing  that, on the restrictions to the collar $C^\Sigma_\epsilon$ of the fields $\fpe$,  the 2-form $\pssos$ corresponds to a 2-form  $\os_\infty$  which comes as the pull-back under a suitable smooth map of the 2-form $\os$ to the final stable manifold resulting from the pre-symplectic constraint algorithm.
%%%
It turns out that the latter may be symplectic or not and, consequently, $\pssos$ may be symplectic or not.
%%%
We start by considering the following
\begin{proposition}
The $2$-form $\omega^\Sigma$ is closed and non-degenerate (i.e. symplectic) on $\mathcal{F}^\Sigma_{\mathcal{P}(\mathbb{E}), \,0}$, while $\Omega^\Sigma$ is closed and degenerate (i.e. pre-symplectic) on $\mathcal{F}^\Sigma_{\mathcal{P}(\mathbb{E})}$.
\end{proposition}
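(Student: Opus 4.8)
The plan is to exploit the factorization encoded in diagram \eqref{sei.1}, in which $\omega^\Sigma$ and $\Omega^\Sigma$ arise by successive pullbacks of the canonical form $\bar\omega^\Sigma$ on the cotangent bundle $\mathbf{T}^\star \mathcal{F}_{\mathbb{E}}^{\Sigma}$. Closedness of both forms is immediate: each is by definition minus the exterior differential of a $1$-form (namely $\omega^\Sigma = -\dd\vartheta^\Sigma$ and $\Omega^\Sigma = -\dd\alpha^\Sigma$), hence exact and therefore closed, once one recalls that the Cartan calculus on these Banach manifolds is well defined and $\dd\circ\dd = 0$ holds. The substance of the statement is therefore the non-degeneracy assertions.

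For the non-degeneracy of $\omega^\Sigma$, I would first observe that, since $\vartheta^\Sigma = \lambda^\star\bar\vartheta^\Sigma$ and the exterior differential commutes with pullbacks, $\omega^\Sigma = -\dd\lambda^\star\bar\vartheta^\Sigma = \lambda^\star(-\dd\bar\vartheta^\Sigma) = \lambda^\star\bar\omega^\Sigma$. Because $\lambda$ is a continuous isomorphism, hence a diffeomorphism, it transports non-degeneracy faithfully, so it suffices to establish non-degeneracy of $\bar\omega^\Sigma$ on $\mathbf{T}^\star\mathcal{F}_{\mathbb{E}}^\Sigma$. This I would read off directly from the explicit expression \eqref{omega1s}: if $\bar\omega^\Sigma_{(\varphi,\varrho)}(\mathbb{X},\mathbb{Y}) = 0$ for all $\mathbb{Y}$, then testing against vectors $\mathbb{Y}$ supported only in the $\varrho$-directions forces $\int_\Sigma vol_\Sigma\, \mathbb{X}_{\varphi^a}\mathbb{Y}_{\varrho_a} = 0$ for every admissible $\mathbb{Y}_{\varrho_a}$, whence $\mathbb{X}_{\varphi^a} = 0$, and symmetrically testing against the $\varphi$-directions yields $\mathbb{X}_{\varrho_a} = 0$; thus $\mathbb{X} = 0$. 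This is the standard (weak) non-degeneracy of the canonical form on a cotangent bundle.

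For the degeneracy of $\Omega^\Sigma$, I would use $\Omega^\Sigma = -\dd\alpha^\Sigma = -\dd\rho_\Sigma^\star\vartheta^\Sigma = \rho_\Sigma^\star(-\dd\vartheta^\Sigma) = \rho_\Sigma^\star\omega^\Sigma$, so that $\Omega^\Sigma$ is the pullback of the symplectic form $\omega^\Sigma$ along the projection $\rho_\Sigma : \mathcal{F}_{\mathcal{P}(\mathbb{E})}^{\Sigma}\simeq \mathcal{F}_{\mathcal{P}(\mathbb{E}),\,0}^{\Sigma}\times\mathscr{B}^\Sigma \to \mathcal{F}_{\mathcal{P}(\mathbb{E}),\,0}^{\Sigma}$ onto the first factor. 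For any tangent vector $\mathbb{V}$ vertical with respect to $\rho_\Sigma$ (i.e. lying along the $\mathscr{B}^\Sigma$ factor, corresponding to the boundary momenta $\{P^j_a|_\Sigma\}$ with $j=1,\ldots,d$), one has $(\rho_\Sigma)_\star\mathbb{V}=0$, and therefore $\Omega^\Sigma(\mathbb{V},\mathbb{W}) = \omega^\Sigma\bigl((\rho_\Sigma)_\star\mathbb{V},(\rho_\Sigma)_\star\mathbb{W}\bigr) = 0$ for every $\mathbb{W}$. Since for a genuine field theory ($d\geq 1$) the factor $\mathscr{B}^\Sigma$ is non-trivial, the kernel of $\Omega^\Sigma$ is non-trivial, so $\Omega^\Sigma$ is degenerate; moreover, by the non-degeneracy of $\omega^\Sigma$ just established, this kernel coincides exactly with the vertical distribution $\ker(\rho_\Sigma)_\star$ along $\mathscr{B}^\Sigma$.

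The main obstacle I anticipate lies not in any individual computation but in pinning down the correct notion of non-degeneracy in infinite dimensions. In the Banach setting the canonical cotangent form is in general only \emph{weakly} symplectic, the induced map from the tangent to the cotangent space being injective but not necessarily surjective, so the argument above establishes injectivity of $\mathbb{X}\mapsto i_\mathbb{X}\bar\omega^\Sigma$ rather than a Banach-space isomorphism. One must therefore either adopt weak non-degeneracy as the working definition of \emph{symplectic}, or, for a strong statement, invoke the specific Sobolev completions used in the examples to identify $\mathbf{T}^\star\mathcal{F}_{\mathbb{E}}^{\Sigma}$ with a reflexive space carrying a perfect pairing. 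A secondary point to check is that $\lambda$ is a diffeomorphism with continuous inverse and that the splitting $\mathcal{F}_{\mathcal{P}(\mathbb{E})}^{\Sigma}\simeq\mathcal{F}_{\mathcal{P}(\mathbb{E}),\,0}^{\Sigma}\times\mathscr{B}^\Sigma$ is a genuine smooth product, so that $\rho_\Sigma$ is a submersion with the claimed vertical kernel.
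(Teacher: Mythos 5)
Your proposal is correct and follows essentially the same route as the paper, whose proof simply cites the chain of pullbacks in diagram \eqref{sei.1} together with the definitions \eqref{21ott-ale4}, \eqref{21ott-ale1} and \eqref{21ott-ale2}: $\omega^\Sigma$ is the pullback of the canonical cotangent-bundle form under the diffeomorphism $\lambda$, while $\Omega^\Sigma$ is its further pullback along the projection $\rho_\Sigma$, whose vertical ($\mathscr{B}^\Sigma$) directions span the kernel. Your expanded treatment, including the caveat about weak versus strong non-degeneracy and the $d\geq 1$ qualification, only makes explicit what the paper leaves implicit.
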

%%%
\begin{proof}
Consider the diagram \eqref{sei.1}. It is sufficient to recall the definition \eqref{21ott-ale4} of the symplectic  2-form $\bar{\omega}^\Sigma$ on $\mathbf{T^\star}\mathcal{F}_{\mathbb{E}}^\Sigma$, together with the relations \eqref{21ott-ale1} and \eqref{21ott-ale2}. 
\end{proof}
\noindent Closely to what we already introduced in the previous section, consider a hypersurface (a slice) $\mathfrak{i}_{\Sigma}\,\colon\,\Sigma\,\hookrightarrow\,\m$ and a coordinate chart   $(U,\, \psi_U)$, $\psi_U(m) \,=\, (x^0,\, x^j)_{j=1,...,d}$ around $\Sigma$ such that one has the local representation
\be
\label{5.8.1}
\mathfrak{i}_{\Sigma}\,\colon\,(x^1, \ldots, x^d)\,\mapsto\,(x^0=x^0_\Sigma, x^1, \ldots, x^d)
\ee
for the embedding, and the local representation 
$$
C^\Sigma_\epsilon \,=\, [x^0_\Sigma,\, x^0_\Sigma + \epsilon) \times \Sigma \,=\, \mathbb{I}^\Sigma_\epsilon \times \Sigma
$$
for a collar $C^\Sigma_\epsilon$ around $\Sigma$, with $\mathbb{I}_\epsilon^{\Sigma}=[x^0_\Sigma, x^0_\Sigma+\epsilon)$ for $\epsilon >0$.  It is clear that a local chart on $C_\epsilon^\Sigma$ can be given as the product of a chart $(U_{\mathbb{I}^\Sigma_\epsilon},\, \psi_{U_{\mathbb{I}^\Sigma_\epsilon}})$ with $\psi_{U_{\mathbb{I}^\Sigma_\epsilon}}(r) \,=\, s$  on $ \mathbb{I}^\Sigma_\epsilon $, times a chart $(U_\Sigma,\, \psi_{U_\Sigma})$, $\psi_\Sigma(\mathfrak{x}) \,=\, (x^1,\ldots, x^d)$ on $\Sigma$, with $\mathfrak{x} \in U_\Sigma\subseteq \Sigma$. The volume form reads 
$$
vol_{C^\Sigma_\epsilon} \,=\, \dd x^0 \wedge vol_\Sigma.
$$
As in the proof of the proposition \ref{luca-prvp}, one has the embedding 
$$
\mathfrak{i}_{C^\Sigma_\epsilon}\,:\,C^\Sigma_\epsilon\hookrightarrow\m,
$$
the  pullback bundle $\pi^{\Sigma}_\epsilon\,\colon\,\mathbb{E}^\Sigma_\epsilon\,\to\,C^\Sigma_\epsilon$ 
 with fibers diffeomorphic to those  in  $\pi\,\colon\,\mathbb{E}\,\to\,\m$, and the corresponding phase space as in the diagram \eqref{1.8.1}. A  set of suitably regular pair of sections of such a phase space on the collar $C^\Sigma_\epsilon$ is what we denote by    
$\fpe^{C^\Sigma_\epsilon}$ (whose elements we then  consider as the restrictions to the collar $C^\Sigma_\epsilon$ of the dynamical fields on $\m$), the restriction of the action functional $\ac$ to   $\fpe^{C^\Sigma_\epsilon}$ is what we denote by $\ac^\epsilon$, namely (see \eqref{Eq: action functional field theory})
$$
\ac^\epsilon\,=\,\int_{C^\Sigma_\epsilon}\chi^\star\Theta_H
$$
where $\chi\in\fpe^{C^\Sigma_\epsilon}$ and $\Theta_H$ is the restriction of the Poincar\'e-Cartan form to the phase space bundle on $C^\Sigma_\epsilon$. 
%%%

The space $\fpe^{C^\Sigma_\epsilon}$ turns to be isomorphic to the space of curves  $\Gamma \left(\fpe^{\Sigma}\right)$ on $\fpe^\Sigma$. In order to describe such isomorphism, we begin upon  recalling that elements $\chi\in\fpe^{C^\Sigma_\epsilon}$ can be locally written as  the product of maps
$$
\chi\,=\,(\phi^a, P^0_a, P^j_a)
$$
on the collar domain $C^\Sigma_\epsilon$, then we  
consider a local chart $(U_{\fpe^\Sigma}, \psi_{U_{\fpe^\Sigma}})$ on $\fpe^\Sigma$ such that  the maps  
\be
\label{6.8.2}
\psi_{U_{\fpe^\Sigma}}(\chi_{_{\Sigma}}) \,=\, (\varphi^1,...,\varphi^r,\, p_1,...,p_r,\, \beta^1_1,...,\beta^d_r) \,=\, (\varphi^a, p_a, \beta_a^j). 
\ee
on $\Sigma$ provide a local representation if  $\chi_{_{\Sigma}}=\chi\big|_\Sigma$ denotes a point in $U_{\fpe^\Sigma} \subset \fpe^\Sigma$.  In terms of such  local representations, the isomorphism  
\be
\label{sei.2}
\varpi\,:\,\fpe^{C^\Sigma_\epsilon}\,\to\,\Gamma(\fpe^\Sigma)
\ee
reads, with  
$$
\gamma_{(s)} \,=\, \left(\, \varphi_{(s)},\, p_{(s)},\, \beta_{(s)} \,\right)\,\in\,\Gamma(\fpe^\Sigma,
$$
the expression 
\be
\label{20ott-ale3}
\begin{split}
\varphi^a_{(s)} \left(\psi^{-1}_{U_\Sigma} (x^k)\right) &= \phi^a \left(\psi_{U_\epsilon}^{-1}(s),\, \psi^{-1}_{U_\Sigma}(x^k) \right) \, , \\
{p_a}_{(s)} \left(\psi_{U_\Sigma}^{-1}(x^k) \right) &= P^0_a \left( \psi_{U_\epsilon}^{-1}(s),\, \psi_{U_\Sigma}^{-1}(x^k) \right) \, , \\ 
{\beta_a^j}_{(s)} \left( \psi_{U_\Sigma}^{-1}(x^k) \right) &= P_a^j \left( \psi_{U_\epsilon}^{-1}(s),\,  \psi_{U_\Sigma}^{-1}(x^k) \right) \, ,
\end{split}
\ee
with $s \in \mathbb{I}_\epsilon^\Sigma$ and $\mathfrak{x}\in U_\Sigma\subseteq\Sigma$. 
The map $\varpi$ amounts to identify the coordinate transversal to the collar with the evolution parameter describing the curve on $\fpe^\Sigma$, so it turns to be a diffeomorphism.
%%%
Let $\gamma_{(s)}$ be an element  in $\Gamma \left(\fpe^{\Sigma}\right)$:   
the pull-back induced by $\varpi$ upon the action functional $\ac^\epsilon$ to $\Gamma \left(\fpe^{\Sigma}\right)$ reads
\begin{equation} \label{Eq: action functional boundary}
\begin{split}
	\tilde{\mathscr{S}}^{\epsilon}_\gamma = \left({\varpi^{-1}}^\star \ac^\epsilon \right)_\gamma &= \int_{x^0_\Sigma}^{x^0_\Sigma + \epsilon} \mathrm{d}s\int_{\Sigma} vol_\Sigma \left( {p_a}_{(s)} \dot{\varphi}^a_{(s)} + {\beta^k_a}_{(s)} \partial_k \varphi_{(s)}^a - H(\gamma_s) \right)  \\ &=\int_{x^0_\Sigma}^{x^0_\Sigma + \epsilon}\dd s \,\,\mathscr{L}(\gamma_{(s)},\, \dot{\gamma}_{(s)}) \,,
\end{split}
\end{equation}
where the dot denotes the derivative with respect to $s$. The second line introduces  a Lagrangian function $\mathscr{L}\,\colon\,\mathbf{T}(\fpe^\Sigma)\,\rightarrow \,\mathbb{R}$ which, evaluated along the curve $t\gamma_{(s)} \,=\, (\gamma_{(s)},\, \dot{\gamma}_{(s)})$ given by the tangent lift of $\gamma_{(s)}$, reads
\begin{equation}
\label{sei.3}
\begin{split}
\mathscr{L}\left( \varphi, v_\varphi, p, v_p, \beta, v_\beta \right)\mid_{t \gamma_{(s)}} &= \int_{\Sigma}  vol_\Sigma \left( {p_a}_{(s)} \dot{\varphi}^a_{(s)} + {\beta_a^k}_{(s)}\partial_k \varphi^a -H(\gamma_s) \right)  \\
& = \langle p , \dot{\varphi} \rangle + \langle \beta , \mathrm{d}_{\Sigma}\varphi  \rangle - \int_{\Sigma}  vol_{\Sigma}\,H(\varphi_{(s)}, p_{(s)}, \beta_{(s)} ) \\ & = \langle p , \dot{\varphi} \rangle - \mathcal{H}(\gamma_{(s)})\,, 
\end{split}
\end{equation}
where  $\bigl(U_{\mathbf{T}(\mathcal{F}_{\mathcal{P}(\mathbb{E})}^\Sigma )},\, \psi_{U_{\mathbf{T}(\mathcal{F}_{\mathcal{P}(\mathbb{E})}^\Sigma)}}\bigr)$ is a local chart on $\mathbf{T}(\mathcal{F}_{\mathcal{P}(\mathbb{E})}^\Sigma)$ with $(\chi_\Sigma,\, \dot{\chi}_\Sigma) \in U_{\mathbf{T}(\mathcal{F}_{\mathcal{P}(\mathbb{E})}^\Sigma)} \subset \mathbf{T}(\mathcal{F}_{\mathcal{P}(\mathbb{E})}^\Sigma)$ and 
$$
\psi_{U_{\mathbf{T}(\mathcal{F}_{\mathcal{P}(\mathbb{E})}^\Sigma)}}((\chi_\Sigma,\, \dot{\chi}_\Sigma)) \,=\, (\varphi, {v_\varphi}, p, {v_p}, \beta, v_\beta). 
$$ 
%%%
In the previous formula \eqref{sei.3} the operator  $\dd_\Sigma$ denotes the differential over the smooth $(n-1)$-dimensional slice $\Sigma$, the symbol $\langle \cdot, \cdot \rangle$ denotes both the integration over $\Sigma$ and the contraction over the internal degrees of freedom of the fields, whereas the functional $\mathcal{H}$ is
\be
\label{5.8.4}
\mathcal{H}(\gamma_{(s)}) = \langle \beta , \mathrm{d}_{\Sigma}\varphi  \rangle - \int_{\Sigma}vol_\Sigma \,H(\gamma_{(s)}),. 
\ee
%%%

\begin{remark}
Notice  that the Lagrangian $\lag$ introduced above allows to study the solutions of Hamilton-De Donder equations (which is a system of first order PDEs) close to a slice $\Sigma$ in terms of the usual Lagrangian formalism over the tangent bundle on a configuration manifold which, as in this case, may be infinite dimensional. 
%%%
We indeed invite the reader not to confuse it with the Lagrangian form (i.e. a suitable $n$-form on the first order jet bundle of $\pi\,\colon\,\mathbb{E}\,\to\,\m$) in terms of which a multi-symplectic Lagrangian formulation of first order field theories can be given (see, for instance \cite{Krupka2015-Variational_Geometry} and references therein).
%%%
\end{remark}

It is a direct consequence of proposition \ref{luca-prvp} that there is a bijection between the set $\elag_{C_{\epsilon}^\Sigma}\subset\fpe^{C_\epsilon^\Sigma}$  of the extrema of the action functional $\ac^\epsilon$ and the set  containing the  restrictions for  $x^0\,\in\, [x^0_\Sigma,\, x^0_\Sigma + \epsilon)$ of the extrema of the action functional $\ac$. Such a bijection comes naturally in terms of the pullback map $\mathfrak{i}^*_{C^\Sigma_\epsilon}$ between sections of the phase spaces on $C^\Sigma_\epsilon$ and on $\m$. 
%%%
The above discussion shows moreover  that the map $\varpi$ provides a  bijection between the set  $\elag_{C^\Sigma_\epsilon}$ of  extrema of $\ac^\epsilon$  and the set $\widetilde{\elag}^\epsilon\subset\Gamma(\fpe^\Sigma)$  of extrema of ${\varpi^{-1}}^\star \ac^\epsilon \,=\, \tilde{\ac}^\epsilon$.
%%%
It is possible to prove (\cite{Ibort-Spivak2017-Covariant_Hamiltonian_YangMills} theorem $3.1$) the following result, whose proof consists of a direct computation that we report for the sake of completeness.
%%%
\begin{proposition} \label{Prop: pre-symplectic system boundary}
Extrema $\widetilde{\elag}^\epsilon$ of the functional $\tilde{\mathscr{S}}^\epsilon$ are the solutions of the pre-symplectic Hamiltonian system $\left(\, \mathcal{F}_{\mathcal{P}(\mathbb{E})}^\Sigma,\, \Omega^\Sigma,\, \mathcal{H} \,\right)$.
\begin{proof}
Given the local representation of the embedding \eqref{5.8.1}, a vector field $\mathbb{X}\in\mathfrak{X}(\fpe^\Sigma)$ can be represented (along the coordinate chart given in \eqref{6.8.2}) in terms of a vector field (compare this with the vector field represented in \eqref{28.7.2})
\be
\label{5.8.5}
\mathbb{X}\,=\,\mathbb{X}_{\varphi^a}(\varphi)\frac{\delta}{\delta \varphi^a}\,+\,\mathbb{X}_{p_a}(\varphi, p, \beta)\frac{\delta}{\delta p_a}\,+\,\mathbb{X}_{\beta_a^j}(\varphi, p, \beta)\frac{\delta}{\delta \beta_a^j},
\ee
on $\mathcal{P}(\mathbb{E})$, where $x=(x^1, \ldots, x^d)$ denotes the coordinates of a point on $\Sigma$.
Solutions of the pre-symplectic system above are the integral curves of the vector field $\mathbb{\Gamma} \in \mathfrak{X}(\fpe^\Sigma)$ satisfying
\be
\Omega^\Sigma (\mathbb{\Gamma},\, \mathbb{X} ) \,=\, \dd \mathcal{H}( \mathbb{X} )  
\ee
for any $\mathbb{X} \in \mathfrak{X}(\fpe^\Sigma)$.
From the  definition of $\Omega^\Sigma$ (see \eqref{omega1s}-\eqref{21ott-ale2}) and $\mathcal{H}$ in \eqref{5.8.4}, the previous equation gives
\be \label{Eq: pre-symplectic system}
\int_\Sigma vol_\Sigma\,\left(\, \mathbb{\Gamma}_{\varphi^a} \, \mathbb{X}_{p_a} - \mathbb{\Gamma}_{p_a} \, \mathbb{X}_{\varphi^a} \,\right)  \,=\, \int_\Sigma vol_\Sigma  \left(\, \frac{\delta \mathcal{H}}{\delta {\varphi^a}} \, \mathbb{X}_{\varphi^a} + \frac{\delta \mathcal{H}}{\delta p_a}\, \mathbb{X}_{p_a} + \frac{\delta \mathcal{H}}{\delta \beta^j_a} \, \mathbb{X}_{\beta^{j_a}} \,\right)  \\ 
 %& \forall \,\, \mathbb{X}^a_\varphi,\, {\mathbb{X}_p}_a,\, {\mathbb{X}_\beta}^j_a \,,
\ee
with respect to the components, as in  \eqref{5.8.5}, of  the vector fields $\mathbb{\Gamma}$ and $\mathbb{X}$.
It results in 
\be \label{Eq: pre-symplectic system components}
\begin{split}
\mathbb{\Gamma}_{\varphi^a} &\,=\, \frac{d \varphi^a_{(s)}}{ds} \,=\, \frac{\delta \mathcal{H}}{\delta p_a} \,=\, - \frac{\partial H}{\partial \rho^0_a}(\varphi_{(s)},\, p_{(s)},\, \beta_{(s)}) \,,\\
{\mathbb{\Gamma}_{p_a}} &\,=\, \frac{d {p_a}_{(s)}}{d s}  \,=\,  -\frac{\delta \mathcal{H}}{\delta \varphi^a} \,=\, \frac{\partial {\beta_a^k}_{(s)}}{\partial x^k} + \frac{\partial H}{\partial u^a}(\varphi_{(s)},\, p_{(s)},\, \beta_{(s)}) \,, \\
0 &\,=\, \frac{\delta \mathcal{H}}{\delta \beta^j_a} \,=\, \frac{\partial \varphi^a}{\partial x^j} - \frac{\partial H}{\partial \rho^j_a}(\varphi_{(s)},\, p_{(s)},\, \beta_{(s)}) \,.
\end{split}
\ee
%%%
Upon  identifying the coordinate the parameter $s$ with $x^0$,  these equations formally coincide with the covariant Hamilton equations. Solutions of \eqref{Eq: pre-symplectic system components} are then  the extrema of $\tilde{\ac}^\epsilon$.
%%%
\end{proof}
\end{proposition}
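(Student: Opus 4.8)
The plan is to show that the two characterisations appearing in the statement—being an extremum of $\tilde{\mathscr{S}}^\epsilon$ and being a solution of the pre-symplectic system $(\fpe^\Sigma,\, \Omega^\Sigma,\, \mathcal{H})$—collapse onto one and the same system of equations for a curve $\gamma_{(s)}=(\varphi_{(s)}, p_{(s)}, \beta_{(s)})$ in $\fpe^\Sigma$, namely \eqref{Eq: pre-symplectic system components}. I would establish this by computing each side independently and identifying the resulting coordinate expressions.

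First I would extract the equations of the pre-symplectic system. By definition a solution is an integral curve of a vector field $\mathbb{\Gamma}\in\mathfrak{X}(\fpe^\Sigma)$ obeying $\Omega^\Sigma(\mathbb{\Gamma}, \mathbb{X})=\dd\mathcal{H}(\mathbb{X})$ for every $\mathbb{X}\in\mathfrak{X}(\fpe^\Sigma)$, as in \eqref{Eq: pre-symplectic system}. Inserting the explicit form \eqref{omega1s} of $\Omega^\Sigma$ on the left (recalling through \eqref{21ott-ale2} that $\Omega^\Sigma$ pairs only the $\varphi$- and $p$-components, so that the $\beta$-directions lie in its kernel) and the functional differential of $\mathcal{H}$ from \eqref{5.8.4} on the right, the arbitrariness of the components $\mathbb{X}_{\varphi^a}$, $\mathbb{X}_{p_a}$, $\mathbb{X}_{\beta^j_a}$ forces the three relations \eqref{Eq: pre-symplectic system components}. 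The first two fix the $\varphi$- and $p$-components of $\mathbb{\Gamma}$ in terms of the functional derivatives of $\mathcal{H}$, whereas the third—coming from the direction $\mathbb{X}_{\beta^j_a}$ that $\Omega^\Sigma$ annihilates—produces no evolution equation but rather the primary constraint $\delta\mathcal{H}/\delta\beta^j_a=0$. The step deserving care is the evaluation of $\delta\mathcal{H}/\delta\varphi^a$: since $\mathcal{H}$ contains the term $\langle\beta,\dd_\Sigma\varphi\rangle$, one must integrate by parts over $\Sigma$, and it is precisely this that generates the divergence term $\partial_k\beta^k_a$ appearing in the second equation.

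Second I would derive the Euler--Lagrange equations of the variational problem for $\tilde{\mathscr{S}}^\epsilon$ directly. Because the Lagrangian \eqref{sei.3} is affine in $\dot\varphi$ through $\langle p,\dot\varphi\rangle$ and is independent of $\dot p$ and $\dot\beta$, it is a degenerate, Hamilton-type Lagrangian; its first variation therefore splits into an evolution equation for $p$ arising from the $\varphi$-variation, an evolution equation for $\varphi$ arising from the $p$-variation, and a purely algebraic equation arising from the $\beta$-variation. Performing the standard computation of the Gateaux derivative of \eqref{Eq: action functional boundary} and retaining only the interior contributions, one recovers exactly the system \eqref{Eq: pre-symplectic system components}; upon identifying the curve parameter $s$ with the transversal coordinate $x^0$ these are the covariant Hamilton (De Donder--Weyl) equations, so the second computation reproduces the first, and the asserted equivalence follows.

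The main obstacle I anticipate is keeping the degeneracy of $\Omega^\Sigma$ consistent across the two routes: the $\beta$-directions are at once the kernel directions of the pre-symplectic form \emph{and} the velocities on which the Lagrangian fails to depend, so one must check that the undetermined $\beta$-component of $\mathbb{\Gamma}$ on the pre-symplectic side matches the absence of a dynamical equation for $\beta$ on the Lagrangian side, and that the single algebraic constraint is produced identically in both. The only genuine analytic subtlety is that the functional derivatives in \eqref{Eq: pre-symplectic system}--\eqref{Eq: pre-symplectic system components} be well defined on the Banach manifold $\fpe^\Sigma$, and in particular that the integration by parts over $\Sigma$ introduce no spurious contributions from $\partial\Sigma$; granting this, the remainder is a direct identification of coordinate expressions.
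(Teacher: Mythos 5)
Your proposal is correct, and its first half coincides with the paper's argument: both derive the system \eqref{Eq: pre-symplectic system components} by inserting the explicit expressions \eqref{omega1s} for $\os$ and \eqref{5.8.4} for $\mathcal{H}$ into $\os(\mathbb{\Gamma},\mathbb{X})=\dd\mathcal{H}(\mathbb{X})$, using the arbitrariness of $\mathbb{X}_{\varphi^a}$, $\mathbb{X}_{p_a}$, $\mathbb{X}_{\beta^j_a}$ and an integration by parts over $\Sigma$ to produce the divergence term $\partial_k\beta^k_a$ and the algebraic constraint $\delta\mathcal{H}/\delta\beta^j_a=0$ along the kernel directions. Where you diverge is in closing the equivalence. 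The paper does not vary $\tilde{\mathscr{S}}^\epsilon$ at all: it simply observes that, upon identifying $s$ with $x^0$, the system \eqref{Eq: pre-symplectic system components} formally coincides with the covariant Hamilton equations \eqref{Eq: covariant hamilton equations}, and then invokes the already-established chain of bijections (Proposition \ref{luca-prvp} together with the diffeomorphism $\varpi$) under which extrema of $\tilde{\ac}^\epsilon$ correspond precisely to solutions of those equations on the collar. You instead recompute the extremality condition directly, treating \eqref{Eq: action functional boundary} as a degenerate Lagrangian problem on $\Gamma(\fpe^\Sigma)$ -- affine in $\dot\varphi$, independent of $\dot p$ and $\dot\beta$ -- and match the resulting Euler--Lagrange system term by term. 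Your route is more self-contained and makes explicit why the $\beta$-directions are simultaneously the kernel of $\os$ and the non-dynamical velocities, which is a genuinely clarifying observation; its cost is an extra first-variation computation in which you must also discard the endpoint terms in $s$ at $x^0_\Sigma$ and $x^0_\Sigma+\epsilon$ arising from the integration by parts of $\langle p,\dot\varphi\rangle$ (your phrase ``retaining only the interior contributions'' covers this, but it is the one step worth writing out, since it is the collar analogue of the Schwinger--Weiss boundary term). The paper's route is shorter but leans on the earlier restriction result; either is acceptable.
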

\noindent We have then that extrema of $\ac^\epsilon$, i.e.  elements of the solution space restricted to the collar $C^\Sigma_\epsilon$, are in one-to-one correspondence (via $\varpi$) to solutions of the pre-symplectic system $(\fpe^\Sigma,\, \os,\, \mathcal{H})$.
%%%

\begin{remark} \label{Rem: globally hyperbolic space-times}
Notice  that  the analysis above is local, since all the results we proved  are valid within a collar $C_\epsilon^\Sigma$ whose width transversal to $\Sigma$ is $\epsilon>0$. 
%%%
The possibility of globalising it strongly depends on the particular space-time under consideration.
%%%
In particular in all the previous results the space $\elag_{C_\epsilon^\Sigma}$ can be replaced by  the whole $\elag_{\mathscr{M}}$ when $\mathscr{M}$ is homeomorphic to $\Sigma \times \mathbb{R}$ for a suitable codimension-one  $\Sigma$.
%%%
This is indeed the definition of a \textit{globally hyperbolic space-time}. All examples we consider fall into this case. 
\end{remark}
%%%

\noindent  The relation \eqref{Eq: pre-symplectic system} does not uniquely determine the vector field $\mathbb{\Gamma}$ since $\Omega^\Sigma$ has a non-trivial kernel, so one has to interpret it as a condition providing the components of a suitable family of vector fields on a suitable subset in $\fpe^\Sigma$. A way  to rigorously (from the geometrical point of view) deal with such a problem is described by  the so called \textit{pre-symplectic constraint algorithm}  (see \cite{Gotay-Nester-Hinds1978-DiracBergmann_constraints} where it was  introduced, and see the appendix \ref{App: PCA} where we  recall its formulation).
%%%
\noindent The final step of such algorithm provides in particular, for the pre-symplectic system $(\mathcal{F}^{\Sigma}_{\mathcal{P}(\mathbb{E})}, \Omega^{\Sigma},  \mathcal{H})$,  a suitable immersed subset (that this is a submanifold embedded into $\mathcal{F}^{\Sigma}_{\mathcal{P}(\mathbb{E})}$ is assumed along the development of the theory, and has indeed to be analysed case by case) 
\be
\label{sette.1}
\mathfrak{i}_\infty\,:\,\mathcal{F}_\infty\,\hookrightarrow\, \mathcal{F}^{\Sigma}_{\mathcal{P}(\mathbb{E})}
\ee
where the implicit  equation \eqref{Eq: pre-symplectic system} can be written in terms of (a family) of vector fields\footnote{With respect to the terminology from the Dirac's analysis on system with singular Hamiltonian, and more generally on implicit differential equations, the space $\mathcal{F}_\infty$ is usually referred to as the \textit{constraint} manifold.}. %So one has that the solution space (for the problem on the collar $C_\epsilon^\Sigma$) does not coincide with with the whole space of fields restricted to $\Sigma$, but with the elements (the allowed Cauchy data) in $\mathcal{F}_\infty$ \rosso{(sicuri? Se il sistema non e' simplettico, ad ogni elemento sul constraint manifold sono associati piu'campi vettoriali, quelli che differiscono per l'aggiunta di un campo hamiltoniano rispetto ai vincoli di prima classe)} . %That this is a submanifold embedded into $\mathcal{F}^{\Sigma}_{\mathcal{P}(\mathbb{E})}$ is assumed along the development of the theory, and has then to be analysed case by case. 
%%%

%\noindent Since the  solutions to the equations of the motion for any collar close to a hypersurface $\Sigma$ are in one-to-one correspondence with solutions of a pre-symplectic system on $\mathcal{F}_{\mathcal{P}(\mathbb{E})}^\Sigma$, one has  that the space of Cauchy data (and, thus, the solution space) does not coincide with the whole space of dynamical fields restricted to $\Sigma$.
%It eventually coincides with the  submanifold $\mathcal{M}_\infty$ since, as recalled in Appendix \ref{App: PCA}, points of $\mathcal{M}_\infty$ are in one-to-one correspondence with solutions of the pre-symplectic system.
%%%
%That $\mathcal{M}_\infty$ is a smooth immersed (\rosso{embedded?} \blu{Nel paper di Gotay usano "immersed" che è l'ipotesi minima per applicare l'algoritmo. Ovviamente il tutto si può fare a maggior ragione se è embedded. Negli esempi trattati noi otteniamo sempre sottospazi (chiusi) di spazi di Hilbert che sono sottovarietà immersed ma anche embedded. Quindi magari negli esempi potremmo fare attenzione (se non l'ho già fatto) a dire che in quel caso è anche embedded.}) submanifold of the original pre-symplectic manifold will be assumed in developing the general theory and must be verified case by case.
%%%

%\noindent We denote by $$\mathfrak{i}_\infty\,:\,\m_{\infty}\hookrightarrow\m$$ the immersion map (which turns to be an embedding for the examples we consider in this paper) 
\noindent Two instances may  occur at this stage.
%%%
The first occurs when 
\be
\label{sette.2}
\Omega^\Sigma_\infty \,=\, \mathfrak{i}_\infty^\star \os
\ee
is non-degenerate.
%%%
We refer to theories falling into this case as \textit{symplectic} theories.
%%%
The second occurs when  $(\mathcal{F}_\infty, \os_\infty)$ is still a pre-symplectic manifold.
%%%
We refer to theories falling into this case as   \textit{gauge} theories.
%%%

\noindent  Since $\mathcal{F}_{\infty}$ is in general an infinite dimensional Banach manifold, we say $(\mathcal{F}_\infty, \os_\infty)$ to be {\it weakly} symplectic if $\os_\infty$ is non-degenerate, and to be {\it strongly} symplectic if 
$$
\mathbf{T}_{m_\infty}\mathcal{F}_\infty \simeq \mathbf{T}^\star_{m_\infty}\mathcal{F}_\infty
$$
for any $m_\infty \in \mathcal{F}_\infty$.
From now on, within the symplectic case, we assume  $(\mathcal{F}_\infty, \os_{\infty})$ to be strongly symplectic so that for any function on $\mathcal{F}_\infty$, the corresponding Hamiltonian vector field with respect to $\os_\infty$ is both uniquely and globally (that is on the whole $\mathcal{F}_\infty$) defined. Such assumption is indeed verified for the examples we already introduced and we will further develop in the following sections, since in these cases it turns out that the set
 $\mathcal{F}_\infty$ is a Hilbert manifold for which the isomorphism above is canonically given by the Hermitian structure of the model Hilbert space.% So one has that, in this case, the solution space (for the problem on the collar $C_\epsilon^\Sigma$) does not coincide with with the whole space of fields restricted to $\Sigma$, but with the elements (the allowed Cauchy data) in $\mathcal{F}_\infty$
%%%
%However, we stress that in general, if $\mathcal{M}_\infty$ is only weakly symplectic, Hamiltonian vector fields may not be globally defined, or they may be defined only on a subset in $\mathcal{M}_\infty$.
%%%
%In this sense, a case by case analysis must be performed. 
%%%
 
\noindent Under the assumption that $(\mathcal{F}_\infty,\os_\infty)$ is strongly symplectic, the solution space $\elag_{C^\Sigma_\epsilon}$ for the problem on the collar $C_\epsilon^\Sigma$ does not coincide with the whole space of fields restricted to $\Sigma$, but with the elements in $\mathcal{F}_\infty$, since the  solutions of the original pre-symplectic system $(\mathcal{F}^{\Sigma}_{\mathcal{P}(\mathbb{E})}, \Omega^{\Sigma},  \mathcal{H})$ are given upon immersing (under the action of the map $\mathfrak{i}_\infty$ defined in \eqref{sette.1})  into $\fpe^\Sigma$  the integral curves of the vector field $\mathbb{\Gamma}_\infty$ on $\mathcal{F}_\infty$ defined, with  $\mathcal{H}_\infty \,=\, \mathfrak{i}_\infty^\star \, \mathcal{H}$,  by 
\be \label{Eq: pre-symplectic system minfty}
i_{\mathbb{\Gamma}_\infty}\Omega_\infty^\Sigma  \,=\, \dd \mathcal{H}_\infty. 
\ee
%(with  $\mathcal{H}_\infty \,=\, \mathfrak{i}_\infty^\star \, \mathcal{H}$). 
%%%
Since $\Omega_\infty^\Sigma$ is strongly symplectic, for each point $m_\infty \in \mathcal{F}_\infty$ there exists a unique such integral curve and therefore a unique solution in $\widetilde{\elag}^\epsilon$, so that $\mathcal{F}_\infty$ can be identified with the set of allowed Cauchy data for the problem.  
%%%
%In this sense, the space of Cauchy data is actually the constraint submanifold $\mathcal{M}_\infty$.
%%%
Such a  unique solution associated to the Cauchy datum $m_\infty$ allows to define a map $\Psi\,:\,\mathcal{F}_\infty\,\to\,\widetilde{\elag}^\epsilon$ as  
\be
\label{20ott-ale1}
\gamma_{(s)}\,=\,\Psi(m_\infty)\,=\,\mathfrak{i}_\infty \circ F_s^{\mathbb{\Gamma}_\infty} (m_\infty)\,\in\,\widetilde{\elag}^\epsilon\,\subset\,\Gamma(\fpe^\Sigma)
\ee 
where $F^{\mathbb{\Gamma}_\infty}$ is the flow generated by $\mathbb{\Gamma}_\infty$ with evolution parameter $s\in[0,\epsilon)$, and to have 
\be
\label{20ott-ale2}
\chi_{_{m_\infty}} \,=\,\varpi^{-1}\circ\Psi(m_\infty)\,\in\,\elag_{C^\Sigma_\epsilon}\,\subset\,\fpe^{C^\Sigma_\epsilon}
\ee
in terms of the identification $\varpi$ defined in \eqref{20ott-ale3}, where the subscript $m_\infty$ in $\chi_{_{m_\infty}}$ stresses that $\chi_{_{m_\infty}}$ is the solution associated to the Cauchy datum $m_\infty$. 
%%%

\noindent The map $\Psi$ turns to be bijective and suitably regular.
%%%
It is injective by virtue of the existence and uniqueness theorem for the solutions of the equations corresponding to the vector field $\mathbb{\Gamma}_\infty$ that is uniquely determined by the relation \eqref{Eq: pre-symplectic system minfty} on $\mathcal{F}_\infty$ since $\os_\infty$ is strongly symplectic. The non degeneracy of the 2-form $\os_\infty$ allows, within the formalism of the pre-symplectic constraint algorithm (whose details we recall in the appendix \ref{App: PCA}) also to prove that the range of $\Psi$ coincides with the whole  $\widetilde{\elag}^\epsilon$.  Since the embedding \eqref{sette.1} is assumed to be smooth, the regularity of the map $\Psi$ coincides with the regularity of the flow $F_s^{\mathbb{\Gamma}_\infty}$, which depends on the regularity of the vector field $\mathbb{\Gamma}_\infty$.

\noindent 
It is clear that the inverse of $\Psi$  maps then any curve $\gamma_{(s)}\in\widetilde\elag^\epsilon$ into its Cauchy datum. We represent its action as 
\be 
\Psi^{-1}(\gamma_{(s)}) \,=\, {F_s^{\mathbb{\Gamma}_\infty}}^{-1} \left( \mathfrak{i}^{-1}_\infty \gamma_{(s)}\right):
\ee 
the map $s\in[0,\epsilon)\,\mapsto\,\mathfrak{i}^{-1}_\infty\gamma_{(s)}$  denotes  the unique curve on $\mathcal{F}_\infty$ whose image under the embedding $\mathfrak{i}_\infty$ gives the curve $\gamma_{(s)}\in\widetilde{\elag}^\epsilon$ (that is the curve $\mathfrak{i}^{-1}_\infty\gamma_{(s)}$ is an integral curve of the Hamiltonian vector field $\mathbb{\Gamma}_\infty$ on $\mathcal{F}_\infty$ given in \eqref{Eq: pre-symplectic system minfty} since $\os_\infty$ is symplectic); for any given $s\in[0,\epsilon)$, the action of  ${F_s^{\mathbb{\Gamma}_\infty}}^{-1}$ evolves the point $\mathfrak{i}^{-1}_\infty\gamma_{(s)}$ along the curve backward to the initial $s=0$ point $m_{\infty}\in\mathcal{F}_\infty$. 
%%%

\noindent The map whose action is defined in \eqref{20ott-ale2} provides   a one-to-one smooth correspondence between elements  $m_\infty \in \mathcal{F}_\infty$ and elements in  $\elag_{C^\Sigma_\epsilon}$, that is the space of solutions of the variational problem restricted to the collar\footnote{Eventually, if regularity conditions mentioned in remark \ref{Rem: globally hyperbolic space-times} are met, these solutions turn to be global, that is defined on the whole $\m$.  } $C^\Sigma_\epsilon$. Such correspondence coincides with the restriction to $\elag_{C_\epsilon^\Sigma}$ of the action of the map $\Pi_\Sigma\,:\,\mathcal{F}^{C^\epsilon_\Sigma}_{\mathcal{P}(\mathbb{E})}\,\to\,\mathcal{F}^\Sigma_{\mathcal{P}(\mathbb{E})}$ 
defined in  \eqref{sei.1}, so that \textit{for such restriction} we can write 
\be
\label{21ott-ale6}
\Pi_{\Sigma}\,=\,\mathfrak{i}_\infty \circ \Psi^{-1} \circ \varpi,
\ee 
along with the diagram
%Eventually, if regularity conditions mentioned in remark \ref{Rem: globally hyperbolic space-times} are met, these solutions turn to be global. 
%%%
%What we have described can be depicted as in the following diagram. 
\be
\label{dia1}
\begin{tikzcd}
\widetilde{\elag}^\epsilon \arrow[r, "\varpi^{-1}"] & \elag_{C^\Sigma_\epsilon} \arrow[dl, "\Pi_\Sigma", swap] \\
\mathcal{F}_{\mathcal{P}(\mathbb{E})}^\Sigma %\arrow[r, "\rho_\Sigma"] & \mathcal{F}^\Sigma_{\mathcal{P}(\mathbb{E}),\, 0}
 \\
\mathcal{F}_\infty \arrow[u, "\mathfrak{i}_\infty", hookrightarrow] \arrow[uu, bend left = 80, "\Psi", dashed] 
\end{tikzcd}
\ee
%%%
%The diagram also shows the restriction $\tilde{\Pi}_\Sigma$ to the space $\elag^\epsilon_\m$ of the solutions of the variational problem of the map  $\Pi_\Sigma\,:\,\mathcal{F}^{C^\epsilon_\Sigma}_{\mathcal{P}(\mathbb{E})}\,\to\,\mathcal{F}^\Sigma_{\mathcal{P}(\mathbb{E})}$ 
%defined in  \eqref{Eq: projections to sigma (fields)}, for which we can write 
%%%
We can then write, for the restriction to $\elag_{C^\Sigma_\epsilon}$ of the 2-form $\Pi_\Sigma\Omega^\Sigma$ (a restriction that is canonical, i.e. it does not depend on $\Sigma$, as  analysed in the previous section, see \eqref{21ott-ale3}),  the relation 
\be
\label{sette.3}
\Pi_\Sigma^\star \Omega^\Sigma \,=\, \left(\, \mathfrak{i}_\infty \circ \Psi^{-1} \circ \varpi \,\right)^\star \Omega^\Sigma \,=\, \left(\, \Psi^{-1} \circ \varpi  \,\right)^\star \,  \Omega_\infty^\Sigma.
\ee
This shows that the restriction to $\elag_{C^\Sigma_\epsilon}$ of the 2-form  $\Pi_\Sigma^\star \Omega^\Sigma$ is the pull-back via $\Psi^{-1} \circ \varpi$ of the strongly symplectic structure $ \Omega_\infty^\Sigma$ on $\mathcal{F}_\infty$.
%%%
Since $\Psi$ and $\varpi$ are diffeomorphisms (and then the restriction of $\Pi_\Sigma$ to $\elag_\m^\epsilon$ written in \eqref{21ott-ale6} is a diffeomorphism as well), restricting $\Pi_\Sigma^\star \Omega^\Sigma$ to the solution space provides a strongly symplectic 2-form, corresponding via a diffeomorphism to the strongly symplectic 2-form $\os_\infty$ on $\mathcal{F}_\infty$.
%%%

\noindent When $(\mathcal{F}_\infty, \Omega^\Sigma_\infty)$ is pre-symplectic, the manifold  $\mathcal{F}_\infty$ is no longer diffeomorphic to  $\widetilde{\elag}^{\epsilon}$, since the vector field $\mathbb{\Gamma}_\infty$ (and then the map $\Psi$) implicitly defined in \eqref{Eq: pre-symplectic system minfty} is no longer uniquely determined.
%%%
%This is because, being $\Omega^\Sigma_\infty$ pre-symplectic, the vector field $\mathbb{\Gamma}_\infty$ satisfying \eqref{Eq: pre-symplectic system minfty} can not be uniquely determined and, consequently, so is for the map $\Psi$.
%%%
Notice that, under such conditions, if the 2-form $\Omega_\infty^\Sigma$ has constant rank, then the quotient space given by 
\be 
\label{22ott-ale1}
\pi_\infty\,:\,\mathcal{F}_\infty\,\to\,\mathcal{R}_\infty\,=\,\mathcal{F}_\infty/\mathrm{ker}\,\Omega^\Sigma_\infty
\ee
has a manifold structure. On such a quotient space the 2-form $\tilde{\os}_\infty$ with 
$$
\Omega_\infty^\Sigma \,=\, \pi_\infty^\star \tilde{\Omega}^\Sigma_\infty
$$
is well defined so that $(\mathcal{R}_\infty, \tilde\Omega^\Sigma_\infty)$ turns to be a symplectic manifold, on which a Hamiltonian  $\tilde{\mathcal{H}}_\infty $ with  
$$
\mathcal{H}_\infty\,=\, \pi_\infty^\star \tilde{\mathcal{H}}_\infty
$$
exists. 
%%%
Along the lines described in the previous symplectic case, we write in such a pre-symplectic case
\be
\label{dia2}
\begin{tikzcd}
\widetilde{\elag}^\epsilon \arrow[r, "\varpi^{-1}"] & \elag_{C^\Sigma_\epsilon } \arrow[dl, "\Pi_\Sigma", swap] \\
\mathcal{F}^\Sigma_{\mathcal{P}(\mathbb{E})} % \arrow[r, "\rho_\Sigma"] & \mathcal{F}^\Sigma_{\mathcal{P}(\mathbb{E}),\, 0} 
\\
\mathcal{F}_\infty \arrow[d, "\pi_\infty"] \arrow[u, "\mathfrak{i}_\infty", hookrightarrow]  \\
\mathcal{R}_\infty. 
\end{tikzcd}
\ee
%%%
Elements of the quotient manifold $\mathcal{R}_\infty$ are equivalence classes of fields, and explicitly writing differential forms on such a quotient is rather involved. For this reason, we mimic  a procedure described in  \cite{Dubrovin-Giord-Marmo-Sim1993-Poisson_presymplectic}, which allows to consider suitable differential forms on $\mathcal{F}_\infty$. It is as follows.
%%%

\noindent Since the 2-form $\Omega^\Sigma_{\infty}$ on $\mathcal{F}_\infty$ has a non trivial kernel, Hamiltonian vector fields associated to elements in $\mathscr{F}(\mathcal{F}_\infty)$, i.e. functions on $\mathcal{F}_\infty$, are not uniquely determined.
Such a correspondence can be made unique via introducing a \textit{connection} for the fibration $\pi_{\infty}$ in \eqref{22ott-ale1}, that is an operator
$$
\mathcal{P} \,\colon\,\mathfrak{X}({\mathcal{F}}_{\infty}) \,\rightarrow \, \mathfrak{X}({\mathcal{F}}_{\infty})
$$
which is (left) linear with respect to the (left) $\mathscr{F}(\mathcal{F}_\infty)$-module structure of the set of vector fields on $\mathcal{F}_\infty$, idempotent (of order 2) and with a range given by the  (integrable) distribution $\mathrm{Ker}\,\Omega^\Sigma_{\infty}$. 
 The distribution $\mathrm{ker}\,\mathcal{P}$ defines the  so called \textit{horizontal} vector fields corresponding to the given connection. 
Among the vector fields $\mathbb{\Gamma_\infty}$ satisfying
\begin{equation}
\label{22ott-ale4}
i_{\mathbb{\Gamma}_\infty}\Omega^\Sigma_{\infty}\, = \,\mathrm{d}\mathcal{H}_{\infty},
\end{equation}
a connection allows to uniquely fix the vector field $\mathbb{\Gamma_\infty}$ satisfying
\begin{equation}\label{projection compatibility}
 \mathcal{P}(\mathbb{\Gamma_\infty}) = 0 \,,
\end{equation}  
namely the one which is horizontal with respect to the chosen connection. When the connection is flat, it turns out to be integrable (see, for instance \cite{Dubrovin-Giord-Marmo-Sim1993-Poisson_presymplectic}), i.e. the distribution of horizontal vector fields it generates is completely integrable.
%%%
The integral manifolds of such a distribution give rise to a foliation of $\mathcal{F}_\infty$ whose leaves are transversal to the fibers of $\pi_\infty$, and then each of these leaves amounts to a section $\sigma_{\mathcal{P}}\,:\,\mathcal{R}_{\infty}\,\to\,\mathcal{F}_\infty$ of the fibration $\pi_\infty$.
%%%
It turns out in this case that the 2-form $\os_\infty$ is strongly symplectic along the leaves of $\mathcal{P}$, i.e. it defines a bijection between the set of horizontal tangent vectors associated to the connection $\mathcal{P}$ and its dual. 
%%%

\begin{remark}
We notice that in this paper we consider examples of abelian gauge theories (namely electrodynamics) for which the flatness condition is valid. We are not considering  the same class of examples for which the construction given in \cite{Khavkine2014-Covariant_phase_space} can not be performed globally on the space of fields. The formalism we describe in this paper can indeed be extended to the cases of non-abelian gauge theories for which such a  flatness condition is not valid. This is the topic of the second part of our work, based on a suitable application of the coisotropic embedding theorem along the lines of \cite{Ciaglia-DC-Ibort-Marmo-Schiav-Zamp-2022-Symmetry}, where a momentum map within the pre-symplectic setting has been studied. 
%%%  
\end{remark}
%%%
\noindent In analogy to the previous case of a symplectic theory (see \eqref{20ott-ale1}), one can define a map $\Psi_\mathcal{P}$
\be
\label{22ott-ale3}
\sigma_{\mathcal{P}}(\mathcal{R}_\infty)\,\ni\,m_\infty\quad\mapsto\quad \mathfrak{i}_\infty\,\circ\,F_s^{\mathbb{\Gamma}_\infty}(m_\infty)\,\in\,\widetilde{\elag}^\epsilon, 
\ee 
where $F_s^{\mathbb{\Gamma}_\infty}$ is the flow generated by $\mathbb{\Gamma}_{\infty}$, which is the unique vector field in $\mathfrak{X}(\sigma_{\mathcal{P}}(\mathcal{R}_\infty))$ that satisfies the conditions \eqref{22ott-ale4} and \eqref{projection compatibility}. 
Notice that,  since we are considering initial conditions on the image of a specific section $\sigma_{\mathcal{P}}$ rather than on the whole $\mathcal{F}_\infty$, the map $\Psi_{\mathcal{P}}$ may not range over the whole $\widetilde{\elag}^\epsilon$. 
The range $\mathrm{Im}(\Psi_{\mathcal{P}})$ (that we denote as ${\widetilde{\elag}}^\epsilon_{\sigma_{\mathcal{P}}}$) turns to be diffeomorphic (via $\varpi$) to a subset of $\elag_{C^\Sigma_\epsilon}$ that we denote as $\elag_{C^\Sigma_\epsilon}^{\sigma_\mathcal{P}}$. 
Such a set  represents the solution space of the variational problem associated to the chosen connection $\mathcal{P}$. This amounts to a \textit{gauge choice}.  The analogue of the diagram \eqref{dia1} is
\be
\label{dia3}
\begin{tikzcd}
{\widetilde{\elag}}^\epsilon_{\sigma_{\mathcal{P}}}  \arrow[r, "\varpi^{-1}"] & \elag_{C^\Sigma_\epsilon}^{\sigma_\mathcal{P}}    \arrow[dl, "\Pi_\Sigma", swap] \\
\mathcal{F}^\Sigma_{\mathcal{P}(\mathbb{E})} %\arrow[r, "\rho_\Sigma"] & \mathcal{F}^\Sigma_{\mathcal{P}(\mathbb{E}),\, 0} 
\\
\sigma_{\mathcal{P}}(\mathcal{R}_\infty) \subset \mathcal{F}_\infty \arrow[uu, "\Psi_{\mathcal{P}}", bend left = 80, dashed] \arrow[d, "\pi_\infty"] \arrow[u, "\mathfrak{i}_\infty", hookrightarrow]  \\
\mathcal{R}_\infty \arrow[u, bend left=30, dashed, "\sigma_{\mathcal{P}}"].   & 
\end{tikzcd}
\ee
%%%
%It is worth stressing that since we are considering initial conditions on the image of a particular $\sigma_{\mathcal{P}}$ rather than on the whole $\mathcal{M}_\infty$ we are not guaranteed that the image of $\Psi$ is actually the whole $\tilde{\elag}_\m^\epsilon$.
%%%
%Rather, it is a subset of it that we call ${\tilde{\elag}_\m^\infty}_{\sigma_{\mathcal{P}}}$ and which is diffeomorphic (via $\varpi$) with a subset of $\elag_\m^\epsilon$ that we call ${\elag_\m^\epsilon}_{\sigma_{\mathcal{P}}}$ and which represents the solution space of the variational problem associated with the chosen connection (and, physically speaking, to a gauge choice).
%%%

%%%%%%%%%%%%%%%%%%%%%%

%%%%%%%%%%%%%%%%%%%%%%

\section{Poisson brackets on the solution space}
\label{Sec: Poisson brackets on the solution space}

\noindent %This section is devoted to the main aim of the paper, namely to write a Poisson bracket on the solution space of a (first order Hamiltonian) field theory both in the symplectic case and within Abelian gauge theories.
%%%
In the previous sections we introduced a (strongly) symplectic structure on the solution space of  (a suitable class of) first order Hamiltonian field theories, both within what we defined as a symplectic theory and within what we defined  as a gauge theory. 
%%%
For a symplectic theory,  the  corresponding strongly symplectic structure has been  defined on the whole solution space, whereas for a gauge theory it has been  defined on a suitable subset of it,  corresponding to a specific gauge fixing. Such a gauge fixing is geometrically formulated as the choice of a suitable flat connection. 
 The Poisson bracket straightforwardly comes from the strongly symplectic 2-form introduced above, in analogy to the case of a symplectic structure on a finite dimensional smooth manifold. 
%%%
%Given the above mentioned strongly symplectic structures at hand, the construction of the Poisson bracket is straightforward, being the standard construction usually done in symplectic geometry.
%%%
In this section we explicitly write such a Poisson tensor for the class of examples we already considered. 
The starting point can be either the 2-form $\pssos$ on $\elag_{C_\epsilon^\Sigma}$ or the 2-form $\os_\infty$ on $\mathcal{F}_\infty$. Since they are diffeomorphic, the corresponding brackets turn to be equivalent\footnote{Notice that  the choice of $\os_\infty$ is more convenient from the computational point of view, since it is a $2$-form on the manifold $\mathcal{F}_\infty$, while $\pssos$ is a $2$-form on the restriction  $\elag_{C_\epsilon^\Sigma}\subset\fpe$ of space of solutions of the field equations.}.
%%%
%However, being the latters manifolds diffeomorphic, we obtain two equivalent brackets in the two cases as we will show in a moment.
%%%

\noindent We begin by considering a symplectic theory. 
%%%
Let $F\,,G\in\,\mathscr{F}(\elag_{C_\epsilon^\Sigma})$. The Hamiltonian vector field $\mathbb{X}_F\in\mathfrak{X}(\elag_{C_\epsilon^\Sigma})$ corresponding to $F$ with respect to the strongly symplectic structure $\pssos$ in \eqref{sette.3} is the one satisfying 
\be \label{Eq: hamiltonian vector field solution space}
i_{\mathbb{X}_F}(\pssos) \,=\, \dd F \,,
\ee
%%%
so the Poisson bracket can be defined by 
\be
\{G, F\}\,=\, \pssos(\mathbb{X}_G,\, \mathbb{X}_F) \,=\, \mathbb{X}_F(G) \,.
\ee
%%%
Given the element $F\,\in\,\mathscr{F}(\elag_{C^\Sigma_\epsilon})$, consider its pull-back to  $\mathscr{F}({\mathcal{F}_\infty})$ given by 
$$
f \,=\, (\varpi^{-1} \circ \Psi)^\star F.
$$ 
The Hamiltonian vector field $\mathbb{X}_f\,\in\,\mathfrak{X}(\mathcal{F}_\infty)$ associated to $f$ with respect to $\os_\infty$ is the one satisfying
\be \label{Eq: hamiltonian vector field cauchy data}
i_{\mathbb{X}_f}\os_\infty\,=\, \dd f,
\ee
so that, for any $g\,\in\,\mathscr{F}(\mathbb{F}_\infty)$, the corresponding  Poisson bracket reads 
\be
\{g,\, f \}\,=\, \os_\infty(\mathbb{X}_g,\, \mathbb{X}_f) \,=\, \mathbb{X}_f(g) \,.
\ee
%%%
The relation between these two brackets can be analysed as follows.
The equation \eqref{Eq: hamiltonian vector field solution space} can be written, recalling that one has  $\pssos \,=\, \left(\Psi^{-1} \circ \varpi\right)^\star \os_\infty$ and $F \,=\, \left(\Psi^{-1} \circ \varpi \right)^\star f$, as
\be \label{Eq: relation hamiltonian vector field solutions and cauchy data}
\os_\infty\left(\, \left(\Psi^{-1} \circ \varpi\right)_\star \mathbb{X}_F,\, \left(\Psi^{-1} \circ \varpi \right)_\star \mathbb{Y} \,\right) \,=\, \dd f \left(\left(\Psi^{-1} \circ \varpi \right)_\star \mathbb{Y} \right) \;\;\; \forall \,\, \mathbb{Y} \in \mathfrak{X}(\elag_{C_\epsilon^\Sigma}) \,
\ee
since the map $\Psi^{-1}\circ\varpi$ is a diffeomorphism, and then any vector field on $\mathcal{F}_\infty$ can be written in terms of a vector field $\mathbb{Y}\in\,\mathfrak{X}(\elag_{C_\epsilon^\Sigma})$.
%%%
The comparison between  \eqref{Eq: hamiltonian vector field cauchy data} and \eqref{Eq: relation hamiltonian vector field solutions and cauchy data} gives the following relation between $\mathbb{X}_F$ and $\mathbb{X}_f$, that is
\be
\mathbb{X}_F \,=\, \left(\varpi^{-1} \circ \Psi \right)_\star \mathbb{X}_f \,=\, \left( \varpi^{-1} \circ \mathfrak{i}_\infty \circ F_s^{\mathbb{\Gamma}_\infty} \right)_\star \mathbb{X}_f \,=\, \varpi^{-1}_\star \circ  {\mathfrak{i}_\infty}_\star \circ {F_s^{\mathbb{\Gamma}_\infty}}_\star \, \mathbb{X}_f\,.
\ee
%%%
Recalling the definition of push-forward of a vector field under the action of a diffeomorphism on a manifold and the definition of the tangent lift of a vector field (see \cite[Sec. $2.1$]{Morandi-Ferrario-LoVecchio-Marmo-Rubano1990-Inverse_Problem}), we can write
$${F_s^{\mathbb{\Gamma}_\infty}}_\star \,=\, {F_s^{T\mathbb{\Gamma}_\infty}},$$
where 
 $F_s^{T\mathbb{\Gamma}_\infty}$ denotes the flow of the tangent lift to $\mathbf{T}\mathcal{F}_\infty$ of the vector field $\mathbb{\Gamma}_\infty$ on $\mathcal{F}_\infty$. %_\star$ is the flow of the tangent lift of $\mathbb{\Gamma}_\infty$, i. e., ${F^{\mathbb{\Gamma}_\infty}}_\star \,=\, {F^{T\mathbb{\Gamma}_\infty}}$.
%%%
It turns out that the flow  $F_s^{T\mathbb{\Gamma}_\infty}$ acts fiberwise linearly upon sections of the tangent bundle $\mathbb{T}\mathcal{F}_\infty$, i.e. linearly upon the components of $\mathbb{X}_f$, and such a linear action comes as a first order linearisation of the flow generated by $\mathbb{\Gamma}_\infty$ on $\mathcal{F}_\infty$.  Notice that such a linearisation recalls the formulation developped by Peierls (see \cite{Peierls1952-Commutation_laws}) where the commutation laws among fields were introduced in terms of a linearised evolution.

%provide the solution of the  linearisation on each fiber of $\mathbf{T}\mathcal{F}_\infty$ of the ode described by $\mathbb{\Gamma}_\infty$ on the basis $\mathcal{F}_\infty$. % Again, following the definition of tangent lift of a vector field, the flow of the tangent lift of some vector field $X$, say $TX$, takes a tangent vector at some point of the manifold and gives the solution of the linearization of the ODEs associated with $X$.
%%%
This means that $\mathbb{X}_F$ is reconstructed from $\mathbb{X}_f$ along the following steps
\begin{itemize}
\item linearise the flow generated by  $\mathbb{\Gamma}$ and let it act upon the element $\mathbb{X}_f$;
\item let ${\mathfrak{i}_\infty}_\star$ act upon such a solution. This amounts to (as we will see in the examples) impose constraint relations coming as  the tangent lift of the constraint equations emerging from the pre-symplectic constraint algorithm;
\item let $\varpi^{-1}_\star$ act. This amounts, again, to identify the evolution parameter $s$ from the above flow  with the coordinate transversal to the hypersurface $\Sigma$.
\end{itemize}
%%%

\noindent This analysis allows to write the following relations, %The Poisson bracket associated to $\pssos$ between any two functions $F,G$  is the pull-back, via $\ps$ of the bracket associated with $\os_\infty$ between their restrictions to $\mathcal{M}_\infty$ as it is seen from the following computation:
\be \label{Eq: pull-back bracket}
\begin{split}
\{G,\, F\}_\chi \,&=\, \left(\pssos\right)_\chi(\mathbb{X}_G,\, \mathbb{X}_F) \\ &=\, \left[\left(  \Psi^{-1} \circ \varpi \right)^\star \os_\infty \right]_\chi (\mathbb{X}_G,\, \mathbb{X}_F) \\
&=\, \left[{\os_\infty}\right]_{\ps (\chi)} \left(\left( \Psi^{-1} \circ \varpi \right)_\star \,\mathbb{X}_G,\, \left(  \Psi^{-1} \circ \varpi \right)_\star \, \mathbb{X}_F \right)  \\ 
&=\,\left[{\os_\infty}\right]_{\ps(\chi)}(\mathbb{X}_g,\, \mathbb{X}_f) \\ &=\, \{g,\, f\}_{\ps(\chi)} \,=\, \pss \{g,\,f\}_\chi  \,.
\end{split}
\ee
%%%
Before presenting such Poisson structures within the theories we have introduced in the previous sections, we notice  that the above construction can be performed also in the case of a gauge theory, as we will see in the example in section \ref{Subsec: Gauge theories: Electrodynamics}. The Hamiltonian vector field $\mathbb{X}_f$ associated to $f\in\mathscr{F}(\mathcal{F}_\infty)$  with respect to $\os_\infty$ comes upon the further condition  to be horizontal with respect to a suitable flat connection, that we write as  $\mathcal{P}(\mathbb{X}_f) \,=\, 0$.

%%%%%%%%%%%%%%%%%%%%%%

%%%%%%%%%%%%%%%%%%%%%%

\subsection{Mechanical systems: the free particle on the line}
\label{Subsec: Mechanical systems: the free particle on the line}

Here we give the explicit construction of the Poisson bracket introduced in the previous sections on the solution space of the equations of the motion of the free particle moving on the line, that we described in the example \ref{Ex: free particle variational principle}. In particular, we write down the action of the bracket upon the following functions of the solutions:
\be
\label{28ott-ale1}
F \,=\, q(t_1) \,, \qquad G \,=\, q(t_2) \,, 
\ee
giving the position of the particle at the two particular time values $t_1$ and $t_2$.
%%%

\noindent Any slice $\Sigma\hookrightarrow\mathbb{I}$ of the open interval $\mathbb{I}=(a,b)\subset\mathbb{R}$ is identified by selecting a $\bar{t}\in\mathbb{I}$, so that the space  
 $\fpe^\Sigma$ is given by trajectories $\chi(t) \,=\, (q(t),\, p(t))$ restricted to $\bar{t}$, i.e. $\bar{\chi} \,=\, \chi\bigr|_{\bar{t}}$. Since such a $\Sigma$ is a single point space,  the measure  $vol_\Sigma$ is singular so that we write the canonical  structure $\pssos$ on the solution space as
\be
\left[\pssos (\mathbb{X},\, \mathbb{Y})\right]\big|_\chi \,=\, \int_\Sigma vol_\Sigma \left( \mathbb{X}_q \mathbb{Y}_p - \mathbb{X}_p \mathbb{Y}_q \right) \,=\, \left( \mathbb{X}_q \mathbb{Y}_p - \mathbb{X}_p \mathbb{Y}_q \right)_{t = \bar{t}} \,, 
\ee
and then the 2-form $\os$ reads
\be
\left[\os(\mathbb{X},\, \mathbb{Y})\right]\big|_{\bar\chi} \,=\, \left(\, \mathbb{X}_{\bar{q}} \mathbb{Y}_{\bar{p}} - \mathbb{X}_{\bar{p}} \mathbb{Y}_{\bar{q}} \,\right) \,, 
\ee
which is indeed the symplectic structure of the cotangent bundle  $\mathbf{T^\star}\mathbb{R}$.
%%%
In this example the set $\fpe^\Sigma$ is already a (finite-dimensional) symplectic manifold, so that there is no need to apply the pre-symplectic constraint algorithm: we have $\mathcal{F}_\infty\,=\,\fpe$ with coordinates $(\bar q, \bar p)$. 
%%%

\noindent The solutions of the equations of the motion written in example \ref{Ex: free particle variational principle} for the Cauchy datum $(\bar{q},\, \bar{p})$ at $\bar{t}$ are
\be
\label{otto.1}
\begin{split}
&q(t) \,=\, \bar{q} + \frac{\bar{p}}{m} (t - \bar{t}) \\ & p(t) \,=\, \bar{p} \end{split}
\ee
so that one has, for $f \,=\, (\varpi^{-1} \circ \Psi)^\star F$ and $g \,=\, (\varpi^{-1} \circ \Psi)^\star G$ from the functions \eqref{28ott-ale1}  in $\mathscr{F}(\elag_{C^\Sigma_{\epsilon}})$
\be
\begin{split}
&f \,=\, \bar{q} + \frac{\bar{p}}{m}(t_1 - \bar{t}), \\ & g \,=\, \bar{q} + \frac{\bar{p}}{m}(t_2 - \bar{t}). \end{split} 
\ee
%which we consider as functions of $\bar{q} $ and $\bar{p}$.
%%% 
The Poisson bracket between $f$ and $g$ is readily computed using the structure $\os$ written above.
%%%
It gives 
\be
\{g,\, f\}_{(\bar{q},\, \bar{p})} \,=\, \os(\mathbb{X}_g,\, \mathbb{X}_f) \,=\, \mathbb{X}_f (g),  
\ee
where the vector field $\mathbb{X}_f$ is defined by 
\be
i_{\mathbb{X}_f}\os \,=\, \dd f \,,
\ee
so that 
\be
\mathbb{X}_f \,=\, \frac{t_1 - \bar{t}}{m} \frac{\de}{\de \bar{q}} - \frac{\de}{\de \bar{p}} \,.
\ee
%%%
One has then, for the Poisson bracket, the following expression, which is a  function of $(\bar{q},\, \bar{p})$ depending on the parameters $t_1$ and $t_2$
\be
\label{otto.2}
\{g,\, f\}_{(\bar{q},\,\bar{p})} \,=\, \frac{t_1 - t_2}{m} \,. 
\ee
%%%

\noindent In order to compute the bracket between the original functions $F$ and $G$, one needs to determine the Hamiltonian vector field $\mathbb{X}_F$  defined in \eqref{Eq: hamiltonian vector field solution space} upon linearizing the flow defined in \eqref{otto.1}. Since such dynamics is linear, one has 
\be
\begin{split}
&{\mathbb{X}_F}_q \,=\, {\mathbb{X}_f}_{\bar{q}} + \frac{{\mathbb{X}_f}_{\bar{p}}}{m}(t - \bar{t}), \\  &{\mathbb{X}_F}_p \,=\, {\mathbb{X}_f}_{\bar{p}} \, \end{split}
\ee
and then 
\be
\{G,\, F\}_\chi \,=\, \left[\pssos(\mathbb{X}_F,\, \mathbb{X}_G)\right]\Big|_\chi \,=\, \mathbb{X}_F(G) \,.
\ee
This reads 
\be
\{G,\, F\}_\chi \,=\, \frac{t_1 - t_2}{m}
\ee
(as a function of $\chi$) for the given $F,G$ in $\elag_{C^\Sigma_\epsilon}$. 
%%%
Notice  that the bracket between $F$ and $G$ is the pull-back, via the restriction map to $\bar{t}$ as it comes from \eqref{Eq: pull-back bracket},  of the bracket between $f$ and $g$ written in \eqref{otto.2}.

\subsection{A symplectic theory: the massive vector boson field}
\label{Subsec: The symplectic case: massive vector boson field}

\noindent Here we give the explicit construction of the bracket between the functions on the solution space of equations \eqref{Eq: euler-lagrange equations vector boson field} given by 
\be
F \,=\, \phi^{a_1}\left(\psi^{-1}_{U_{\mathbb{M}^{1,3}}}(x^0_1, \underline{x}_1)\right),  \qquad G \,=\, P^{0}_{a_2}\left(\psi^{-1}_{U_{\mathbb{M}^{1,3}}}(x^0_2, \underline{x}_2)\right) \,,
\ee
that is $F$ amounts to evaluate  the field configuration $\phi^{a_1}$ at the given point $\psi^{-1}_{U_{\mathbb{M}^{1,3}}}(x^0_1, \underline{x}_1)$, while $G$ amounts to evaluate  the ($0$-component of the) momentum -- solutions of the equations of the motion -- at the given point  $\psi^{-1}_{U_{\mathbb{M}^{1,3}}}(x^0_2, \underline{x}_2)$ in $\mathbb{M}^{1,3}$.
%%%
Consider the setting of example \ref{Ex: vector boson field variational principle}, and fix the hypersurface 
$$
\Sigma \,=\,  \{x^0 \,=\, x^0_\Sigma \} \,=\, \mathbb{R}^3 \subset \mathbb{M}^{1, 3}, 
$$
%%%
which gives a collar  $C_\epsilon^\Sigma\,=\,[x^0_\Sigma, x^0_\Sigma + \epsilon) \times \mathbb{R}^3$ (notice that  $\epsilon$ may possibly tend to  $+\infty$) and the space of curves $\gamma_{(s)}$ on $\fpe^\Sigma$ is defined for the parameter $s\,\in\,[x^0_\Sigma, \epsilon)$\footnote{Notice that the analysis can be performed for  the collar $(x^0_\Sigma - \epsilon, x^0_\Sigma] \times \mathbb{R}^3$. This would amount that curves $\gamma_{(s)}$ on $\mathcal{F}_{\mathcal{P}(\mathbb{E})}^\Sigma$ can be defined for the value of the  parameter $s\,\in(-\epsilon, x^0_\Sigma]$.}.
%%% 
We recall that, provided the elements in $\fpe$ have the regularity described in the example \ref{Ex: vector boson field variational principle}, the set $\fpe^\Sigma$ turns out to be a Hilbert space. 
%%%
Indeed one has
\be
\fpe \,=\, \prod_a {\mathcal{H}^{2}(\mm, vol_{\mm})}^a \times \prod_{\mu, a} {\mathcal{H}^{1}(\mm, vol_{\mm})}^\mu_a \,:
\ee
the  \textit{trace theorem}%\footnote{It is exactly the aim of using this theorem to find the restrictions of our fields to $\Sigma$ that lead us to further restrict the fields in \ref{Ex: vector boson field variational principle}.} 
for the restriction of fields from $\fpe$ to $\fpe^\Sigma$, from the {\it trace theorem} (see \cite{Lions1990-vol2}) one has  
\be
\fpe^\Sigma \,=\, \prod_a {\mathcal{H}^{\frac{3}{2}}(\Sigma, vol_{\Sigma})}^a \times \prod_{\mu, a} {\mathcal{H}^{\frac{1}{2}}(\Sigma, vol_{\Sigma})}^\mu_a \,,
\ee
which is  a Hilbert space whose elements are denoted, in analogy to what we already described, as 
\be
\chi_\Sigma \,=\, (\varphi, p, \beta) \,.
\ee
%%%
Following the procedure outlined in the previous sections, we consider the pre-symplectic system given by $$(\mathcal{F}^\Sigma_{\mathcal{P}(\mathbb{E})},\Omega^\Sigma, \mathcal{H})$$ where the 2-form  
$\Omega^\Sigma \,=\, \rho_\Sigma^\star \, \nu^\star \, \bar{\omega}^\Sigma$ is written as (see \eqref{omega1s})
\be
\Omega^\Sigma(\mathbb{X},\, \mathbb{Y}) \,=\, \int_\Sigma  vol_\Sigma \left( \mathbb{X}_{\varphi^a} \mathbb{Y}_{p_a} - \mathbb{X}_{p_a} \mathbb{Y}_{\varphi^a} \right) 
\ee
with $\mathbb{X},\, \mathbb{Y}$  vector fields on $\mathcal{F}_{\mathcal{P}(\mathbb{E})}^\Sigma$.  Evaluating the Hamiltonian  along the curve $\gamma_{(s)}$ results in, taking into account the expression \eqref{Eq: klein-gordon hamiltonian},
\be
\mathcal{H}(\gamma_{(s)}) \,=\, \int_\Sigma vol_\Sigma\left[\, {\beta^j_a}_{(s)} \partial_j{\varphi^a}_{(s)} - \frac{1}{2}\left( \eta_{jk} \delta^{ab} {\beta^j_a}_{(s)} {\beta^k_b}_{(s)} + \delta^{ab} {p_a}_{(s)} {p_b}_{(s)} + m^2 \delta_{ab} \varphi_{(s)}^a \varphi_{(s)}^b \right) \,\right] \,.
\ee
The 2-form $\Omega^\Sigma$ is clearly degenerate, its kernel being spanned by all elements in $\mathfrak{X}(\mathcal{F}^\Sigma_{\mathcal{P}(\mathbb{E})})$ with null components along any basis for the subspace $\mathfrak{X}(\mathcal{F}^\Sigma_{\mathcal{P}(\mathbb{E}),0})$, i.e. those elements that we can write along the basis ${\mathbb{X}_{\beta^j_a}}$, that we collectively denote as $\mathbb{X}_\beta$. The space originating from the first step of the pre-symplectic constraint algorithm (see the appendix \ref{App: PCA}) is 
\be
\mathfrak{i}_1 (\mathcal{F}_1) \,=\, \left\{\, \chi_\Sigma \in \mathcal{F}_{\mathcal{P}(\mathbb{E})}^\Sigma \;\; : \;\; \dd \mathcal{H}(\mathbb{X}_\beta) \,=\, 0 \right\} \,.
\ee
A direct computation shows that
\be \label{Eq: constraints section klein-gordon}
\mathfrak{i}_1 (\mathcal{F}_1) \,=\, \left\{\, \chi_\Sigma \in \mathcal{F}_{\mathcal{P}(\mathbb{E})}^\Sigma \;\; : \;\; {\beta^j_a} \,=\, \delta_{ab} \eta^{jk} \partial_k \varphi^b \,\right\} \,.
\ee
%%%
This shows that $\mathcal{M}_1$ coincides with $\mathcal{F}^\Sigma_{\mathcal{P}(\mathbb{E}),\,0}$, i.e. the space of $\varphi$'s and $p$'s, whose immersion into $\mathcal{F}_{\mathcal{P}(\mathbb{E})}^\Sigma$ is given by the previous relation.
%%%
Pulling-back to $\mathcal{F}_1$ the 2-form  $\Omega^\Sigma$ and the Hamiltonian  $\mathcal{H}$ gives 
\be
\begin{split}
\mathfrak{i}_1^\star \Omega^\Sigma (\mathbb{X},\, \mathbb{Y}) \,&=\, \Omega_1^\Sigma (\mathbb{X},\, \mathbb{Y}) \,=\, \int_\Sigma vol_\Sigma \left(\, \mathbb{X}_{\varphi^a} {\mathbb{Y}_{p_a}} - {\mathbb{X}_{p_a}} \mathbb{Y}_{\varphi^a} \,\right) \,, \\
\mathfrak{i}_1^\star \mathcal{H} \,=\, \mathcal{H}_1 \,&=\, \frac{1}{2} \int_\Sigma vol_\Sigma \left(\, \delta_{ab} \eta^{jk} \partial_j \varphi^a \partial_k \varphi^b - \delta^{ab}p_a p_b - m^2 \delta_{ab} \varphi^a \varphi^b \,\right)  \,. 
\end{split}
\ee
%%%
The 2-form $\Omega_1^\Sigma$ is  symplectic,  the pre-symplectic constraint algorithm stabilizes, with  
$$
\mathcal{F}_\infty \,=\, \mathcal{F}_1 \,=\, \mathcal{F}^\Sigma_{\mathcal{P}(\mathbb{E}),\,0}
$$
and $\mathfrak{i}_\infty \,=\, \mathfrak{i}_1$. Such embedding turns to be differentiable with respect to the differential structure of $\mathcal{F}_\infty$ inherited from the one on  $\fpe^{\Sigma}$ and represents the section $\sigma$ of the projection (see \eqref{sei.1}) $\rho_\Sigma$ given by \eqref{Eq: constraints section klein-gordon}, as in 
\be
\begin{tikzcd}
\widetilde{\elag}^\epsilon \arrow[r, "\varpi^{-1}"] & \elag_{C^\Sigma_\epsilon}  \arrow[dl, "\Pi_\Sigma", swap] \\
\mathcal{F}_{\mathcal{P}(\mathbb{E})}^\Sigma \arrow[d, "\rho_\Sigma", shift right=1, swap] & \\ \mathcal{F}_\infty \simeq \mathcal{F}^\Sigma_{\mathcal{P}(\mathbb{E}),\, 0} \arrow[u, "\mathfrak{i}_\infty \,=\, \sigma", hookrightarrow, shift right=1, swap] \arrow[uu, bend left = 80, "\Psi", dashed] &
\end{tikzcd}
\ee

%Following the procedure outlined in the previous sections, we first need to pull-back $F$ to $\mathcal{M}_\infty$ via $\varpi^{-1} \circ \Psi$, where $\Psi$ is the flow of the dynamical vector field on $\mathcal{M}_\infty$ composed with the immersion of $\mathcal{M}_\infty$ into $\fpe^\Sigma$.
%%%
%As we will prove in a moment, it is $\mathcal{M}_\infty \cong \mathcal{F}_{\mathcal{P}(\mathbb{E}),\,0}^\Sigma$.
%%%
%In this case, the structure $\Omega^\Sigma \,=\, \rho_\Sigma^\star \, \nu^\star \, \bar{\omega}^\Sigma$ is:
%%%
%Clearly, this structure is degenerate with kernel spanned at each point by tangent vectors with only ${\mathbb{X}_\beta}^j_a$ components.
%%%
%We denote them by $\mathbb{X}_\beta$.
%%%
%Taking into account the expression \eqref{Eq: klein-gordon hamiltonian}, the Hamiltonian functional at $\Sigma$, $\mathcal{H}$ evaluated on the curve $\gamma_s$, reads:
%\be
%\mathcal{H}(\gamma_s) \,=\, \int_\Sigma \left[\, {\beta^j_a}_s \partial_k {\varphi^a}_s - \frac{1}{2}\left( \eta_{jk} \delta^{ab} {\beta^j_a}_s {\beta^k_b}_s + \delta^{ab} {p_a}_s {p_b}_s + m^2 \delta_{ab} \varphi_s^a \varphi_s^b \right) \,\right] vol_\Sigma \,.
%\ee
%%%

%%%

%%%
\noindent The pre-symplectic system reduced to $\mathcal{F}_\infty$ gives rise to the following equations,
\be 
\Omega_\infty^\Sigma (\mathbb{\Gamma}_{(\varphi,p)},\, \mathbb{Y}_{(\varphi,p)}) \,=\, \dd \mathcal{H}_\infty(\mathbb{Y}_{(\varphi,p)}) \,,
\ee 
that we write as 
\be
\begin{cases}
\frac{d \varphi^a}{ds} \,=\, - \delta^{ab} p_a \\
\frac{d p_a}{ds} \,=\, \delta_{ab} \eta^{jk} \partial_j \partial_k \varphi^b + m^2 \delta_{ab} \varphi^b 
\end{cases}
\ee
whose solutions are the integral curves of $\mathbb{\Gamma}$.
%%%
A direct computation shows that solutions of the previous equations are:
\be
\begin{split}
\varphi^a_{(s)} (\underline{x}) &\,=\, \int_{\mathbb{R}^3 \times \mathbb{R}^3} \dd^3 k \, \dd^3\underline{x}'\left( \varphi_\Sigma^a(\underline{x}') \cos{[\omega_k (s-x^0_\Sigma)]} - \delta^{ab} {p_b}^\Sigma(\underline{x}') \frac{\sin{[\omega_k (s- x^0_\Sigma)]}}{\omega_k} \right) e^{i \underline{k} \cdot (\underline{x}- \underline{x}')} \,, \\
{p_a}_s(\underline{x}) &\,=\, \int_{\mathbb{R}^3 \times \mathbb{R}^3} \dd^3 k \, \dd^3 \underline{x}' \, \left( \omega_k \varphi_\Sigma^a(\underline{x}') \sin{[\omega_k (s- x^0_\Sigma)]} + \delta^{ab} {p_b}^\Sigma \cos{[\omega_k (s-x^0_\Sigma)]} \right) e^{i \underline{k} \cdot (\underline{x}-\underline{x}')} \,,
\end{split}
\ee
where $\varphi^a_\Sigma$ and ${p_a}^\Sigma$ are the Cauchy data of the equations on the hypersurface $\Sigma$, $\underline{x}$, $\underline{x}'$ and $\underline{k}$ denote points in three copies of  $\mathbb{R}^3$, while $$\omega_k \,=\, \sqrt{|\underline{k}|^2 + m^2}$$ gives the dispersion relation corresponding to the equation. 
One has
\be
\begin{split}
f \,&=\, (\varpi^{-1} \circ \Psi)^\star F \\ \,&=\, \int_{\mathbb{R}^3 \times \mathbb{R}^3} \dd^3 k\,\dd^3 \underline{x}'\left( \varphi^{a_1}_\Sigma(\underline{x}') \cos{[\omega_k (x^0_1 - x^0_\Sigma)]} - \delta^{a_1 b} {p_b}^\Sigma(\underline{x}') \frac{\sin{[\omega_k (x^0_1 - x^0_\Sigma)]}}{\omega_k} \right) e^{i \underline{k} \cdot (\underline{x}_1 - \underline{x}')} \,.\end{split}
\ee 
The corresponding  Hamiltonian vector field with respect to the symplectic structure $\Omega_\infty^\Sigma$ has the following components
\be
\begin{split}
{\mathbb{X}_f}_{\varphi^a} (\underline{x}) &\,=\, - \frac{\delta f}{\delta p^\Sigma_a}(\underline{x}) \,=\,  \int_{\mathbb{R}^3}  \dd^3k \delta^{a a_1} \frac{\sin{[\omega_k(x^0_1 - x^0_\Sigma)]}}{\omega_k} \, e^{i \underline{k} \cdot (\underline{x}_1-\underline{x})}  \,,  \\
{{\mathbb{X}_f}_p}_a(\underline{x}) &\,=\,  \frac{\delta f}{\delta \varphi^a_\Sigma}(\underline{x}) \,=\,  \int_{\mathbb{R}^3} \dd^3k  \delta_{a a_1} \cos{[\omega_k (x^0_1 - x^0_\Sigma)]} \, e^{i \underline{k} \cdot (\underline{x}_1 - \underline{x})}  \,.
\end{split},
\ee
%%%
so the bracket between $f$ and $g$ with respect to $\os_\infty$ is the (constant) function of $\varphi_\Sigma$ and $p^\Sigma$:
\be \label{Eq: bracket cauchy data klein-gordon}
\{g,\, f\} \,=\, \mathbb{X}_f (g) \,=\, \int_{\mathbb{R}^3} \dd^3k \,\, \mathrm{cos}[ \omega_k(x^0_1 - x^0_2)] e^{i \underline{k} \cdot (\underline{x}_2 - \underline{x}_1) } \,.
\ee
%%%
\noindent 
One has that for the Hamiltonian vector field $\mathbb{X}_F$ with respect to $\Pi^\star_\Sigma \Omega^\Sigma$ it is $$\mathbb{X}_F\,=\,(\varpi \circ \Psi)_\star\mathbb{X}_f$$
%%%
and integral curves of $T\mathbb{\Gamma}$ are easily seen to be elements $(\varphi^a, p_a, \mathbb{X}_\varphi^a, {\mathbb{X}_p}_a)$ where $(\varphi^a, p_a)$ are solutions of the previous equations and $( \mathbb{X}_\varphi^a, {\mathbb{X}_p}_a)$ are solutions of the linearized problem
\be
\begin{cases}
\frac{d \mathbb{X}_\varphi^a}{ds} \,=\, - \delta^{ab} {\mathbb{X}_p}_a \\
\frac{d {\mathbb{X}_p}_a}{ds} \,=\, \delta_{ab} \eta^{jk} \partial_j \partial_k \mathbb{X}_\varphi^b + m^2 \delta_{ab} \mathbb{X}_\varphi^b, 
\end{cases} \,
\ee
which is identical to the original equations.
%%%
The solutions are then 
\be
\begin{split}
({\mathbb{X}_{\varphi^a}})_{(s)}(\underline{x}) &\,=\, \int_{\mathbb{R}^3}\dd^3k \, \delta^{aa_1} \frac{\sin{[\omega_k (x^0_1 - s)]}}{\omega_k} \, e^{i \underline{k} \cdot (\underline{x}- \underline{x}_1)}  \,, \\
({\mathbb{X}_{p_a}})_{(s)}(\underline{x}) &\,=\, \int_{\mathbb{R}^3} \dd^3k  \, \delta_{aa_1} \cos{[\omega_k (x^0_1 - s)]} \, e^{i \underline{k} \cdot (\underline{x} - \underline{x}_1)} \,.
\end{split}
\ee
%%%
Their immersion into $\mathbf{T}\mathcal{F}^\Sigma_{\mathcal{P}(\mathbb{E})}$ gives the curve
\be
\begin{split}
({\mathbb{X}_{\varphi^a}})_{(s)}(\underline{x}) &\,=\, \int_{\mathbb{R}^3} \dd^3k\,\, \delta^{aa_1} \frac{[\sin{\omega_k (x^0_1 - s)]}}{\omega_k} \, e^{i \underline{k} \cdot (\underline{x}- \underline{x}_1)} , \\
({{\mathbb{X}_{p_a}}})_{(s)}(\underline{x}) &\,=\, \int_{\mathbb{R}^3} \dd^3 k\,\, \delta_{aa_1} \cos{[\omega_k (x^0_1 - s)]} \, e^{i \underline{k} \cdot (\underline{x} - \underline{x}_1)}  \,,\\
({{\mathbb{X}_{\beta^{j}_a}}})_{(s)}(\underline{x}) &\,=\,  \int_{\mathbb{R}^3} \dd^3k\,\, \eta^{jl} \, i k_l \delta^{aa_1} \frac{\sin{[\omega_k (x^0_1 - s)]}}{\omega_k} \, e^{i \underline{k} \cdot (\underline{x}- \underline{x}_1)}  \,.
\end{split}
\ee
%%%
The action of $\varpi_\star$ is given by the identification of the parameter $s$ with the coordinate $x^0$ in the (global) chart chosen on $\mathbb{M}^{1, 3}$.
%%%
This gives the vector field $\mathbb{X}_F$ whose components read
\be
\begin{split}
({\mathbb{X}_F})_{\phi^a}(x^0, \underline{x}) &\,=\, \int_{\mathbb{R}^3} \dd^3k\,\, \delta^{aa_1} \frac{\sin{[\omega_k (x^0_1 - x^0)]}}{\omega_k} \, e^{i \underline{k} \cdot (\underline{x}- \underline{x}_1)}  \\
({{\mathbb{X}_F}})_{p_a}(x^0, \underline{x}) &\,=\, \int_{\mathbb{R}^3} \dd^3k \,\,\delta_{aa_1} \cos{[\omega_k (x^0_1 - x^0)]} \, e^{i \underline{k} \cdot (\underline{x} - \underline{x}_1)}  \,, \\
({{\mathbb{X}_F}})_{\beta^j_a}(x^0, \underline{x}) &\,=\,  \int_{\mathbb{R}^3} \dd^3k \,\,\eta^{jl} \, i k_l \delta^{aa_1} \frac{\sin{[\omega_k (x^0_1 - x^0)]}}{\omega_k} \, e^{i \underline{k} \cdot (\underline{x}- \underline{x}_1)}  \,.
\end{split}
\ee
%%%
Recalling that $G \,=\, P^0_a(x^0_2, x^j_2)$,  the bracket between $F$ and $G$ is computed to be
\be
\begin{split}
\{G, F\}_\chi \,&=\, \mathbb{X}_F(G) \\ &=\, \int_{\mathbb{M}^{1, 3}} \dd x^0\dd^3\underline{x}\,\,{{\mathbb{X}_F}_p}_a(x^0, \underline{x}) \underbrace{\frac{\delta G}{\delta P^0_a}}_{\delta(x^0 - x^0_2) \, \delta(\underline{x}- \underline{x}_2)}  \\ &=\, \int_{\mathbb{R}^3} \dd^3k\,\, \cos{\omega_k (x^0_1 - x^0_2)} e^{i \underline{k} \cdot (\underline{x}_2 - \underline{x}_1)}  \,. \end{split}
	\ee
%%%

\noindent Notice that, as we expect from the general theory we described, this result coincides with the pull-back to $\elag_\m$ via $\ps$ of the function $\{g,f\}$ in \eqref{Eq: bracket cauchy data klein-gordon}: being a  constant function on the space of Cauchy data, they  share the same expression.
%%%

%%%%%%%%%%%%%%%%%%%%%%

%%%%%%%%%%%%%%%%%%%%%%

\subsection{Gauge theories: Electrodynamics}
\label{Subsec: Gauge theories: Electrodynamics}

Consider the example \ref{Ex: electrodynamics variational principle} where Maxwell equations have been studied  within a multisymplectic Hamiltonian formalism. Let 
$\left( \mathbb{M}^{1,3}, \eta \right)$ denote the Minkowski spacetime and let $\Sigma$ be the surface $x^{0}= x^0_\Sigma$, so that $\mathbb{M}^{1,3}\simeq \Sigma \times \mathbb{R}$. 
%%%
A (global) chart on $\Sigma$ is denoted by $(U_{\Sigma}, \psi_{U_\Sigma})$, $\psi_{U_\Sigma}(\mathfrak{x}) \,=\, \underline{x}$, with $\mathfrak{x} \in U_\Sigma \subset \Sigma$, whereas a chart on $\mathbb{R}$ is denoted by $(U_{\mathbb{R}}, \psi_{U_\mathbb{R}})$, $\psi_{U_\mathbb{R}}(r) \,=\, x^0$, with $r \in U_{\mathbb{R}} \subset \mathbb{R}$.
%%%
The space $\widetilde{\elag}^\epsilon$ is made by curves $\gamma \,\colon\, \mathbb{R}\,\rightarrow \, \fpe^\Sigma$. 
%%%
The components of these curves are denoted as follows
\be
\begin{split}
& {a_0}_{(s)}(\psi_{U_\Sigma}^{-1}(\underline{x})) = A_0(\psi_{U_{\mathbb{R}}}^{-1}(x^0)=s, \psi_{U_\Sigma}^{-1}(\underline{x})),\\ 
& {a_j}_{(s)}(\psi_{U_\Sigma}^{-1}(\underline{x}))=A_j(\psi_{U_{\mathbb{R}}}^{-1}(x^0)=s,\psi_{U_\Sigma}^{-1}(\underline{x})), \\
& {p^k}_{(s)}(\psi_{U_\Sigma}^{-1}(\underline{x}))=P^{k0}(\psi_{U_{\mathbb{R}}}^{-1}(x^0)=s, \psi_{U_\Sigma}^{-1}(\underline{x})),\\ 
&\beta^{jk}_{(s)}(\psi_{U_\Sigma}^{-1}(\underline{x}))=P^{jk}(\psi_{U_{\mathbb{R}}}^{-1}(x^0)=s,\psi_{U_\Sigma}^{-1}(\underline{x}))\,,
\end{split}
\ee
with  $j,k=1,2,3$. 
%%%
In analogy to the previous analysis, the space $\fpe^\Sigma$ reads
\be
\fpe^\Sigma \,=\, \prod_\mu {\mathcal{H}^{\frac{3}{2}}(\Sigma, vol_\Sigma)}_\mu \times \prod_{\mu, \nu} {\mathcal{H}^{\frac{1}{2}}(\Sigma, vol_\Sigma)}^{\mu \nu} \,.
\ee
%%%
\noindent The Hamiltonian introduced in example \ref{Ex: electrodynamics variational principle} gives rise to the following Lagrangian on $\mathbf{T}( \fpe^\Sigma )$ which, evaluated along (the tangent lift of) a curve $\gamma_{(s)}$ (in the next formulas we avoid explicitly denoting the dependence on the parameter $s$ for the sake of notational simplicity) gives 
\begin{equation}
\mathscr{L}\bigr|_{t \gamma} = \int_{\Sigma}vol_\Sigma \left[ p^k\left( \dot{a}_{k} - \partial_k a_{0}\right) - \frac{1}{2} \beta^{kj} \left( \partial_k a_{j} - \partial_j a_{k} \right) - \frac{1}{4} \delta_{jl}\delta_{km}\beta^{jk}\beta^{lm} - \frac{1}{2} \delta_{jk} p^k p^{j} \right] \,,
\end{equation}
from which one has the  Hamiltonian functional
\begin{equation}
\mathcal{H}(\gamma) = \int_{\Sigma}vol_\Sigma \left[ p^k \partial_k a_{0} + \frac{1}{2} \beta^{kj} \left( \partial_k a_{j} - \partial_j a_{k} \right) + \frac{1}{4} \delta_{jl}\delta_{km}\beta^{jk}\beta^{lm} + \frac{1}{2} \delta_{jk} p^k p^{j} \right] .
\end{equation}
%%%
The 2-form $\Omega^\Sigma$ on  $\fpe^\Sigma$
\be
\Omega^\Sigma (\mathbb{X}, \mathbb{Y}) \,=\, \int_\Sigma vol_\Sigma \left(\, {\mathbb{X}_{a_k}} \mathbb{Y}_{p^k} - \mathbb{X}_{p^k} {\mathbb{Y}_{a_k}} \,\right) \,
\ee
%%%
is clearly pre-symplectic, its kernel is spanned,  at each point $\chi_\Sigma$, by tangent vectors having only components $\mathbb{X}_{a_0}$ or $\mathbb{X}_{\beta^{jk}}$.
%%% 
The first step of the pre-symplectic algorithm gives the manifold
\be
\mathfrak{i}_1(\mathcal{F}_1) \,=\, \left\{\, \chi_\Sigma \in \fpe^\Sigma \;\; :\;\; i_{\mathbb{X}_{\chi_\Sigma}} \dd \mathcal{H} \,=\, 0 \,\,\, \forall \,\, \mathbb{X}_{\chi_\Sigma} \in \mathbf{T}_{\chi_\Sigma} \mathcal{F}_1^\perp \,\right\}\,,
\ee
where $\mathbf{T}_{\chi_\Sigma} \mathcal{F}_1^\perp\,=\,\mathrm{ker}\,\os$ at $\chi_\Sigma$.
%%%
The latter condition gives the constraints
\be \label{Eq:constraints electrodynamics}
\begin{split} &\de_k p^k \,=\, 0 \,, \\ & \beta^{jk} \,=\, -2 \delta^{jl} \delta^{km} F_{lm} \,. \end{split}
\ee
%%%
The second constraint can be eliminated by expressing the $F_{lm}$ components in terms of the $a_k$'s (notice that expressing the $\beta^{js}$ elements in terms of the $a^k$'s amounts to describe the magnetic field in terms of the vector potential), while the first line constraint is non-holonomic  (it represents the Gauss' law) and  can not be eliminated.
%%%
Therefore, the manifold $\mathcal{F}_1$ results in 
\be 
\mathcal{F}_1 \,=\, \left\{\, (a_k, p^k) \in \fpe^\Sigma \;\;\; : \;\;\; \partial_j p^j \,=\, 0 \,\right\}
\ee
whose immersion into $\fpe^\Sigma$ is given by the second of \eqref{Eq:constraints electrodynamics} and upon fixing any arbitrary $a_0$.
%%%
The manifold $\mathcal{F}_1$ is seen to be also the final manifold of the pre-symplectic constraint algorithm, that is  $$\mathcal{F}_\infty \,=\, \mathcal{F}_1.$$
%%%
Pulling-back  $\Omega^\Sigma$ and $\mathcal{H}$ to the final manifold under $\mathfrak{i}_1 \,=\, \mathfrak{i}_\infty\,:\,\mathcal{F}_\infty\,\hookrightarrow\,\fpe^\Sigma$ reads
\be
\begin{split}
\mathfrak{i}_1^\star \Omega^\Sigma(\mathbb{X}, \, \mathbb{Y}) \,=\, \Omega^\Sigma_1(\mathbb{X},\, \mathbb{Y}) \,&=\, \int_\Sigma vol_\Sigma \left(\, {\mathbb{X}_{a_k}} \mathbb{Y}_{p^k} - \mathbb{X}_{p^k} {\mathbb{Y}_{a_k}} \,\right)  \,,\\
\mathfrak{i}_1^\star \mathcal{H}({\gamma_1}) \,=\, \mathcal{H}_1({\gamma_1}) \,&=\,  \int_{\Sigma} vol_\Sigma \left[ p^k \partial_k a_{0} + \frac{1}{4}\delta^{jl}\delta^{km}{F_{jk}} {F_{lm}} + \frac{1}{2} \delta_{jk} p^k p^{j} \right]  \\
\,&=\, \int_\Sigma vol_\Sigma \left[\, \frac{1}{4}\delta^{jl}\delta^{km}{F_{jk}} {F_{lm}} + \frac{1}{2} \delta_{jk} p^k p^{j} \,\right]  \,,
\end{split}
\ee
where $\gamma_1$ is a curve (whose dependence on $s$ we have again preferred not to write explicitly) on $\mathcal{F}_1$ and the latter equality is the result of an integration by parts.
%%%
The 2-form $\Omega^\Sigma_1$ is again pre-symplectic.
%%%
Indeed, tangent vectors with a ${\mathbb{X}_{a_k}}$ given by  ${\mathbb{X}_{a_k}} \,=\, \partial_k \zeta$ (for a function $\zeta$) lie in the kernel of $\Omega^\Sigma_1$ (at each point $(a, p)$) because of the Gauss' constraint.
%%%
As in the previous examples, the immersion map is easily seen to be differentiable with respect to the differential structure of $\mathcal{F}_\infty$.
%%%
The canonical equations on  $\mathcal{F}_\infty$ can be written as
\be \label{Eq: equations of the motion Electrodynamics on M1}
\begin{split}
p^k \,&=\, \delta^{kj}  \frac{d a_j}{ds} \,,  \\  \frac{d p^k}{ds} \,&=\, 	\delta^{km} \delta^{jl}\partial_j \partial_l a_m - \delta^{kj}\delta^{lm}  \partial_j \partial_l a_m  \,.  
\end{split}
\ee	
%%%
From  the initial choice $$\fpe \,=\,  \prod_{\mu=0,...,3} {\mathcal{H}^2(\mathbb{M}^{1,3}, vol_{\mathbb{M}^{1,3}})}_\mu \times \prod_{\mu,\nu = 0,...,3} {\mathcal{H}^1(\mm, vol_{\mm})}^{\mu \nu}, $$ follows that  $\mathcal{F}_\infty$ turns to be
\be
\mathcal{M}_\infty \,=\, \prod_{k}{\mathcal{H}^{\frac{3}{2}}(\Sigma,\, vol_\Sigma)}_k \times \prod_{k}{\mathcal{H}^{\frac{1}{2}}(\mathrm{div}0;\, \Sigma, vol_\Sigma) }^{k},
\ee
where $\mathcal{H}^{\frac{1}{2}}(\mathrm{div}0; \Sigma, vol_\Sigma)$ denotes the space of divergenceless $\mathcal{H}^{\frac{1}{2}}$-functions on $\Sigma$ (notice that the operator $\mathrm{div}$ denotes  the adjoint to the operator   $\mathrm{grad}\,:\,\mathcal{H}^{\frac{3}{2}}(\Sigma, vol_\Sigma)\,\to\,\mathcal{H}^{\frac{1}{2}}(\Sigma, vol_\Sigma)$). 
%%%
The latter space turns out to be a Hilbert space, since one proves that it is  a closed subspace of the space of square integrable functions. 
%%%
\begin{proposition}
The gradient operator $\mathrm{grad}\,:\,\mathcal{H}^{\frac{3}{2}}(\Sigma, vol_\Sigma)\,\to\,\mathcal{H}^{\frac{1}{2}}(\Sigma, vol_\Sigma)$ is closed.
\begin{proof}
Consider a sequence $\{f_n\}_{n \in \mathbb{N}}$ with $f_n\in\mathcal{H}^{\frac{3}{2}}(\Sigma, vol_\Sigma)$ such  that  
$$
\lim_{n\to\infty}\,f_n\,=\,f
$$
with $f\,\in\, \mathcal{H}^{\frac{3}{2}}(\Sigma, vol_\Sigma)$.
%%%
The following inequalities
\be
\begin{split}
\sum_j \Vert \de_j(f_n - f) \Vert_{\mathcal{H}^{\frac{1}{2}}} \,&=\, \sum_{j, k} \int_{\mathbb{R}^3} \dd^3k\, \vert \tilde{f}_n - \tilde{f} \vert^2 k_j \, k_k \, \vert k \vert \,  \\ &\leq \, 9 \int_{\mathbb{R}^3} \dd^3k\, \vert \tilde{f}_n - \tilde{f} \vert^2 \, \vert k \vert^3  \, =\, 9 \Vert f_n - f \Vert_{\mathcal{H}^{\frac{3}{2}}} \end{split}\,
\ee
($\tilde{f}$ denoting the Fourier transform of $f$) hold, thus proving that the operator  $\mathrm{grad}$ maps closed sets into closed sets.
%%%
\end{proof}
\end{proposition}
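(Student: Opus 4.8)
The plan is to deduce closedness from boundedness. Since the difference of the two Sobolev exponents, $\tfrac{3}{2}-\tfrac{1}{2}=1$, coincides exactly with the order of the differential operator $\mathrm{grad}$, I expect $\mathrm{grad}$ to be a \emph{bounded} linear operator $\mathcal{H}^{\frac{3}{2}}(\Sigma, vol_\Sigma)\to\mathcal{H}^{\frac{1}{2}}(\Sigma, vol_\Sigma)$; and any bounded operator defined on the whole space is automatically closed. Accordingly, the first and main step is to establish the norm estimate $\|\mathrm{grad}\, f\|_{\mathcal{H}^{1/2}} \le C\,\|f\|_{\mathcal{H}^{3/2}}$, after which the closedness is immediate.

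To prove this estimate I would pass to the Fourier transform. On $\Sigma\cong\mathbb{R}^3$ the Sobolev norm admits, by Plancherel's theorem, the representation $\|g\|_{\mathcal{H}^s}^2 = \int_{\mathbb{R}^3} |\tilde{g}(k)|^2\, |k|^{2s}\, \dd^3 k$, where $\tilde{g}$ is the Fourier transform of $g$, and differentiation becomes multiplication, $\widetilde{\partial_j f}(k) = i k_j\, \tilde{f}(k)$. The computation then reads
\[
\sum_j \|\partial_j f\|_{\mathcal{H}^{1/2}}^2 = \int_{\mathbb{R}^3} \Big(\sum_j k_j^2\Big)\, |\tilde f(k)|^2\, |k|\, \dd^3 k = \int_{\mathbb{R}^3} |k|^3\, |\tilde f(k)|^2\, \dd^3 k = \|f\|_{\mathcal{H}^{3/2}}^2,
\]
using $\sum_j k_j^2 = |k|^2$. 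This exhibits $\mathrm{grad}$ as an isometry (up to the norm convention), hence certainly bounded.

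With the estimate in hand I would conclude as follows: if $f_n \to f$ in $\mathcal{H}^{3/2}(\Sigma, vol_\Sigma)$, then by boundedness $\mathrm{grad}\, f_n \to \mathrm{grad}\, f$ in $\mathcal{H}^{1/2}(\Sigma, vol_\Sigma)$, so the only possible limit of the image sequence is $\mathrm{grad}\, f$; the graph of $\mathrm{grad}$ therefore contains all its limit points and is closed, which is precisely the assertion. This is exactly the content of the displayed inequalities in the statement, read as a continuity statement.

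I do not expect a genuine obstacle here; the only subtlety — which I would flag rather than belabor — is the precise Sobolev norm convention. With the homogeneous weight $|k|^{2s}$ the computation above is an exact isometry, whereas with the inhomogeneous weight $(1+|k|^2)^{s}$ one instead bounds $\sum_j k_j^2\,(1+|k|^2)^{1/2} \le (1+|k|^2)^{3/2}$, yielding the inequality $\|\mathrm{grad}\, f\|_{\mathcal{H}^{1/2}} \le \|f\|_{\mathcal{H}^{3/2}}$ rather than equality; either way the conclusion is identical. This is also the origin of the harmless numerical constant in the statement's estimate, which comes from crudely majorizing the nine products $k_j k_k$ by $|k|^2$ instead of contracting the diagonal sum $\sum_j k_j^2$ exactly. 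I would streamline the argument to the exact identity displayed above.
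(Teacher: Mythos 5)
Your proposal is correct and follows essentially the same route as the paper: both pass to the Fourier representation of the Sobolev norms, establish the boundedness estimate $\|\mathrm{grad}\,f\|_{\mathcal{H}^{1/2}}\le C\,\|f\|_{\mathcal{H}^{3/2}}$, and deduce closedness from continuity. Your observation that the diagonal contraction $\sum_j k_j^2=|k|^2$ yields an exact isometry, whereas the paper's crude majorization of the nine products $k_j k_k$ produces the constant $9$, is a minor tightening of the same argument rather than a different proof.
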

%%%
\noindent The  set $\mathcal{H}^{\frac{1}{2}}(\mathrm{div}0, \Sigma, vol_\Sigma)$ (which is the kernel of the $\mathrm{div}$ operator) is therefore  -- recall  the closed range theorem --  a closed subspace in $\mathcal{H}^{\frac{1}{2}}(\Sigma, vol_\Sigma)$, whose complement is the range of the action of the $\mathrm{grad}$ operator into  $\mathcal{H}^{\frac{1}{2}}(\Sigma, vol_\Sigma)$, that is the following decomposition 
\be
\prod_k {\mathcal{H}^{\frac{1}{2}}(\Sigma, vol_\Sigma)}^k \,=\, \prod_k {\mathcal{H}^{\frac{1}{2}}(\mathrm{div}0, \Sigma, vol_\Sigma)}^k \oplus \mathrm{grad}\mathcal{H}^{\frac{3}{2}}(\Sigma, vol_\Sigma) \,.
\ee
into closed subspaces -- and, thus, \textit{Hilbert} spaces -- exists. 
Notice that a similar proof shows that the following decomposition
\be
\mathcal{H}^{\frac{3}{2}}(\Sigma,\,vol_\Sigma) \,=\, \mathcal{H}^{\frac{3}{2}}(\mathrm{div}0,\, \Sigma,\,vol_\Sigma) \oplus \mathrm{grad}\mathcal{H}^{\frac{5}{2}}(\Sigma,\, vol_\Sigma) \,
\ee
holds, where the first term is the space of $\mathcal{H}^{\frac{3}{2}}$ functions with zero divergence and the second term is the range under the action of  the gradient operator upon  the set of $\mathcal{H}^{\frac{5}{2}}$ functions.
%%%
According to this splitting, the space  $\mathcal{F}_\infty$ can be written as 
\be
\mathcal{F}_\infty \,=\, \prod_k{\mathcal{H}^{\frac{3}{2}}(\mathrm{div}0,\, \Sigma,\, vol_\Sigma)}_k \times \mathrm{grad}\mathcal{H}^{\frac{5}{2}}(\Sigma,\, vol_\Sigma) \times \prod_k {\mathcal{H}^{\frac{1}{2}}(\mathrm{div}0,\, \Sigma,\, vol_\Sigma)}^k \,,
\ee
where all the summands are closed (and, thus, Hilbert) subspaces of $\mathcal{H}^{\frac{3}{2}}(\Sigma, vol_\Sigma)$ and $\mathcal{H}^{\frac{1}{2}}(\Sigma, vol_\Sigma)$ respectively.
%%%
Upon denoting as $m_\infty \,=\, (\tilde{a}_k,\, \de_k \phi,\, p^k)$ an element in $\mathcal{F}_\infty$, one sees that  the tangent space to $\mathcal{M}_\infty$ decomposes as
\be
\begin{split}
\mathbf{T}_{m_\infty}\mathcal{F}_\infty \,&=\, \mathbf{T}_{\tilde{a}} \prod_k{\mathcal{H}^{\frac{3}{2}}(\mathrm{div}0,\, \Sigma, vol_\Sigma)}_k \oplus \mathbf{T}_{\de \phi} \mathrm{grad}\mathcal{H}^{\frac{5}{2}}(\Sigma,\, vol_\Sigma) \oplus \mathbf{T}_{p} \prod_k {\mathcal{H}^{\frac{1}{2}}(\mathrm{div}0,\, \Sigma,\, vol_\Sigma)}^k  \\
\,&=\,  \prod_k {\mathcal{H}^{\frac{3}{2}}(\mathrm{div}0,\, \Sigma,\, vol_\Sigma)}^k \oplus \mathrm{grad}\mathcal{H}^{\frac{5}{2}}(\Sigma,\, vol_\Sigma) \oplus \prod_k{\mathcal{H}^{\frac{1}{2}}(\mathrm{div}0,\, \Sigma,\, vol_\Sigma)}^k \,, 
\end{split}
\ee
since all the summand spaces  are Hilbert spaces and then isomorphic to their tangent spaces.
%%%
Such a  decomposition of $\mathcal{F}_\infty$ and of its tangent space $\mathbf{T}_{m_\infty}\mathcal{F}$ is particularly suitable for our purposes because for the  second of its terms one has 
$$
\mathbf{T}_{\de \phi}\mathrm{grad}\mathcal{H}^{\frac{5}{2}} \,=\, \mathrm{ker}\,{\os_\infty}_{m_\infty}
$$
that is it gives the generators of the so called  gauge transformations. Moreover, notice that writing down such a decomposition amounts to fix a complementary space in  $\mathbf{T}_{m_\infty}\mathcal{F}_\infty$ to  $\mathrm{ker}\, \os_\infty$. 
%%%
In particular, the previous splitting amounts to fix a particular connection -- called the \textit{Coulomb connection} --  on the bundle $\mathcal{F}_\infty \to \mathcal{F}_\infty / \mathrm{ker}\, \os_\infty$ (see  \cite{Carinena-Ibort1985-Canonical_ghosts}). This connection is described by the idempotent $\mathcal{P} \;\; : \;\; \mathfrak{X}(\mathcal{F}_\infty) \to \mathfrak{X}^H(\mathcal{F}_\infty)$ whose action is 
\be 
\mathcal{P}(\mathbb{X}) \,=\, \tilde{\mathbb{X}} \,,
\ee
with  $\tilde{\mathbb{X}}$ having components
\be
\begin{split}
\tilde{\mathbb{X}}_{\tilde{a}_k} \,&=\, 0 \,, \\ 
\tilde{\mathbb{X}}_{{\de \phi}_k} \,&=\, \partial_k \int_{\Sigma} vol_\Sigma^{\underline{y}}\,\, G_{\Delta}(\underline{y}, \underline{x}) \delta^{jl}\partial_l {\mathbb{X}_{\de \phi}}_j(\underline{y})  \,, \\
\tilde{\mathbb{X}}_{p^k} \,&=\, 0 \,,
\end{split}
\ee
in terms of the Green's function $G_{\Delta}(\underline{y}, \underline{x})$ of the Laplacian on $\Sigma$. 
%%%
This connection is the one we adopt in order  to compute Hamiltonian vector fields in this pre-symplectic case, following the general theory presented in the previous sections.
%%%
	
\noindent Consider an element  $f\,\in\,\mathscr{F}(\mathcal{F}_\infty)$.
%%%
Its Hamiltonian vector field with respect to $\Omega_\infty^\Sigma$ is the one satisfying the conditions
\be
\begin{cases}
\Omega^\Sigma_\infty (\mathbb{X}_f,\, \cdot \,) \,=\, \dd f (\,\cdot \,) \\
\mathcal{P}(\mathbb{X}_f)\,=\, 0. 
\end{cases}
\ee
%%%
From the first condition we see that the vector field $\mathbb{X}_f$ has the components
\be
\begin{split} ({\mathbb{X}_f})_{\tilde{a}_k} + (\mathbb{X}_f)_{\de \phi_k} \,&=\, \frac{\delta f}{\delta p^k} , \\  
(\mathbb{X}_f)_{p^k} \,&=\, - \frac{\delta f}{\delta a_k} \,; \end{split} 
\ee
%%%
it is also  readily proven that the second equation  fixes $(\mathbb{X}_f)_{\de \phi_k}$ to vanish.
%%%
This fixes the bracket between  $g,f\,\in\,\mathscr{F}(\mathcal{F}_\infty)$ to be
\be
\{\, g, \, f \,\} \,=\, \os_\infty(\mathbb{X}_g,\, \mathbb{X}_f) \,=\, \int_\Sigma vol_\Sigma \left(\,\frac{\delta f}{\delta p^k} \frac{\delta g}{\delta \tilde{a}_k} - \frac{\delta f}{\delta \tilde{a}_k} \frac{\delta g}{\delta p^k} \,\right)  \,. 
\ee
%%%
	
\noindent Let us consider two elements $G,F \,\in\,\mathscr{F}(\elag_{\mathbb{M}^{1,3}})$. 
Following the procedure described in this section, the Hamiltonian vector field $\mathbb{X}_F$ associated to $F$ with respect to the canonical structure $\Pi_\Sigma^\star \Omega^\Sigma$ comes upon linearizing the flow generated by the dynamics   $(\Omega^\Sigma, \mathcal{H})$ (with the gauge condition given by the chosen connection) with Cauchy datum given by the Hamiltonian vector field $\mathbb{X}_f$ associated to  $f \,=\, (\varpi^{-1} \circ \Psi)^\star F$ with respect to $\Omega^\Sigma_\infty$ and the connection $\mathcal{P}$.
%%%
Then the bracket between $G$ and $F$ is given by the action of the vector field $\mathbb{X}_F$ upon  $G$.
%%%
\noindent As a specific example, we consider the following two functions
\be
\begin{split}
F_{k_1} \,=\, A_{k_1}(x^0_1,\, \underline{x}_1) \,=\, \int_{\mathbb{M}^{1,3}} vol_{\mathbb{M}^{1,3}}\,  A_{k_1}(x^0,\, \underline{x}) \delta(x^0-x^0_1) \delta(\underline{x} - \underline{x}_1)  \,, \\
G_{k_2} \,=\, A_{k_2}(x^0_2,\, \underline{x}_2) \,=\, \int_{\mathbb{M}^{1,3}} vol_{\mathbb{M}^{1,3}}\,  A_{k_2}(x^0,\, \underline{x}) \delta(x^0-x^0_2) \delta(\underline{x} - \underline{x}_2)  \,,
\end{split}
\ee
namely those providing the value of (the $k_j$-component of) the configuration field, solution of the equations of the motion, at the point $\psi_{U_{\mathbb{M}^{1,3}}}^{-1}(x^0_j, \underline{x}_j)$.
%%%
In order to follow the path outlined above, the first step is to have $f=(\varpi^{-1} \circ \Psi)^\star F$  on the space of Cauchy data. %via the diffeomorphism $\varpi^{-1} \circ \Psi$.
%%%
Such a space of the  Cauchy data is a single  leaf of the foliation induced by $\mathcal{P}$ on the manifold $\mathcal{F}_\infty$ given at the end of the pre-symplectic constraint algorithm, since any element in $\mathcal{F}_\infty$ belongs to one single leaf $\sigma_{\mathcal{P}}$ of $\mathcal{P}$ and for any point in $\sigma_{\mathcal{P}}$ there exists a unique solution to the relation
\be
i_{\mathbb{\Gamma}}\Omega^\Sigma_\infty \,=\, \dd \mathcal{H}   \,,
\ee
satisfying the
\be
\mathcal{P}(\mathbb{\Gamma})=0 \,
\ee
condition, which ensures the flow of $\mathbb{\Gamma}$ to entirely lie on the same leaf $\sigma_{\mathcal{P}}$.
%%%
In order to compute $f$ we need to write $A_{k_1}$ as the image under $\varpi$ of a solution of the equations of the motion for $a_{k_1}$ and explicitly show the dependence on the initial conditions.
%%%
Recalling  \eqref{Eq: equations of the motion Electrodynamics on M1}, the element $a_{k}$ is the solution of (recall also that we do not write the eplicit dependence on the parameter $s$)
\be
\frac{d^2 a_{k}}{ds^2} \,=\, \delta^{jl}\partial_j \partial_l a_{k} - \delta^{jl} \partial_k \partial_j a_{l}\,, 
\ee
which is unique (for any fixed initial condition $a^\Sigma_k(\underline{x})$, $p^k_\Sigma(\underline{x})$) after imposing the condition $	\mathcal{P}(\mathbb{\Gamma})=0$.
%%%
Indeed, the latter condition imposes $\delta^{jk}\partial_j a_k \,=\,0$, that is, the so called \textit{Coulomb's gauge}, since it is a projector onto the subspace $\prod_k{\mathcal{H}^{\frac{3}{2}}(\mathrm{div}0, \Sigma, vol_\Sigma)}_k \times \prod_k{\mathcal{H}^{\frac{1}{2}}(\mathrm{div}0, \Sigma, vol_\Sigma)}^k$. 
%%%
We are thus led to the equation
\be \label{Eq: free wave equation a_k}
\frac{d^2 a_{k}}{ds^2} - \delta^{jl}\partial_j \partial_l a_{k} \,=\,0 \,,
\ee
which is clearly the homogeneous wave equation for $a_{k}$. Since an existence and uniqueness theorem for the solutions of such equation  exists within the considered space of fields, we write 
\be \label{Eq: solution wave equation}
\begin{split}
a_{k, s}(\underline{x}) \,=\, \frac{1}{4 \pi} \biggl[\, &\int_{\Sigma}vol_\Sigma^{\underline{y}}\, \left(\, a^\Sigma_{k}(\underline{y}) + |s - x^0_\Sigma| \delta_{kj}p^j_\Sigma(\underline{y}) \,\right) \tilde{G}_{\underline{x}, \, |s-x^0_\Sigma|}(\underline{y})  \,  \\ 
&+ \int_{\Sigma}vol_\Sigma^{\underline{y}}\, |s - x^0_\Sigma| a^\Sigma_k(\underline{y}) \frac{\partial}{\partial s}\tilde{G}_{\underline{x}, (s- x^0_\Sigma)}(\underline{y}) (\Theta(s-x^0_\Sigma)-\Theta(x^0_\Sigma-s))   \biggr] \,,
\end{split}
\ee
where  $\tilde{G}_{\underline{x}, a}(\underline{y})$ is  the characteristic function of the surface of the sphere centered at  $\underline{x}$ with radius $a$ and $\Theta$ is the Heaviside function. 
%%%
Give such a solution \eqref{Eq: solution wave equation}, the pull-back of $F$ to the space of Cauchy data is readily computed to be
\be
\begin{split}
f \,=\, \frac{1}{4 \pi} \biggl[\, &\int_{\Sigma} vol_\Sigma^{\underline{y}}\,  \left(\, a^\Sigma_{k_1}(\underline{y}) + |x^0_1 - x^0_\Sigma| \delta_{kj}p^j_\Sigma(\underline{y}) \,\right) \tilde{G}_{\underline{x}_1, \, |x^0_1-x^0_\Sigma|}(\underline{y})  \,  \\ 
 & + \int_{\Sigma}vol_\Sigma^{\underline{y}}\, \,|x^0_1 - x^0_\Sigma| a^\Sigma_{k_1}(\underline{y}) \frac{\partial}{\partial s}\tilde{G}_{\underline{x}_1, (s- x^0_\Sigma)}(\underline{y})\bigr|_{s=x^0_1} (\Theta(x^0_1-x^0_\Sigma)-\Theta(x^0_\Sigma-x^0_1))  \biggr] \,.
\end{split} 
\ee
%%%
Taking into account this expression for $f$ and the analogous for $g$, a direct computation gives the bracket between the functions $g$ and $f$ on the space of Cauchy data
\be \label{Eq: bracket cauchy data electrodynamics}
\begin{split}
\{g,\, f\}_{(a, p)} \,&=\, \int_{\Sigma}  \left(\, \frac{\delta f}{\delta p^k}\frac{\delta g}{\delta \tilde{a}_k} - \frac{\delta f}{\delta \tilde{a}_k}\frac{\delta g}{\delta p^k} \,\right)  \\
&=\, \frac{\delta_{k_1 k_2}}{16 \pi^2} \int_{\Sigma} vol_\Sigma^{\underline{y}}\, \biggl\{ \left(|x^0_2 - x^0_\Sigma| - |x^0_1 - x^0_\Sigma| \right) \tilde{G}_{\underline{x}_1, |x^0_1 - x^0_\Sigma|}(\underline{y})\tilde{G}_{\underline{x}_2, |x^0_2 - x^0_\Sigma|}(\underline{y})  \\  
&\qquad + |x^0_1 - x^0_\Sigma||x^0_2 - x^0_\Sigma| \biggl[\, \frac{\partial}{\partial s}\tilde{G}_{\underline{x}_1, (s-x^0_\Sigma)}(\underline{y})\bigr|_{s=x^0_1}\left(\Theta(x^0_1 - x^0_\Sigma) - \Theta(x^0_\Sigma - x^0_1)\right) \tilde{G}_{\underline{x}_2, |x^0_2-x^0_\Sigma|}(\underline{y})  \\ 
&\qquad\qquad - \frac{\partial}{\partial s}\tilde{G}_{\underline{x}_2, (s-x^0_\Sigma)}(\underline{y})\bigr|_{s=x^0_2}\left(\Theta(x^0_2 - x^0_\Sigma) - \Theta(x^0_\Sigma - x^0_2)\right) \tilde{G}_{\underline{x}_1, |x^0_1-x^0_\Sigma|}(\underline{y})  \,\biggr] \,\biggr\} \,.
\end{split}
\ee
%%%
In order to evaluate the action of the bracket upon $F$ and $G$ we need to compute the Hamiltonian vector field $\mathbb{X}_f$ associated to $f$ with respect to $\Omega^\Sigma_\infty$ and $\mathcal{P}$  and then to reconstruct the Hamiltonian vector field $\mathbb{X}_F$ upon  linearizing the flow generated by the  equations of the motions. 
%%%
Since in this case the equation  \eqref{Eq: free wave equation a_k} is linear, it results that $\mathbb{X}_F$ is the vector field with components
\be
\begin{split}
({\mathbb{X}}_F)_{a_k}(\underline{x}, x^0) &\,=\, \frac{\delta_{k_1 k}}{16 \pi^2} \int_\Sigma vol_\Sigma^{\underline{y}}\, \biggl\{\, (|x^0 - x^0_\Sigma| - |x^0_1 - x^0_\Sigma| ) \tilde{G}_{\underline{x}_1 , |x^0_1 - x^0_\Sigma|}(\underline{y}) \tilde{G}_{\underline{x}_1 , |x^0 - x^0_\Sigma|}(\underline{y})  \\ 
& \qquad + |x^0_1 - x^0_\Sigma||x^0 - x^0_\Sigma| \biggl[\, \frac{\partial}{\partial s}\tilde{G}_{\underline{x}_1, (s-x^0_\Sigma)}(\underline{y})\bigr|_{s=x^0_1}\left(\Theta(x^0_1 - x^0_\Sigma) - \Theta(x^0_\Sigma - x^0_1)\right) \tilde{G}_{\underline{x}, |x^0-x^0_\Sigma|}(\underline{y})  \\
&\qquad\qquad - \frac{\partial}{\partial x^0}\tilde{G}_{\underline{x}, (x^0-x^0_\Sigma)}(\underline{y})\left(\Theta(x^0 - x^0_\Sigma) - \Theta(x^0_\Sigma - x^0) \right) \tilde{G}_{\underline{x}_1, |x^0_1-x^0_\Sigma|}(\underline{y}) \biggr] \,\biggr\} \,,
\end{split}
\ee
\be
(\mathbb{X}_F)_{p^k}(\underline{x}, x^0) \,=\, \delta^{kj} \frac{\partial}{\partial x^0} ({\mathbb{X}_F})_{a_j}  \,,
\ee
and
\be
(\mathbb{X}_F)_{\beta^{jk}} \,=\, \delta^{kl}\delta^{jm}(\partial_l (\mathbb{X}_F)_{a_m} - \partial_m (\mathbb{X}_F)_{a_l}) \,.
\ee
%%%
Such analysis gives
\be
\{G,\, F\}_\chi \,=\, \int_{\mathbb{M}^{1, 3}} \dd x^0 vol_\Sigma^{\underline{x}} \, \biggl[(\mathbb{X}_F)_{a_k} \underbrace{\frac{\delta G}{\delta A_k}(\underline{x}, x^0)}_{= \delta(x^0 - x^0_2) \delta(\underline{x}-\underline{x}_2)} + (\mathbb{X}_F)_{p^k} \underbrace{\frac{\delta G}{\delta p^k}}_{=0} \biggr] 
\ee
which is coherent with \eqref{Eq: bracket cauchy data electrodynamics}.

\section*{Conclusions}

\noindent To conclude, we showed how the solution space of a class of first order Hamiltonian field theories can be equipped with a Poisson manifold structure.
%%%
In particular we saw that such a Poisson manifold structure comes from a pre-symplectic structure that naturally emerges from the Schwinger-Weiss variational principle formulated within the multisymplectic formulation of first order Hamiltonian field theories.
%%%
\noindent 
The structure emerging from the variational principle is a 2-form $\Pi_{\Sigma}^\star \Omega^\Sigma$,  where $\Pi_{\Sigma}$ is the restriction of the fields to a suitable hypersurface $\Sigma$ in the domain where the fields are defined.
%%%
The 2-form $\Omega^\Sigma$ on $\fpe^\Sigma$ from which it comes  is related to  a 2-form $\os_\infty$ on the final manifold $\mathcal{F}_\infty$ of the pre-symplectic constraint algorithm, which allows to study constrained theories (using Dirac's words).
%%%
It comes out that $\os_\infty \,=\, \mathfrak{i}_\infty^\star \os$,  where $\mathfrak{i}_\infty$ is the immersion of the final manifold into $\fpe^\Sigma$.
%%%
We proved that, in theories without gauge symmetries, $\os_\infty$ is symplectic and there exists a diffeomorphism $\Psi$  associating to  each Cauchy datum in $\mathcal{F}_\infty$ a unique solution of the equations of motion.
%%%
In particular $\Psi$ is such that $\mathfrak{i}_\infty \circ \Psi^{-1} \,=\, \Pi_\Sigma$.
%%%
This allowed to prove that  $\pssos={\Psi^{-1}}^\star \os_\infty$ on $\elag_\m$. 
Being $\Omega^\Sigma_\infty$ symplectic and $\Psi$ a diffeomorphism, $\Pi_\Sigma^\star \Omega^\Sigma$ turns out to be symplectic as well and determines a Poisson bracket.
%%%

Within abelian gauge theories we slightly modified the construction.
%%%
Indeed, the 2-form $\Omega^\Sigma_\infty$ on the final submanifold $\mathcal{F}_\infty$ is still pre-symplectic. Nonetheless it has been possible to prove  that it is symplectic \textit{along}  the leaves of a flat connection on the bundle $\mathcal{F}_\infty / \mathrm{ker} \, \Omega^\Sigma_\infty$ that is canonically defined for any gauge theory\footnote{It is indeed flat only for those gauge theories which do not exhibit the so-called Gribov anomalies.}, in the sense that it is non-degenerate if contracted only along tangent vectors being horizontal with respect to the connection chosen.
%%%
\noindent From the physical point of view the choice of such a connection amounts to a gauge fixing which, in the case of the connection we chose in electrodynamics, is the Coulomb gauge.
%%%
We saw that the restriction of $\Omega^\Sigma_\infty$ to the  leaves of such a connection is symplectic and then (along with the same construction used above)  so is the 2-form $\Pi_\Sigma^\star \Omega^\Sigma \bigr|_{\sigma_{\mathcal{P}}}$ ($\sigma_{\mathcal{P}}$ representing a leaf of the foliation).
%%%
The case where a flat connection can not be chosen because of intrinsic obstructions of the theory is exemplified by non-Abelian gauge theories.
%%%
We will deal with them in the second part of this series %\cite{Ciaglia-DC-Ibort-Mar-Schiav-Zamp2022-Non_abelian} 
where we will construct a Poisson bracket  within the framework developed in the present paper in terms of a suitable coisotropic embedding.
%%%
Such a construction will show that the emergence of additional degrees of freedom, interpreted as \textit{ghost fields}, seems to be necessary in order for properly defining a Poisson structure.

\appendix

\section{The pre-symplectic constraint algorithm} \label{App: PCA}

\noindent For the sake of completeness, we report in this appendix how the pre-symplectic constraint algorithm used in Sect. \ref{Sec: The symplectic case vs gauge theories} and \ref{Sec: Poisson brackets on the solution space} is defined.
%%%
We refer to \cite{Gotay-Nester-Hinds1978-DiracBergmann_constraints} for a more extensive discussion.
%%%

\noindent Let us consider a pre-symplectic Hamiltonian system, that is a triple $(\mathcal{M},\, \omega,\, \mathcal{H})$ where $\mathcal{M}$ is a Banach manifold, $\omega$ a closed $2$-form on it and $\mathcal{H}$ a real-valued function on $\mathcal{M}$.
%%%
Denoting by $\flat$ the map between $\mathbf{T}\mathcal{M}$ and $\mathbf{T}^\star \mathcal{M}$ induced by $\omega$,
\be
\flat \;\; : \;\; \mathbf{T}\mathcal{M} \to \mathbf{T}^\star \mathcal{M} \;\; : \;\; (m, \, X_m) \mapsto \left(m,\, \omega_m(X_m, \, \cdot \,)\right) \,,
\ee
we say that $\mathcal{M}$ is \textit{strongly symplectic} if $\flat$ is an isomorphism, that it is \textit{weakly symplectic} if $\flat$ is injective but not surjective and that it is \textit{pre-symplectic} if $\flat$ is neither injective nor surjective.
%%% 

\noindent The problem one wants to address is to understand whether and in which sense one is able to find a solution, for a pre-symplectic $\omega$, of the equation
\be \label{Eq: canonical equation}
i_\Gamma \omega = \dd \mathcal{H}
\ee
where $\Gamma$ is a vector field on $\mathcal{M}$. 
%%%
If $\omega$ has a non-trivial kernel $K$, such an equation does not make sense unless $\dd \mathcal{H} \bigr|_K = 0$ or, equivalently, $\dd \mathcal{H}_m \in \mathbf{T}_m \mathcal{M}^\flat$ where $\mathbf{T}_m \mathcal{M}^\flat$ is the image of $\mathbf{T}_m \mathcal{M}$ via $\flat$.
%%%
Defining $$\mathbf{T}_{m}\mathcal{M}^{\perp} \,=\, \left\{\, X_m \in \mathbf{T}_m\mathcal{M} \;\; :\;\; i_{X_m}\omega = 0 \,\right\} \subset \mathbf{T}_{m}\mathcal{M},$$  the elements in  $\mathcal{M}$ for which $\dd \mathcal{H}_m \in \mathbf{T}_m \mathcal{M}^\flat$ give the space\footnote{The set defined by \eqref{Eq: M1} is not necessarily a submanifold embedded into $\mathcal{M}$.
%%%
This relies on assumptions depending on specific cases.} $\mathcal{M}_1$ defined by 
\be \label{Eq: M1}
\mathcal{M}_1 \,=\, \left\{\, m_1 \in \mathcal{M} \;\; : \;\;  i_{X_{m_1}}\dd \mathcal{H}_{m_1} \,=\, 0 \;\; \forall X_{m_1} \in \mathbf{T}_{m_1}\mathcal{M}^{\perp} \,\right\} \,.
\ee
%%%

\begin{remark}
Notice that $\mathbf{T}_{m}\mathcal{M}^{\perp}$ is the kernel of $\omega$ at $m$, i.e. $K_m$. 
%%%
We used such a notation because it is better suited to formulate the first step of the algorithmic procedure we are going to describe in the rest of the section. 
%%%
Indeed, a definition of orthosymplectic complement can be given for any immersed submanifold $\mathfrak{i}\,:\,\mathcal{N}\,\hookrightarrow\, \mathcal{M}$  as
\be \label{Eq: ortosymp}
\mathbf{T}_n\mathcal{N}^\perp \,=\, \left\{\, X_n \in \mathbf{T}_n\mathcal{M}\bigr|_{\mathcal{N}} \;\; :\;\; \mathfrak{i}^\star \left( i_{X_n}\omega \right) = 0 \,\right\} 
\ee
and, thus, the definition of $\mathbf{T}_m\mathcal{M}^\perp$ given above can be thought of as an example of  \eqref{Eq: ortosymp} in the case where the submanifold $\mathcal{N}$ is the manifold $\mathcal{M}$ itself.
%%%
\end{remark}

\noindent Equation \eqref{Eq: canonical equation} makes sense if restricted to $\mathcal{M}_1$, but nothing is said about the solution  $\Gamma$ being tangent to $\mathcal{M}_1$. Notice that this is exactly what is natural to require, also from the physical point of view, since it amounts to have an evolution which is compatible with the given constraints. 
%%%
%However, this is exactly what we want to happen, because, from the physical point of view, the fact that we are restricting to the points of the submanifold $\mathcal{M}_1$ means that we are constraining our "physical variables" via some constraint relations that we want to be preserved along the dynamics, that is, along the flow of $\Gamma$.
%%%
So we impose $\Gamma$ to be a vector field on $\mathcal{M}$ being $\mathfrak{i}_1$-related to a $\Gamma_1 \in \mathfrak{X}(\mathcal{M}_1)$, where $\mathfrak{i}_1\,:\,\mathcal{M}_1\,\hookrightarrow\,\mathcal{M}$ is the immersion.
%%%
By imposing this, we are further restricting $\mathcal{M}_1$ to a submanifold\footnote{That this is a submanifold relies on specific assumptions.} that we denote by $\mathcal{M}_2$ and that can be characterized in the following way.
%%%
Assume that an $\mathcal{M}_2$ exists for which the solution $\Gamma$ of 
\be \label{Eq: canonical equation m2}
\left(\,i_\Gamma \omega - \dd \mathcal{H} \,\right)_{\mathcal{M}_2} \,=\, 0
\ee
exists and is $\mathfrak{i}_2$-related\footnote{Here $\mathfrak{i}_2$ is the immersion map of $\mathcal{M}_2$ into $\mathcal{M}$.} to a vector field $\Gamma_2 \in \mathfrak{X}(\mathcal{M}_2)$.
%%%
Then, a straightforward computation shows that if $X_m \in \mathbf{T}_m \mathcal{M}$ is such that $\mathfrak{i}_2^\star \left( i_{X_m} \omega \right) \,=\, 0$, i.e., if $X_m \in \mathbf{T}_m \mathcal{M}_2^\perp$, then $\mathfrak{i}_2^\star \left( i_{X_m}\dd \mathcal{H} \right) \,=\, 0$. 
%%%
This last condition is equivalent to $i_{X_m}\dd \mathcal{H} = 0 \;\; \forall \, X_m \in \mathbf{T}_m \mathcal{M}_2^\perp$ since, being $X_m \in \mathbf{T}_{m}\mathcal{M}\bigr|_{\mathcal{M}_2}$, $i_{X_m}\dd \mathcal{H}$ is a function defined on $\mathcal{M}_2$ and, thus, its pull-back via $\mathfrak{i}_2$ coincides with the function itself.
%%%
Therefore, $i_{X_m}\dd \mathcal{H} = 0 \;\; \forall \, X_m \in \mathbf{T}_m \mathcal{M}_2^\perp$ is a necessary condition for a solution $\Gamma$ of  \eqref{Eq: canonical equation m2} which is tangent to $\mathcal{M}_2$ to exist.
%%%
By imposing such a necessary condition we restrict to a submanifold\footnote{Again, that this is actually a submanifold is an assumption.} $\mathfrak{i}_3\,:\,\mathcal{M}_3\,\hookrightarrow\,\mathcal{M}_2$, where again we need to verify that the canonical equations
\be
\left( i_\Gamma \omega - \dd \mathcal{H} \right)_{\mathcal{M}_3} \,=\,0 \,,
\ee
are well posed, i.e., that $i_{X_m} \dd \mathcal{H} \,=\, 0 \; \forall \, X_m \in \mathbf{T}_m\mathcal{M}_3^\perp$.
%%%

\noindent Therefore, one has an algorithmic procedure which, at the $n$-th step, defines the submanifold $\mathcal{M}_n$ of the original manifold $\mathcal{M}$ given by:
\be
\mathcal{M}_n \,=\, \left\{\, m \in \mathcal{M}_{n-1} \;\; : \;\; i_{X_m} \dd \mathcal{H} \,=\, 0 \;\; \forall \,\, X_m \in \mathbf{T}_m \mathcal{M}^\perp_{n-1}   \,\right\}
\ee
where 
\be
\mathbf{T}_m \mathcal{M}^\perp_k \,=\, \left\{\, X_m \in \mathbf{T}_m\mathcal{M}\bigr|_{\mathcal{M}_k} \;\; :\;\; \mathfrak{i}_k^\star \left( i_{X_m} \omega \right) \,=\, 0 \,\right\} \,,
\ee
and where one defines  $\mathcal{M}_0 \,=\, \mathcal{M}$ and $\mathbf{T}_m \mathcal{M}_0^\perp \,=\, \mathbf{T}_m\mathcal{M}^\perp \,=\, K_m$.
%%%

\noindent Such an algorithm is called the \textit{pre-symplectic constraint algorithm} and is said to converge (or to stabilize) when there exists an $n$ such that $\mathcal{M}_n \,=\, \mathcal{M}_{n-1} \,=\, \mathcal{M}_\infty$, which is called the final manifold of the algorithm.
%%%
On the final manifold $\mathcal{M}_\infty$, the canonical equation
\be
\left( i_\Gamma \omega - \dd \mathcal{H} \right)_{\mathcal{M}_\infty} \,=\, 0 \,,
\ee
makes sense and has solution $\Gamma$ tangent to $\mathcal{M}_\infty$, that is  $\Gamma$ is $\mathfrak{i}_\infty$-related to a $\Gamma_\infty \in \mathfrak{X}(\mathcal{M}_\infty)$.
%%%
Such a $\Gamma_\infty$ satisfies the equation:
\be
i_{\Gamma_\infty} \omega_\infty \,=\, \dd \mathcal{H}_\infty \,,
\ee
where $\omega_\infty \,=\, \mathfrak{i}_\infty^\star \omega$ and $\mathcal{H}_\infty \,=\, \mathfrak{i}_\infty^\star \mathcal{H}$.
%%%
The solution $\Gamma_\infty$ of the last equation is also unique if $\omega_\infty$ is non-degenerate and, in this case, gives rise to a unique solution of the original problem \eqref{Eq: canonical equation} which is given by the integral curves of $\Gamma_\infty$ immersed into $\mathcal{M}$ via $\mathfrak{i}_\infty$.
%%%

\section*{Acknowledgement}

The authors acknowledge financial support from the
Spanish Ministry of Economy and Competitiveness, through the Severo Ochoa Programme for Centres of Excellence in RD
(SEV-2015/0554), the MINECO research project PID2020-117477GB-I00, and Comunidad de Madrid project QUITEMAD++,
S2018/TCS-A4342. GM would like to thank partial financial support provided by the Santander/UC3M Excellence Chair Program 2019/2020, and he is also a member of the Gruppo Nazionale di Fisica Matematica (INDAM), Italy. FDC thanks the
UC3M, the European Commission through the Marie Sklodowska-Curie COFUND Action (H2020-MSCA-COFUND-2017- GA
801538) for their financial support through the CONEX-Plus Programme. FMC and LS acknowledge that the work has been supported by the Madrid Government (Comunidad de Madrid-Spain) under the Multiannual Agreement with UC3M in the line
of “Research Funds for Beatriz Galindo Fellowships” (C\&QIG-BG-CM-UC3M), and in the context of the V PRICIT (Regional
Programme of Research and Technological Innovation).

\bibliographystyle{alpha}
\bibliography{Biblio}

\newcommand{\etalchar}[1]{$^{#1}$}
\begin{thebibliography}{EEMLRR00}

\bibitem[ACD{\etalchar{+}}17]{Asorey-Ciaglia-DC-Ibort-Marmo2017-Covariant_Jacobi_brackets}
M.~Asorey, F.~M. Ciaglia, F.~{Di Cosmo}, L.~A. Ibort-Latre, and G.~Marmo.
\newblock {Covariant Jacobi brackets for test particles}.
\newblock {\em Modern Physics Letters A}, 32(23), 2017.

\bibitem[ACDIL17]{Asorey-Ciaglia-DC-Ibort2017-Covariant_brackets}
M.~Asorey, F.~M. Ciaglia, F.~{Di Cosmo}, and L.~A. Ibort-Latre.
\newblock {Covariant brackets for particles and fields}.
\newblock {\em Modern Physics Letters A}, 32(19), 2017.

\bibitem[Ada75]{Adams1975}
R.~A. Adams.
\newblock {\em {Sobolev spaces}}.
\newblock Academic Press, New York, 1975.

\bibitem[AMR88]{Abraham-Marsden-Ratiu1980-Manifolds}
R.~Abraham, J.~E. Marsden, and T.~Ratiu.
\newblock {\em {Manifolds, Tensor Analysis, and Applications}}.
\newblock Springer, New York, 1988.

\bibitem[BCL23a]{Castrillon-Berbel2023-Lagrangian_reduction}
M.~Á. Berbel and M.~Castrillón~López.
\newblock Lagrangian reduction by stages in field theory.
\newblock {\em Annales Henri Poincaré}, pages 1--33, 2023.

\bibitem[BCL23b]{Castrillon-Berbel2023-Poisson_reduction}
M.~Á. Berbel and M.~Castrillón~López.
\newblock Poisson–{Poincaré} reduction for field theories.
\newblock {\em Journal of Geometry and Physics}, 191:1--23, 2023.
\newblock Publisher: Elsevier.

\bibitem[BMMRL21]{Montiel-Molgado-Lopez2021-Geometric_field_theory_Gravity}
J.~Berra-Montiel, A.~Molgado, and A.~Rodríguez-López.
\newblock {A review on geometric formulations for classical field theory: the
  Bonzom-Livine model for gravity}.
\newblock {\em Classical and Quantum Gravity}, 2021.

\bibitem[BSF88]{Binz-Sniatycki-Fisher1988-Classical_Fields}
E.~Binz, J.~Sniatycki, and H.~Fischer.
\newblock {\em {Geometry of classical fields}}.
\newblock North-Holland, Amsterdam, 1988.

\bibitem[CCIL91]{Carinena-Crampin-Ibort1991-Multisymplectic}
J.~F. Cari{\~{n}}ena, M.~Crampin, and L.~A. Ibort-Latre.
\newblock {On the multisymplectic formalism for first order field theories}.
\newblock {\em Differential Geometry and its Applications}, 1(4):345--374,
  1991.

\bibitem[CDIL{\etalchar{+}}20a]{Ciaglia-DC-Ibort-Marmo-Schiav2020-Jacobi_Fields}
F.~M. Ciaglia, F.~{Di Cosmo}, L.~A. Ibort-Latre, G.~Marmo, and L.~Schiavone.
\newblock {Covariant Variational Evolution and Jacobi brackets: Fields}.
\newblock {\em Modern Physics Letters A}, 35(23):1 --16, 2020.

\bibitem[CDIL{\etalchar{+}}20b]{Ciaglia-DC-Ibort-Marmo-Schiav2020-Jacobi_Particles}
F.~M. Ciaglia, F.~{Di Cosmo}, L.~A. Ibort-Latre, G.~Marmo, and L.~Schiavone.
\newblock {Covariant variational evolution and Jacobi brackets: Particles}.
\newblock {\em Modern Physics Letters A}, 35(23):1 --17, 2020.

\bibitem[CDIL{\etalchar{+}}22]{Ciaglia-DC-Ibort-Marmo-Schiav-Zamp-2022-Symmetry}
F.~M. Ciaglia, F.~{Di Cosmo}, L.~A. Ibort-Latre, G.~Marmo, L.~Schiavone, and
  A.~Zampini.
\newblock {Symmetries and Covariant Poisson Brackets on Presymplectic
  Manifolds}.
\newblock {\em Symmetry}, 14(70):1 -- 28, 2022.

\bibitem[CIL85]{Carinena-Ibort1985-Canonical_ghosts}
J.~F. Cari{\~{n}}ena and L.~A. Ibort-Latre.
\newblock {Canonical setting of ghosts fields and BRS transformations}.
\newblock {\em Physics Letters B}, 164(1-3):99--101, 1985.

\bibitem[CLRA22]{Castrillon-Rodriguez2022-Gauge_reduction}
M.~Castrillón~López and Á. Rodriguez~Abella.
\newblock Gauge reduction in covariant field theory.
\newblock {\em arXiv:2202.04578}, 2022.

\bibitem[Crn88]{Crnkovic1988-Symplectic_Cov_Phase_Space}
C.~Crnkovic.
\newblock {Symplectic Geometry of the Convariant Phase Space}.
\newblock {\em Classical and Quantum Gravity}, 5(12):1557, 1988.

\bibitem[CW86]{Crnkovic-Witten1986-Covariant_canonical_formalism}
C.~Crnkovic and E.~Witten.
\newblock {Covariant description of canonical formalism in geometrical
  theories}.
\newblock In S.~W. Hawking and W.~Israel, editors, {\em Three hundred years of
  Gravitation}, pages 676--684, 1986.

\bibitem[DeW60]{DeWitt1960-Commutators_Quant_Grav}
B.~S. DeWitt.
\newblock {Invariant commutators for the Quantized Gravitational Field}.
\newblock {\em Physical Review Letters}, 4(6):317--320, 1960.

\bibitem[DeW65]{DeWitt1965-Groups_Fields}
B.~S. DeWitt.
\newblock {\em {Dynamical theory of groups and fields}}.
\newblock Gordon and Breach, New York, 1965.

\bibitem[DeW03]{DeWitt2003-Global_approach_QFT}
B.~S. DeWitt.
\newblock {\em {The Global approach to Quantum Field Theory}}.
\newblock Claredon Press, Oxford, 2003.

\bibitem[DGMS93]{Dubrovin-Giord-Marmo-Sim1993-Poisson_presymplectic}
B.~A. Dubrovin, M.~Giordano, G.~Marmo, and A.~Simoni.
\newblock {Poisson brackets on presymplectic manifolds}.
\newblock {\em International Journal of Modern Physics A}, 8(21):3747--3771,
  1993.

\bibitem[Dir25]{Dirac1925-fundamental_eq_QM}
P.~A.~M. Dirac.
\newblock {The fundamental equations of Quantum Mechanics}.
\newblock {\em Proceedings of the Royal Society of London. Series A.
  Mathematical and Physical Sciences}, 109:642--653, 1925.

\bibitem[Dir64]{Dirac1964-Lectures_QM}
P.~A.~M. Dirac.
\newblock {\em {Lectures on Quantum Mechanics}}.
\newblock Belfer Graduate School of Science, Yeshiva University, New York,
  1964.

\bibitem[DL90]{Lions1990-vol2}
R.~Dautray and J.~L. Lions.
\newblock {\em {Mathematical Analysis and Numerical Methods for Science and
  Technology - vol 2}}.
\newblock Springer-Verlag, Berlin, Heidelberg, 1990.

\bibitem[EEMLRR96]{Eche-Munoz-RomanRoy1996-Geometry_Lagrangian_Field_Theories}
A.~Echeverr{\'{i}}a-Enr{\'{i}}quez, M.~C. Mu{\~{n}}oz-Lecanda, and
  N.~Rom{\'{a}}n-Roy.
\newblock {Geometry of Lagrangian First-order Classical Field Theories}.
\newblock {\em Fortschritte der Physik}, 44(3):235--280, 1996.

\bibitem[EEMLRR00]{Ech-Munoz-RomanRoy2000-Multisymplectic_Hamiltonian}
A.~Echeverria-Enr{\'{i}}quez, M.~C. Mu{\~{n}}oz-Lecanda, and
  N.~Rom{\'{a}}n-Roy.
\newblock {Geometry of multisymplectic Hamiltonian first-order field theories}.
\newblock {\em Journal of Mathematical Physics}, 41(11):7402--7444, 2000.

\bibitem[FF03]{Fatibene-Francaviglia2003-Natural_Gauge_Natural}
L.~Fatibene and M.~Francaviglia.
\newblock {\em {Natural and Gauge Natural Formalism for Classical Field
  Theories}}.
\newblock Kluwer Academic Publishers, Dordrecht, 2003.

\bibitem[FR05]{Forger-Romero2005-Covariant_poisson_brackets}
M.~Forger and S.~V. Romero.
\newblock {Covariant poisson brackets in geometric field theory}.
\newblock {\em Communications in Mathematical Physics}, 256(2):375--410, 2005.

\bibitem[FS15]{Forger-Salles2015-Covariant_Poisson_brackets}
M.~Forger and M.~O. Salles.
\newblock {On covariant poisson brackets in classical field theory}.
\newblock {\em Journal of Mathematical Physics}, 56(10):102901, 2015.

\bibitem[GBMM22]{Balmaz-Marrero-Martinez2022-Affine_bracket}
F.~Gay-Balmaz, J.~C. Marrero, and N.~Martínez.
\newblock A new canonical affine {BRACKET} formulation of {Hamiltonian}
  {Classical} {Field} theories of first order.
\newblock {\em arXiv:2209.08736}, 2022.

\bibitem[GG22]{Grigoriev2022-Intrinsic_inverse_problem}
M.~Grigoriev and V.~Gritzaenko.
\newblock {Presymplectic structures and intrinsic Lagrangians for massive
  fields}.
\newblock {\em Nuclear Physics B}, 975(115686):1--26, 2022.

\bibitem[Gie21]{Gieres2021-Covariant_field_theories}
F.~Gieres.
\newblock {Covariant canonical formulations of classical field theories}.
\newblock {\em arXiv:2109.07330}, 2021.

\bibitem[GMS97]{Giac-Mang-Sard1997-New_Lagrangian_Hamiltonian}
G.~Giachetta, L.~Mangiarotti, and G.~Sardanashvily.
\newblock {\em {New Lagrangian and Hamiltonian Methods in Field Theory}}.
\newblock World Scientific Publishing Co., Signapore, 1997.

\bibitem[GNH78]{Gotay-Nester-Hinds1978-DiracBergmann_constraints}
M.~J. Gotay, J.~M. Nester, and G.~Hinds.
\newblock {Presymplectic manifolds and the Dirac–Bergmann theory of
  constraints}.
\newblock {\em Journal of Mathematical Physics}, 19(11):2388--2399, 1978.

\bibitem[Got91a]{Gotay1991-Multisymplectic_Field-TheoryI_Hamiltonian}
M.~J. Gotay.
\newblock {A Multisymplectic Framework for Classical Field Theory and the
  Calculus of Variations I. Covariant Hamiltonian Formalism}.
\newblock In M.~Francaviglia, editor, {\em Mechanics, Analysis and Geometry:
  200 Years after Lagrange}, pages 203--235, 1991.

\bibitem[Got91b]{Gotay1991-Multisymplectic_Field_TheoryII_spacetime}
M.~J. Gotay.
\newblock {A multisymplectic framework for classical field theory and the
  calculus of variations II: space + time decomposition}.
\newblock {\em Differential Geometry and its Applications}, 1(4):375--390,
  1991.

\bibitem[GPR69]{Garcia-Perez-Rendon1969-Symplectic_Quantized_FieldsI}
P.~L. Garc{\'{i}}a and A.~P{\'{e}}rez-Rend{\'{o}}n.
\newblock {Symplectic approach to the theory of quantized fields. I}.
\newblock {\em Communications in Mathematical Physics}, 13(1):24--44, 1969.

\bibitem[GPR71]{Garcia-Perez-Rendon1969-Symplectic_Quantized_FieldsII}
P.~L. Garc{\'{i}}a and A.~P{\'{e}}rez-Rend{\'{o}}n.
\newblock {Symplectic approach to the theory of quantized fields. II}.
\newblock {\em Archive for Rational Mechanics and Analysis}, 43(2):101--124,
  1971.

\bibitem[GS73]{Gold-Sternb1973-HamiltonCartan_formalism_CV}
H.~Goldschmidt and S.~Sternberg.
\newblock {The Hamilton-Cartan formalism in the calculus of variations}.
\newblock {\em Annales de l'institut Fourier}, 23(1):203--267, 1973.

\bibitem[ILS17]{Ibort-Spivak2017-Covariant_Hamiltonian_YangMills}
L.~A. Ibort-Latre and A.~Spivak.
\newblock {Covariant Hamiltonian field theories on manifolds with boundary:
  Yang-Mills theories}.
\newblock {\em Journal of Geometric Mechanics}, 9(1):47--82, 2017.

\bibitem[Kha14]{Khavkine2014-Covariant_phase_space}
I.~Khavkine.
\newblock {Covariant phase space, constraints, gauge and the Peierls formula}.
\newblock {\em International Journal of Modern Physics A}, 29(5), 2014.

\bibitem[Kij73]{Kijowski1973-Canonical_formalism_CFT}
J.~Kijowski.
\newblock {A finite-dimensional canonical formalism in the Classical Field
  Theory}.
\newblock {\em Communications in Mathematical Physics}, 30(2):99--128, 1973.

\bibitem[KKS10]{Krupka-Krupkova-Saunders2010-Cartan_Form_generalizations}
D.~Krupka, O.~Krupkova, and D.~J. Saunders.
\newblock {The cartan form and its generalizations in the calculus of
  variations}.
\newblock {\em International Journal of Geometric Methods in Modern Physics},
  7(4):631--654, 2010.

\bibitem[KKS12]{Krupka-Krupkova-Saunders2012-CartanLepage_forms}
D.~Krupka, O.~Krupkova, and D.~J. Saunders.
\newblock {Cartan-Lepage forms in geometric mechanics}.
\newblock {\em International Journal of Non-Linear Mechanics},
  47(10):1154--1160, 2012.

\bibitem[Kru97]{Krupkova1997-Geometry_OVE}
O.~Krupkova.
\newblock {\em {The geometry of ordinary variational equations}}.
\newblock Springer-Verlag, Berlin, 1997.

\bibitem[Kru15]{Krupka2015-Variational_Geometry}
D.~Krupka.
\newblock {\em {Introduction to global variational geometry}}.
\newblock Atlantis Press, 2015.

\bibitem[KS76]{Kijowski-Szczyrba1976-canonical_field_theories}
J.~Kijowski and W.~Szczyrba.
\newblock {A canonical structure for classical field theories}.
\newblock {\em Communications in Mathematical Physics}, 46(2):183--206, 1976.

\bibitem[KT79]{Tulczyjew-Kjowksi1979-Symplectic_Framework_Field_Theories}
J.~Kijowski and W.~M. Tulczyjew.
\newblock {\em {A Symplectic Framework for Field Theories}}.
\newblock Lecture Notes in Physics, Berlin, Heidelberg, 1979.

\bibitem[MBV21]{Margalef-Villasenor2021-Covariant_phase_space_boundary}
J.~Margalef-Bentabol and E.~J.~S. Villasenor.
\newblock {Geometric formulation of the covariant phase space methods with
  boundaries}.
\newblock {\em Physical Review D}, 103(2):025011, 2021.

\bibitem[MFL{\etalchar{+}}90]{Morandi-Ferrario-LoVecchio-Marmo-Rubano1990-Inverse_Problem}
G.~Morandi, C.~Ferrario, G.~{Lo Vecchio}, G.~Marmo, and C.~Rubano.
\newblock {The inverse problem in the calculus of variations and the geometry
  of the tangent bundle}.
\newblock {\em Physics Reports}, 188(3-4):147--284, 1990.

\bibitem[MK97]{Michor-Kriegl1997-Convenient_setting}
P.~W. Michor and A.~Kriegl.
\newblock {\em {The Convenient Setting of Global Analysis}}.
\newblock American Mathematical Society, Mathematical Surveys and Monographs,
  1997.

\bibitem[MMMT86]{Marsd-Mont-Morr-Thomp1986-Covariant_Poisson_bracket}
J.~E. Marsden, R.~Montgomery, P.~J. Morrison, and W.~B. Thompson.
\newblock {Covariant poisson brackets for classical fields}.
\newblock {\em Annals of Physics}, 169(1):29--47, 1986.

\bibitem[Pei52]{Peierls1952-Commutation_laws}
R.~E. Peierls.
\newblock {The commutation laws of relativistic field theory}.
\newblock {\em Proceedings of the Royal Society of London. Series A.
  Mathematical and Physical Sciences}, 214(1117):143--157, 1952.

\bibitem[RN76]{Rosen-Nielsen1976-Bohr_papers_corr_princ}
L.~Rosenfeld and J.~R. Nielsen.
\newblock {\em {Niels Bohr collected works, Volume 3, The Correspondence
  Principle (1918 - 1923)}}.
\newblock North-Holland, Amsterdam, 1976.

\bibitem[RR09]{Roman-Roy2009-Multisymplectic_Lagrangian_Hamiltonian}
N.~Rom{\'{a}}n-Roy.
\newblock {Multisymplectic Lagrangian and Hamiltonian formalisms of classical
  field theories}.
\newblock {\em Symmetry, Integrability and Geometry: Methods and Applications},
  5(100):1 -- 25, 2009.

\bibitem[Sau89]{Saunders1989-Jet_Bundles}
D.~J. Saunders.
\newblock {\em {The Geometry of Jet Bundles}}.
\newblock Cambridge University Press, Cambridge, 1989.

\bibitem[Sni70]{Sniatycki1970-Geometric_Field_Theory}
J.~Sniatycki.
\newblock {On the Geometric Structure of Classical Field Theory in Lagrangian
  Formulation}.
\newblock {\em Mathematical Proceedings of the Cambridge Philosophical
  Society}, 68(2):475--484, 1970.

\bibitem[Sou97]{Souriau1997-Symplectic_view_Physics}
J.~M. Souriau.
\newblock {\em {Structure of Dynamical Systems: a Symplectic View of Physics}}.
\newblock Birkhauser, Boston, 1997.

\bibitem[Zuc87]{Zuckermann1987-Action_principles_Global}
G.~J. Zuckerman.
\newblock {Action principles and Global Geometry}.
\newblock In S.~T. Yau, editor, {\em Advanced Series in Mathematical Physics,
  Mathematical Aspects of String Theory}, pages 259--284. World Scientific Pub
  Co., 1987.

\end{thebibliography}

\end{document}